\title{Covering Simple Orthogonal Polygons with Rectangles} 
\author{Aniket Basu Roy}{Aarhus University, Denmark}{aniket@cs.au.dk}{}{}
\authorrunning{A Basu Roy}
\keywords{Polygon Covering, Approximation Algorithms, Orthogonal Polygons, Rectangles, Local Search, Planar Supports} 
\newcommand{\why}{{\color{red} \LARGE\textbf{Why?}\xspace}}
\newcommand{\defproblem}[3]{
  \vspace{3mm}
\noindent\fbox{
  \begin{minipage}{.95\textwidth}
  \begin{tabular*}{\textwidth}{@{\extracolsep{\fill}}lr} #1  \\ \end{tabular*}
  {\bf{Input:}} #2  \\
  {\bf{Question:}} #3
  \end{minipage}
  }
  \vspace{2mm}
  }
\newcommand{\FF}{{\mathcal F}}
\newcommand{\PP}{{\mathcal P}}
\newcommand{\RR}{{\mathcal R}}
\newcommand{\TT}{{\mathcal T}}
\newcommand{\Real}{\ensuremath{\mathbb{R}}\xspace}%
\newcommand{\Natural}{\ensuremath{\mathbb{N}}\xspace}%
\newcommand{\dP}{\partial P}
\newcommand{\hide}[1]{}
\newenvironment{hproof}{%
  \proof}{\endproof}
\begin{document}

\maketitle


\begin{abstract}
  We study the problem of Covering Orthogonal Polygons with Rectangles. For polynomial-time algorithms, the best-known approximation factor is $O(\sqrt{\log n})$ when the input polygon may have holes [Kumar and Ramesh, STOC '99, SICOMP '03], and there is a $2$-factor approximation algorithm known when the polygon is hole-free [Franzblau, SIDMA '89]. Arguably, an easier problem is the \emph{Boundary Cover} problem where we are interested in covering only the boundary of the polygon in contrast to the original problem where we are interested in covering the interior of the polygon, hence it is also referred as the \emph{Interior Cover} problem. For the \emph{Boundary Cover} problem, a $4$-factor approximation algorithm is known to exist and it is \emph{APX}-hard when the polygon has holes [Berman and DasGupta, Algorithmica '94].
  
  In this work, we investigate how effective is local search algorithm for the above covering problems on simple polygons.
  We prove that a simple local search algorithm yields a \emph{PTAS} for the \emph{Boundary Cover} problem when the polygon is simple. Our proof relies on the existence of planar supports on appropriate hypergraphs defined on the \emph{Boundary Cover} problem instance.
  On the other hand, we construct instances where support graphs for the \emph{Interior Cover} problem have arbitrarily large bicliques, thus implying that the same local search technique cannot yield a PTAS for this problem. We also show large locality gap for its dual problem, namely the \emph{Maximum Antirectangle} problem.
\end{abstract}

\newpage
\section{Introduction}

Polygon Covering is a classical problem in Computational Geometry.
In this work we are interested in orthogonal polygons, i.e., all its sides are horizontal or vertical line segments, and we are required to cover the polygon with a minimum number of orthogonal rectangles such that the union of the rectangles is same as the polygon. This problem has a long history and
it has been studied under the guise of various names.
To the best of our knowledge, the earliest reference to this problem is an unpublished work by Masek \cite{masek1979some} which is mentioned by Garey and Johnson in their book \emph{Computers and Intractability}
\cite{GareyJohnson79}.
Here, the problem is referred to as the Rectilinear Picture Compression problem 
and claimed to be $NP$-hard.
Later Culberson and Reckhow \cite{culberson1994covering} proved it to be $NP$-hard for simple polygons, i.e., polygons that do not have holes.

Orthogonal Polygon Covering problem has been studied also from the perspective of polynomial time approximation algorithms where the state-of-the-art approximation ratio is $O(\sqrt{\log n})$ by Kumar and Ramesh \cite{kumar2003covering}. 
Also, the question has been posed for simple polygons. Franzblau gave a $(2+h)$-approximation algorithm which implies a $2$-factor approximation for simple polygons \cite{franzblau1989performance}, where $h$ is the number of holes.
Whereas in 1981, Chaiken, Kleitman, Saks, and Shearer~\cite{chaiken1981covering} gave a polynomial time exact algorithm when the polygon is both vertically and horizontally convex. This restriction was relaxed to just vertically (resp. horizontally) convex polygons to attain the same result by Franzblau and Kleitman \cite{FRANZBLAU1984164}.

Chaiklen et al. \cite{chaiken1981covering} also studied the problem of covering the boundary of the polygon (which they referred as \emph{Edge Cover} problem) and the corners of the polygon with minimum number of rectangles, and studied the respective duality gaps.
Culberson and Reckhow \cite{culberson1994covering} proved that even the boundary cover problem is $NP$-hard for hole-free polygons. Berman and DasGupta \cite{berman1997complexities} showed that the problem is $APX$-hard when we allow holes in the polygon for the interior cover as well as the boundary cover problem.

Although several articles have been published over the years, especially from the perspective of applications \cite{combarro2023constraint,koch2023hybrid}, theoretical improvements have stagnated in the last two decades.

In the current work, we are interested in giving better approximation algorithms for the boundary cover problem for simple orthogonal polygons.
To our knowledge, the best known approximation factor is $4$ by the same algorithm that works for the boundary cover problem for arbitrary polygons \cite{berman1997complexities}.

\begin{center}
  \begin{table}[h]
    \caption{State of the Art for a few variants of the Polygon Covering problem. Our results are highlighted in bold.}
  \begin{tabular}{|l|l|l|}
      \hline
      & Without holes & With holes \\
      \hline
      & & \\
      Corner & $2$-apx \cite{berman1997complexities}, {\bf PTAS} & $NP$-h \cite{berman1997complexities}, $4$-apx \cite{berman1997complexities} \\
      & & \\
      Boundary & $NP$-h \cite{culberson1994covering}, {\bf PTAS} & $NP$-h \cite{conn1987some}, $4$-apx \cite{berman1997complexities}, $APX$-h \cite{berman1997complexities} \\
      & & \\
      Interior & $NP$-h \cite{culberson1994covering}, $2$-apx \cite{franzblau1989performance} & $NP$-h \cite{masek1979some}, $O(\sqrt{\log n})$-apx \cite{kumar2003covering}, $APX$-h \cite{berman1997complexities} \\
      \hline
  \end{tabular}
  \label{tab:soa}
  \end{table}
\end{center}

\subsection{Our Contribution}
Our main contribution is to show the existence of planar support for appropriate hypergraphs defined on the boundary cover problem instance as stated below.


\begin{restatable}[Planar Support for Boundary Cover]{theorem}{BoundaryPlanarSupport}
  \label{thm:main}
  Given a simple orthogonal polygon $P$ and a set of containment-maximal rectangles $\RR$ with respect to $P$, there exists planar support graph for $(\RR,\dP)$.
\end{restatable}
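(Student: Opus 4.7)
The plan is to construct a planar graph $G$ on vertex set $\RR$ whose edges reflect how the rectangles of $\RR$ meet $\dP$, and then verify both planarity and the support property. The key structural observation I need first is a \emph{laminarity lemma}: for every edge $e$ of $\dP$ and any two rectangles $R_1,R_2 \in \RR$ whose intersections with $e$ are non-degenerate intervals, these intervals are either disjoint or nested. Indeed, if $e$ is a bottom edge and the intervals overlap without being nested, with $\mathrm{height}(R_1) \le \mathrm{height}(R_2)$, then the portion of $R_2$ just past the right end of $R_1$ is inside $P$, and this witnesses a valid rightward extension of $R_1$, contradicting containment-maximality; analogous arguments handle top, left, and right edges.

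Given this, I build $G$ in two stages. \emph{Stage (i):} for every edge $e \subseteq \dP$, let $\RR_e$ be the rectangles with non-degenerate footprint on $e$. The laminar family of those footprints determines a rooted forest $F_e$ on $\RR_e$, where a rectangle is made the parent of those whose footprints are its maximal proper sub-footprints. \emph{Stage (ii):} for each reflex corner $c$ of $\dP$, any rectangle whose only intersection with the two edges incident to $c$ is the single point $c$ at a corner of the rectangle must still be attached. Such ``diagonal corner rectangles'' at $c$ all share the point $c$ and can be ordered linearly by containment of their extents; I will arrange them into a path $Q_c$ and attach its endpoints to suitable representatives of $F_{e_h}$ and $F_{e_v}$. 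Define $G := \bigcup_e F_e \cup \bigcup_c Q_c$.

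Checking the support property is then straightforward from laminarity: for $p$ in the relative interior of an edge $e$, the rectangles containing $p$ form a root-to-descendant chain in $F_e$, hence a connected subgraph; a convex corner of $\dP$ is contained in at most one rectangle of $\RR$; at a reflex corner $c$ the rectangles containing $c$ split into chains from $F_{e_h}$ and $F_{e_v}$ together with the diagonal rectangles collected into $Q_c$, all glued into one connected piece by $Q_c$'s bridging edges.

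Planarity is the main obstacle. I plan to place each $v_R$ in the interior of $R$, draw each $F_e$ inside a thin interior-side strip along $e$ using the natural nested-arc drawing of a laminar family, and draw $Q_c$ in a small neighborhood of $c$ joining the strips of $e_h$ and $e_v$. The simple-connectedness of $P$ and the disjointness of strips along distinct edges rule out most crossings. The subtle remaining case is a rectangle $R$ whose sides meet $\dP$ in several arcs on distinct edges (e.g., a wide rectangle whose bottom crosses several reflex-corner notches): $R$ is then a vertex of several $F_e$'s, and the edges of $G$ at $v_R$ must be cyclically ordered to agree with the cyclic order of the contact arcs on $\partial R$, which matches their order along $\dP$. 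I expect this to be dispatched either by careful direct routing inside $R$ and along $\dP$, or by induction on the number of reflex corners of $P$ via an ear-decomposition argument.
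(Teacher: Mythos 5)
Your Stage~(i) ingredient is sound and is in fact the paper's Lemma~\ref{lem:laminar}: rectangles whose (say) bottom sides lie along a common boundary segment have laminar traces, giving a forest support on that segment. However, several of your verification claims are wrong or incomplete. A convex corner of $P$ can lie in \emph{many} rectangles of $\RR$ (in an L-shaped polygon both maximal rectangles contain the outer convex corner at the joint); this particular error is repairable, because every rectangle containing a convex corner has a non-degenerate trace on each incident edge, so the chain argument inside $F_e$ still applies. More seriously, your reflex-corner gluing is insufficient: $\RR$ is an \emph{arbitrary} subfamily of maximal rectangles, so at a reflex corner $c$ it may contain a rectangle reaching $c$ from the $e_h$-side and one reaching $c$ from the $e_v$-side whose intersection is exactly the point $c$, while containing \emph{no} diagonal rectangle at $c$; then $Q_c$ is empty, your graph has no edge between the two chains, and the support property fails at $c$. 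Also, the diagonal rectangles at $c$ are in general \emph{not} linearly ordered by containment (maximality forces a staircase antichain: wider ones are shorter), though ordering them by width is fine.

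The decisive gap is planarity, which you explicitly defer ("I expect this to be dispatched\dots"). That deferred case is the substance of the theorem. A maximal rectangle can have each of its four sides meeting many boundary edges across notches, so it is a vertex of many forests $F_e$ and of several corner gadgets; placing $v_R$ "in a thin strip along $e$" is impossible when the same vertex must live in many strips, and the requirement of a consistent, crossing-free cyclic order of all incident forest edges around every such vertex, simultaneously for all rectangles, is a global constraint you never establish. This is exactly why the paper does not attempt a direct drawing for an arbitrary $\RR$: it first proves Lemma~\ref{thm:complete-rect}, that the \emph{complete} family $\RR^K(P)$ has a planar support, by induction on the horizontal R-tree while maintaining the invariant that vertically blocked rectangles stay on the outer face, and then proves the hereditary Lemma~\ref{lem:hereditary-planar} (via left--right colorings of cotree edges together with the kernel/star analysis of Lemmas~\ref{lem:star-kernel} and~\ref{lem:etoile-sans-centre}) to pass to every subfamily $\RR \subseteq \RR^K(P)$. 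Until you either prove that $\bigcup_e F_e \cup \bigcup_c Q_c$ (with the corner repairs above) is planar, or supply the ear-decomposition induction you allude to, your proposal is a plausible program rather than a proof.
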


In order to prove this theorem, we observe many novel properties of containment-maximal rectangles, which could be of independent interest and useful elsewhere.
As a corollary of this theorem, it immediately follows that local search yields a Polynomial Time Approximation Scheme (PTAS) for the boundary cover problem and the corner cover problem by applying Lemma~\ref{lem:support-implies-ptas}, which is a framework, independently devised by Chan and Har-Peled \cite{Chan2012} and Mustafa and Ray \cite{MR10}. Please refer to Subsection~\ref{sec:local-search} for more on this framework.

\begin{corollary}
  \label{cor:boundary}
  Local Search yields a PTAS for the Boundary Cover problem for simple orthogonal polygons.
\end{corollary}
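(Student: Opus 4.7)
The plan is to derive this corollary by combining Theorem~\ref{thm:main} with the generic local-search framework of Chan and Har-Peled~\cite{Chan2012} and Mustafa and Ray~\cite{MR10} that is already encapsulated in the paper as Lemma~\ref{lem:support-implies-ptas}. First I would set up the Boundary Cover problem as a hypergraph set cover instance: the universe is $\dP$ (or, equivalently, a finite set of boundary witnesses such as midpoints of boundary edges, or the boundary edges themselves), and the candidate covering sets are the rectangles of $\RR$, the set of containment-maximal rectangles of $P$. A short justification that restricting the search to $\RR$ is without loss of generality is needed: any rectangle used in a cover can be replaced by a containment-maximal rectangle that contains it without increasing the number of rectangles and without losing any boundary coverage, so an optimum Boundary Cover can be realized by a subfamily of $\RR$. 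Since $|\RR|$ is polynomial in the complexity of $P$, this is a polynomial-size set cover instance.

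Next I would specify the local search procedure: for a parameter $k = k(\varepsilon)$ chosen as in the framework (typically $k = \Theta(1/\varepsilon^2)$), start from any feasible cover $L \subseteq \RR$ of $\dP$, and iteratively replace any subset of at most $k$ rectangles in $L$ by at most $k-1$ rectangles of $\RR$ as long as the result remains a feasible boundary cover. This process terminates in polynomially many steps because the cover size strictly decreases with each swap.

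For the approximation guarantee I would invoke the framework. Let $L$ be a $k$-locally optimal cover and $O$ an optimal cover, both subfamilies of $\RR$. By Theorem~\ref{thm:main}, the hypergraph $(\RR,\dP)$ admits a planar support graph, and restricting this support to the vertex set $L \cup O$ still yields a planar graph that is a support for the subhypergraph induced on $L \cup O$ (every boundary witness is covered by some rectangle in $L$ and by some rectangle in $O$, and the induced subgraph contains a connecting edge via the planar support). Feeding $L$ and $O$ together with this planar support into Lemma~\ref{lem:support-implies-ptas} directly yields $|L| \le (1+\varepsilon) |O|$, i.e., a PTAS.

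The one step that requires a little care is the reduction to containment-maximal rectangles and the verification that the standard planar-support local-search template applies verbatim to Boundary Cover, where the elements are boundary points rather than points in general position; but this is essentially bookkeeping, since we only need a finite, polynomial-size witness set on $\dP$ (at most one per edge of $P$ suffices, because a rectangle in $\RR$ covers a portion of $\dP$ that is a union of whole boundary subsegments between reflex-vertex projections, so covering one representative per boundary edge is equivalent to covering the whole edge). With that in place, the corollary is an immediate consequence of Theorem~\ref{thm:main} and Lemma~\ref{lem:support-implies-ptas}; the corner cover case follows identically by taking the universe to be the set of convex corners of $P$ instead of $\dP$.
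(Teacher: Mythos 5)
Your overall plan (reduce to maximal rectangles, run $k$-level local search, and feed a planar support into Lemma~\ref{lem:support-implies-ptas}) is the intended route, but one step of your argument is genuinely wrong as written: the claim that restricting a planar support of $(\RR,\dP)$ to the vertex set $L\cup O$ yields a support of the subhypergraph induced on $L\cup O$. A support graph only guarantees that $G[\RR_p]$ is connected; the connecting paths (indeed, possibly all edges witnessing the connection between an $L$-rectangle and an $O$-rectangle containing $p$) may pass through rectangles of $\RR\setminus(L\cup O)$, and the induced subgraph can fail to contain any edge between a covering rectangle of $L$ and one of $O$. The example of Figure~\ref{fig:biclique-v1} is exactly of this kind: the minimal support of $\RR\cup\{R_c\}$ is a star centered at $R_c$, so deleting $R_c$ from that support leaves an edgeless graph, which is not a support of the remaining family. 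This failure of naive restriction is precisely why the paper devotes Sections~\ref{sec:star} and~\ref{sec:hereditary} to proving the hereditary property (Lemma~\ref{lem:hereditary-planar}); supports do not restrict for free.

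The gap is easy to close, because Theorem~\ref{thm:main} is deliberately stated for an \emph{arbitrary} set of containment-maximal rectangles (that generality is exactly what the hereditary lemma buys): after replacing each rectangle of the optimal cover by a maximal rectangle containing it, $L\cup O$ is itself a family of containment-maximal rectangles, so you should apply Theorem~\ref{thm:main} directly to $\RR:=L\cup O$ to obtain the planar support needed by the exchange argument, rather than restricting a support of the complete family. With that substitution, the rest of your write-up (the WLOG reduction to maximal rectangles, polynomial instance size, termination of the local search, and the identical treatment of corner cover with the universe taken to be the convex corners) matches the paper's derivation of the corollary from Theorem~\ref{thm:main} and Lemma~\ref{lem:support-implies-ptas}.
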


\begin{corollary}
  \label{cor:corner}
  Local Search yields a PTAS for the Corner Cover problem for simple orthogonal polygons.
\end{corollary}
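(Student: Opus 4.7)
The plan is to derive Corollary~\ref{cor:corner} as an immediate consequence of Theorem~\ref{thm:main}, in exact parallel with Corollary~\ref{cor:boundary}. Fix a simple orthogonal polygon $P$, let $\RR$ be the set of containment-maximal rectangles inside $P$, and let $C$ denote the (finite) corner set of $P$. As a first step I would check that restricting to $\RR$ loses nothing for Corner Cover: any axis-aligned rectangle $R \subseteq P$ that contains a corner $c \in C$ can be inflated coordinate-by-coordinate until it is blocked in every direction by $\dP$, producing a containment-maximal rectangle $R^* \in \RR$ with $c \in R \subseteq R^*$. Hence some optimal corner cover uses only rectangles from $\RR$.

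Next, I would compare the corner hypergraph $H_C := (\RR, \mathcal{E}_C)$ with $\mathcal{E}_C := \{\{R \in \RR : c \in R\} : c \in C\}$ against the boundary hypergraph $H_{\dP} := (\RR, \mathcal{E}_{\dP})$ that appears in Theorem~\ref{thm:main}. Because every corner lies on $\dP$, each hyperedge of $H_C$ is already a hyperedge of $H_{\dP}$; that is, $\mathcal{E}_C \subseteq \mathcal{E}_{\dP}$, so $H_C$ is a sub-hypergraph of $H_{\dP}$. Consequently the planar graph $G$ on vertex set $\RR$ guaranteed by Theorem~\ref{thm:main} for $H_{\dP}$ is \emph{a fortiori} a planar support for $H_C$: every hyperedge of $H_C$, being a hyperedge of $H_{\dP}$, induces a connected subgraph of $G$, and $G$ itself is unchanged and hence still planar.

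Finally, I would feed the hypergraph $H_C$ together with its planar support $G$ into Lemma~\ref{lem:support-implies-ptas}, the Chan--Har-Peled / Mustafa--Ray local-search framework recalled in Section~\ref{sec:local-search}; this yields a $(1+\varepsilon)$-approximation for Corner Cover in polynomial time via the standard $k$-swap local-search algorithm with $k = k(\varepsilon)$. The entire geometric difficulty of the corollary is absorbed by Theorem~\ref{thm:main}, which is the main obstacle; once that theorem is in hand, the only additional work is the short bookkeeping above, and no new geometric argument specific to the corner variant is required.
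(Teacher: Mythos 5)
Your proposal is correct and matches the paper's intended argument: the paper derives Corollary~\ref{cor:corner} exactly as you do, by noting that corners lie on $\dP$ so the planar support from Theorem~\ref{thm:main} (restricted to maximal rectangles, which suffice) also supports the corner hypergraph, and then invoking Lemma~\ref{lem:support-implies-ptas}. No genuinely different route is taken, and your extra bookkeeping (inflating to maximal rectangles, sub-hypergraph observation) only makes explicit what the paper leaves implicit.
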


On the negative side, we construct interior cover problem instances whose minimal support graphs contain arbitrarily large bicliques thus implying that a PTAS cannot be attained using the above local search framework (Theorem~\ref{thm:interior}).
We also give a negative result for the dual problem of finding maximum sized antirectangle (also known as the maximum hidden set problem) by showing the existence of instances such that the dual intersection graphs have arbitrarily large bicliques, thus implying local search can be arbitrarily smaller than the optimum solution (Corollary~\ref{cor:antirectangle}).
Please refer to Section~\ref{sec:prelims} for the precise definitions of the above problems.

\subsubsection{High-level Idea and Organization}
There are broadly two parts to our proof of the existence of the planar supports for hypergraphs defined on the boundary cover instances. If we think of the existence of planar supports as the desired property then we prove that this property is \emph{hereditary}, i.e., if a family satisfies the property then so does any of its sub-families (Lemma~\ref{lem:hereditary-planar} in Section~\ref{sec:hereditary}).  
We also prove that the \emph{complete} families of containment-maximal rectangles have this property (Lemma~\ref{thm:complete-rect} in Section~\ref{sec:complete}). By complete we mean the set of all possible containment-maximal rectangles with respect to $P$. 

In Section~\ref{sec:hereditary}, we use the notion of Left-Right Coloring to prove the hereditary property. The Left-Right Coloring has been used for testing planarity of graphs in the literature~\cite{de2006tremaux}. The central idea in proving the hereditary property is that if a family has a star graph as its support then the family without the rectangle corresponding to the center of the star, still has a planar support. See Figure~\ref{fig:biclique-v1} for an example and Section~\ref{sec:star} for the technical details.

We prove the planar support property for complete family of rectangles by induction on the number of vertices of the horizontal R-tree. Given an orthogonal polygon $P$, we prove that if the statement is true for a complete family with respect to a \emph{smaller} polygon, then it is also true for the complete family with respect to $P$. The formal statement can be found in Section~\ref{sec:complete} and the full proof can be found in Section~\ref{sec:complete-long} in the appendices.

\begin{figure}[h]
  \centering
  \begin{subfigure}{0.45\textwidth}
    \centering
    \includegraphics[width=0.9\textwidth]{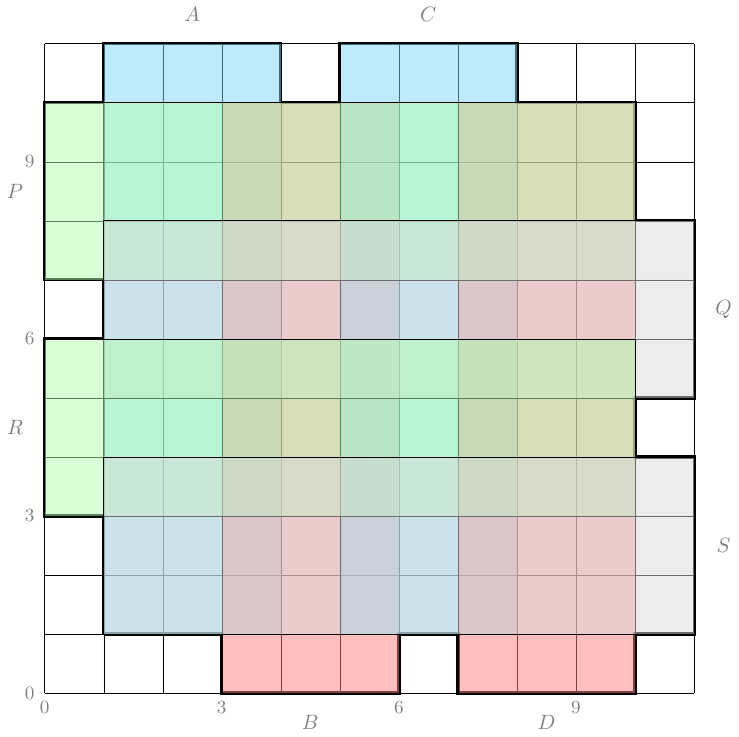}
  \end{subfigure}
  \begin{subfigure}{0.45\textwidth}
    \centering
    \includegraphics[width=1.25\textwidth]{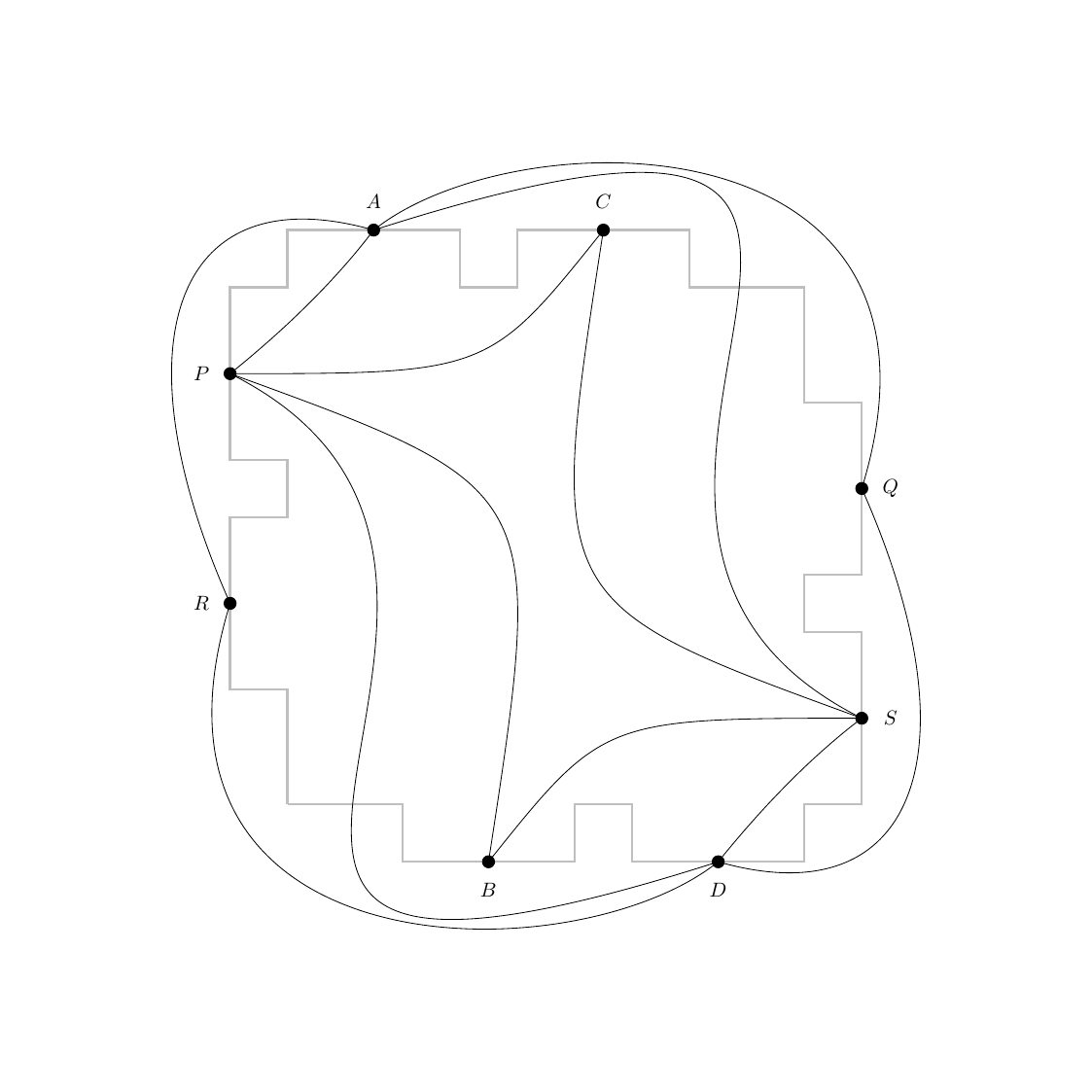}
  \end{subfigure}
  
        \caption{On the left, the boundary of polygon $P$ is drawn with thick black lines on an integer grid. $\RR = \{A,B,C,D,P,Q,R,S\}$ where
          $A = [1,4]\times[1,11]$,
          $B = [3,6]\times[0,10]$,
          $C = [5,8]\times[1,11]$,
          $D = [7,10]\times[0,10]$,       
          $P = [0,10]\times[7,10]$,
          $Q = [1,11]\times[5,8]$,
          $R = [0,10]\times[3,6]$,
          $S = [1,11]\times[1,4]$.
          A support graph for the boundary cover instance $\RR$ with respect to $P$ is drawn on the right.
          Let $R_c = [1,10]\times[1,10]$. The minimal support graph for the boundary cover instance $\RR\cup\{R_c\}$ is a star graph with $R_c$ as the center of the star.
          }
        \label{fig:biclique-v1}
\end{figure}


Apart from this, the list of our results can be found in Section~\ref{sec:main-results}.
We discuss some related work in Subsection~\ref{sec:rel-work}. Section~\ref{sec:prelims} deals with some preliminary definitions and results required to build up the later sections.
In particular, Subsection~\ref{sec:lrcolor} is devoted to Left-Right Coloring.
This section supports the ideas discussed in Section~\ref{sec:hereditary}.
A number of proofs and definitions related to Sections~\ref{sec:star} and~\ref{sec:complete} that does not appear in the main body of the article can be found in the appendices.
Finally, we discuss a few open problems in Section~\ref{sec:open}.

\subsection{Related Work}
\label{sec:rel-work}

\subsubsection{Visibility Graphs}
\label{sec:visi-graph}

Given an orthogonal polygon $P$, define a graph $G = (V,E)$ where $V$ is finite point set in $P$, and $(p,q) \in E(G)$ if there exists an axis-parallel rectangle $R \subseteq P$ such that $p,q \in R$.

Observe that a polygon covering corresponds to a clique cover in $G$ if $V$ is sufficiently dense in $P$. A natural lower bound is the size of an independent set in $G$, which is also called an antirectangle. The minimum clique cover size is denoted by $\theta$ and the maximum antirectangle (independent set) size is denoted by $\alpha$.

As mentioned by Chaiken et al.~\cite{chaiken1981covering},
Chv\'{a}tal asked whether $\theta \stackrel{?}{=} \alpha$, which was disproved by Szemer\'{e}di
by constructing a polygon with $\theta =8$ and $\alpha = 7$. Chung managed to attain the same bound using a simple polygon. Erd\H{o}s asked whether there exists a constant $c > 1$ such that $\theta \le c \cdot \alpha$. The only known upper bound is $\theta = O(\alpha \log \alpha)$ due to Franzblau \cite{franzblau1989performance}.

\subsubsection{Rectangle Rule List Minimization}
Applegate et al.~\cite{applegate2007compressing} studied the
Rectangle Rule List Minimization problem, which is to find the
shortest list of rules needed to create a given pattern. Given a  rectangular grid colored $c_0$, a pair $(R,c)$ is called a \emph{rectangle rule} where $R$ is a rectangle and $c$ is a color from a color palette. A sequence of rectangle rules $\RR$, $(R_1, c_1), \dots, (R_n, c_n)$ is called a Rectangle Rule List, and $P_\RR$ is the colored pattern such a sequence produces. The objective is to find a minimum length sequence $\RR$ that creates a target colored pattern. When the color palette has a single color, it is the Orthogonal Polygon Covering problem.

\subsubsection{Covering Boolean Matrices with Combinatorial Rectangles}

As orthogonal polygons can be drawn in a bounded grid, one can map every grid cell to $1$ if it is contained within the input polygon and $0$ otherwise.
In Communication Complexity, the following problem is of considerable interest.

Given an $n \times n$ Boolean matrix $M = A \times B$, where $A, B \in \{0,1\}^n$, a combinatorial rectangle $R$ is defined as $R = K \times L \subseteq [n] \times [n]$. A $1$-rectangle is a combinatorial rectangle $R$ such that all its entries are $1$. Finding the minimum number of $1$-rectangles such that all the $1$s in $M$ are covered is called the rectangle covering problem.
Note that in this context one is free to permute the rows and columns unlike our case.
 It has the following implication with the non-deterministic complexity of $M$.


\begin{proposition}[Proposition 5.2 in \cite{roughgarden2016communication}]
  Let $f: X \times Y \rightarrow \{0,1\}$ and $M(f)$ be the corresponding Boolean matrix. If there is a cover of $1$-entries of $M(f)$ by $t$ $1$-rectangles, then there is a non-deterministic protocol that verifies whether $f(x,y) = 1$ with cost $\log_2 t$. 
\end{proposition}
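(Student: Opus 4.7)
The plan is to exhibit a nondeterministic protocol of the stated cost directly from the given cover. Fix a cover of the $1$-entries of $M(f)$ by $1$-rectangles $R_1,\dots,R_t$, where each $R_i = K_i \times L_i$ with $K_i \subseteq X$ and $L_i \subseteq Y$, and where ``$1$-rectangle'' means that every entry of $M(f)$ indexed by $R_i$ equals $1$. The cover and the labeling $1,\dots,t$ are part of the protocol and known to both Alice and Bob in advance.

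The protocol I would define proceeds as follows. On input $(x,y)$, the prover nondeterministically guesses an index $i \in \{1,\dots,t\}$ and broadcasts it to both parties; this message has length $\lceil \log_2 t\rceil$ bits, giving the claimed cost. Alice, who holds $x$, accepts her part iff $x \in K_i$; Bob, who holds $y$, accepts iff $y \in L_i$. Both checks are local, since $K_i$ and $L_i$ are known to the respective parties. The overall protocol accepts iff both local checks accept.

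The two things left to verify are the standard completeness and soundness conditions of a nondeterministic protocol, which I would present as a short two-part argument. For completeness, suppose $f(x,y)=1$, so $(x,y)$ is a $1$-entry of $M(f)$; by the cover assumption there exists some $i$ with $(x,y) \in R_i = K_i \times L_i$, and the prover's guess of this $i$ causes both parties to accept. For soundness, suppose $f(x,y)=0$; then $(x,y)$ is a $0$-entry, and since every $R_i$ is a $1$-rectangle, $(x,y) \notin R_i$ for every $i$. Hence for each guessed $i$, either $x \notin K_i$ or $y \notin L_i$, so at least one party rejects and no accepting computation exists. Combining these two checks with the message length bound yields a nondeterministic protocol of cost $\log_2 t$.

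There is essentially no technical obstacle here; the only subtlety worth flagging in the write-up is the convention on ``cost'' (whether one takes $\lceil \log_2 t\rceil$ or $\log_2 t$, and whether the nondeterministic guess is modeled as a single broadcast message or as a shared advice string). I would state the convention explicitly at the start and then the construction above matches it verbatim.
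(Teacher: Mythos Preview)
Your proposal is correct and is the standard argument: have the prover name a rectangle index using $\lceil \log_2 t\rceil$ bits, each party locally checks membership in their side of the rectangle, and completeness/soundness follow from the cover property and the $1$-rectangle property respectively. Note, however, that the paper does not give its own proof of this proposition; it is quoted as background from \cite{roughgarden2016communication}, so there is nothing to compare against beyond observing that your write-up matches the usual textbook proof.
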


This problem has also been referred as the Boolean Basis problem and related to the Polygon Covering problem by Lubiw \cite{lubiw1990boolean}. Yet another way to view the problem is to cover the edges of the bipartite graph, whose adjacency matrix is $M$, with minimum number of bicliques.

\section{Preliminaries}
\label{sec:prelims}

All polygons are assumed to be orthogonal, i.e., their sides are parallel to either the vertical or horizontal axes. A polygon is said to be simple if it has no holes, i.e., it is simply connected. Given a simple orthogonal polygon $P$, we denote $\partial P$ to be the boundary of $P$. All rectangles are assumed to be axis-parallel.
A rectangle $R \subseteq P$ is called containment-maximal with respect to a polygon $P$ if $R$ is not contained in a rectangle $R' \subseteq P$ of larger area. Henceforth, we shall use the word ``maximal'' to denote containment-maximal for the sake of brevity. 
Observe, as a consequence of maximality, all four sides of $R$ have non-empty intersection with $\dP$.
We denote $\RR^K(P)$ to be the set of all maximal rectangles with respect to $P$, also referred as the complete family of rectangles with respect to $P$.

\begin{definition}[Blocker of a Maximal Rectangle]
Given an orthogonal polygon $P$ and a maximal rectangle $R \subseteq P$, any point $p \in \dP$ is called a blocker of $R$ if $p \in R$ and $p$ is not a corner of $R$. If $p$ intersects the top (resp. bottom, left, right) side of $R$ then $p$ is called a top (resp. bottom, left, right) blocker of $R$.
\end{definition}

For the sake of brevity, we shall sometimes call a side $s$ of $P$ to be a blocker of $R$ if $s \cap R \neq \emptyset$.

Given a family of maximal rectangles $\RR$ contained in $P$, 
for every point $p \in \partial P$ we define a hyperedge $\RR_p := \{ R \in \RR \mid p \in R\}$.
We shall work with this hypergraph $(\RR, \{\RR_p\}_{p \in \dP})$ where the vertex set is $\RR$ and the hyperedges are $\RR_p$ for every $p \in \partial P$. 
For the sake of brevity, we shall denote $(\RR, \{\RR_p\}_{p \in \dP})$ as $(\RR, \partial P)$. Similarly, one can define $(\RR, P)$.

\begin{definition}[Support graph of a hypergraph]
	Given a hypergraph $H = (V,\FF)$ where $V$ is the vertex set and $\FF \subseteq 2^V$ is a family of subsets of $V$, we say $G = (V,E)$ is a support graph of $H$ if for every $F \in \FF$, the subgraph induced on F, $G[F]$ is connected.
\end{definition}

See Figure~\ref{fig:biclique-v1} for an example of a family of maximal rectangles $\RR$ with respect to polygon $P$ on the left, and a planar support graph for the hypergraph $(\RR, \dP)$ to its right. Observe, that the intersection graph of the rectangles is always a valid support graph, which may not be planar.
      
      \defproblem
      {\textsc{Interior Cover problem}}
      {An orthogonal polygon $P$ in the plane.}
      {A minimum sized set of rectangles $\RR$ such that $$\bigcup_{R_i \in \RR} R_i = P.$$ }
             
      \defproblem
      {\textsc{Boundary Cover problem}}
      {An orthogonal polygon $P$ in the plane.}
      {A minimum sized set of rectangles $\RR$ such that  $$\partial P \subseteq \bigcup_{\substack{R_i \in \RR \\ R_i \subseteq P}} R_i.$$}

      \defproblem
      {\textsc{Corner Cover problem}}
      {An orthogonal polygon $P$ in the plane.}
      {A minimum sized set of rectangles $\RR$ such that  $$\textsf{corner} P \subseteq \bigcup_{\substack{R_i \in \RR \\ R_i \subseteq P}} R_i$$
        where $\textsf{corner} P$ denotes the set of corners of $P$.}
      
      \defproblem
      {\textsc{Maximum Antirectangle problem}}
      {An orthogonal polygon $P$ in the plane.}
      {A maximum sized set of points $Q \subset P$ such that for every $p,q \in Q$, there does not exist a rectangle $R \subseteq P$ where $p,q \in R$.}




\subsection{Local Search Framework}
\label{sec:local-search}

Local search has been a very useful heuristic for decades, and has been proved to give good approximation ratios, particularly for clustering and geometric optimization algorithm \cite{AGKMP01,abs-0809-2554}. In this work, we will consider the particular local search framework introduced by Mustafa and Ray for covering problems in \cite{MR10}. Almost the same framework works for packing problems, which was introduced around the same time by Chan and Har-Peled in \cite{Chan2012}. Over the years, this framework has been used for a variety of packing and covering problems \cite{aschner2013approximation,ashok2020local,ashur2022terrain,basu2018packing,Har-PeledQ15,krohn2014guarding,raman2020constructing,raman2023hypergraph}. As a consequence of this framework, it is sufficient to show the existence of a sparse support graph for a hypergraph: in order to prove that local search yields a PTAS for the corresponding covering problem. Here by sparse, we mean the graph has strongly sublinear separators, which is a necessary condition for planarity.

The study of planar supports predates the local search framework in question. It has been studied in the context of hypergraph drawing. See \cite{buchin2011planar,johnson1987hypergraph,kaufmann2008subdivision}, and Section 2 in \cite{raman2023hypergraph} for a short survey.
Durocher and Fraser \cite{durocher2015duality} were arguably the first to make the connection and use the word ``support'' in the context of local search for covering problems.

Below, we state an abstraction of the result of Mustafa and Ray \cite{MR10}.
\begin{lemma}[Local Search Lemma \cite{MR10}]
  \label{lem:support-implies-ptas}
  Given a hypergraph $H = (V,\FF)$ where $V$ is the vertex set and $\FF \subseteq 2^V$ is a family of subsets of $V$, we want to compute the minimum sized $V' \subseteq V$ such that for every $F \in \FF$, $V' \cap F \neq \emptyset$. If there exists a planar support of $H$, then the local search algorithm computes a $(1+\epsilon)$-approximate solution in time $n^{O(1/\epsilon^2)}$, for $\epsilon >0$.
\end{lemma}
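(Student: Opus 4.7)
The plan is to analyze the natural $k$-swap local search algorithm with $k = \Theta(1/\epsilon^2)$: starting from the trivial feasible hitting set $V' = V$, repeatedly look for a pair $X \subseteq V'$, $Y \subseteq V\setminus V'$ with $|X|\le k$ and $|Y| < |X|$ such that $(V'\setminus X)\cup Y$ is still feasible; apply such a swap whenever it exists, otherwise output $V'$. Candidate pairs are enumerated in $n^{O(k)}$ time per step, and each successful swap strictly shrinks $|V'|$, so the algorithm halts after at most $|V|$ iterations, giving the claimed total running time $n^{O(1/\epsilon^2)}$.

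Let $L$ be the output and $O$ a minimum hitting set. From the given planar support $G$ of $H = (V,\FF)$ I would first extract a planar structure on $L\cup O$ that preserves the support property: roughly, contract each connected chunk of $G[V\setminus(L\cup O)]$ to a chosen neighbour in $L\cup O$, obtaining a planar minor $G^\star$ of $G$ in which, for every hyperedge $F$, the set $F\cap(L\cup O)$ is connected. Planarity is preserved because minors of planar graphs are planar, and connectivity of $G^\star[F\cap(L\cup O)]$ follows from connectivity of $G[F]$ surviving the contractions. I would then apply the planar separator theorem recursively to $G^\star$ (an $r$-division with $r=k$) to obtain regions $R_1,\dots,R_t$ with designated boundaries $B_j\subseteq R_j$ and interiors $I_j := R_j\setminus B_j$ satisfying $|R_j|\le k$, pairwise-disjoint interiors, $\sum_j|B_j|\le c|L\cup O|/\sqrt{k}$ for an absolute constant $c$, and the property that every edge of $G^\star$ leaving $I_j$ passes through $B_j$.

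For each region consider the swap $X_j := L\cap I_j$, $Y_j := O\cap R_j$. The crucial, and in my view most delicate, step is to show that $L_j := (L\setminus X_j)\cup Y_j$ remains feasible. Suppose for contradiction that some $F\in\FF$ is disjoint from $L_j$. Then $F\cap L\subseteq I_j$ and $F\cap O\cap R_j=\emptyset$, so $F\cap(L\cup O)$ contains a vertex inside $I_j$ (from $L$) and a vertex strictly outside $R_j$ (from $O$). Connectedness of $F\cap(L\cup O)$ in $G^\star$ forces a path between these two vertices, and this path must cross the separator $B_j$, giving $F\cap B_j\ne\emptyset$; but $B_j$ is disjoint from $I_j$, so any such vertex lies in $(L\setminus X_j)\cup(O\cap R_j) = L_j$, contradicting the assumption. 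This is the step I expect to carry the real weight of the argument, because it is where planarity (via the separator) and the support property (via connectedness of $G[F]$) must cooperate.

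Since $|Y_j|\le|R_j|\le k$, local optimality of $L$ yields $|X_j|\le|Y_j|$ for every $j$. Summing gives
\[
|L| \;=\; \sum_j|X_j| \;\le\; \sum_j|Y_j| \;=\; \sum_j|O\cap R_j| \;\le\; |O| + \sum_j|B_j| \;\le\; |O| + \frac{c\,(|L|+|O|)}{\sqrt{k}},
\]
and rearranging with $k=\Theta(1/\epsilon^2)$ produces $|L|\le(1+\epsilon)|O|$, as required.
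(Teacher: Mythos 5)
The delicate step you flagged is indeed where the argument breaks. When you contract a component $C$ of $G[V\setminus(L\cup O)]$ onto a \emph{chosen} neighbour $w\in L\cup O$, the vertex $w$ need not belong to the hyperedge $F$ whose connectivity you are trying to preserve: a path of $G[F]$ that runs $u\,\text{--}\,c_1\text{--}\cdots\text{--}\,c_m\,\text{--}\,x$ with $c_1,\dots,c_m\in C$ becomes, after contraction, the two edges $(u,w)$ and $(w,x)$, so $u$ and $x$ are joined only through $w$, and if $w\notin F$ the induced subgraph $G^\star[F\cap(L\cup O)]$ is disconnected. Different hyperedges meeting the same component $C$ may require different representatives, so no single choice of $w$ per component repairs this, and in fact the statement you are relying on is false in general: take $V=\{v_1,\dots,v_6,x\}$ with hyperedges $\{v_i,x,v_j\}$ for all pairs $i<j$; a star centred at $x$ is a planar support, yet any support for the trace on $\{v_1,\dots,v_6\}$ must contain $K_6$. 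This is exactly the phenomenon illustrated by the paper's Figure~\ref{fig:biclique-v1}, where the minimal support of $\RR\cup\{R_c\}$ is a star centred at $R_c$. Without connectivity of $G^\star[F\cap(L\cup O)]$, the vertex of $B_j$ your separator path crosses need not lie in $F$, so the contradiction in your feasibility argument does not materialize. (Running the $r$-division on $G$ itself instead of on $L\cup O$ does not help either, since the separator bound would then scale with $|V|$ rather than $|L|+|O|$.)

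What the Mustafa--Ray framework actually needs is a planar graph on $L\cup O$ (equivalently, on the symmetric difference) satisfying the locality condition for that particular pair, i.e.\ a planar support for the hypergraph \emph{induced} on $L\cup O$ -- and this does not follow from a planar support of the full hypergraph by deletions or contractions. This is precisely why the paper does not stop at Lemma~\ref{thm:complete-rect} (support for the complete family $\RR^K(P)$) but invests Sections~\ref{sec:star} and~\ref{sec:hereditary} in the hereditary property (Lemma~\ref{lem:hereditary-planar}), so that Theorem~\ref{thm:main} holds for \emph{every} subfamily of maximal rectangles, in particular for the union of a local and an optimal solution. Your outer shell -- the $k$-swap algorithm, the $r$-division of a planar graph on $L\cup O$, the per-region swaps, local optimality, and the summation (modulo the small slip that $\sum_j|X_j|$ equals $|L|$ minus the $L$-vertices lying on region boundaries, which only adds another $\sum_j|B_j|$ term) -- is the standard and correct analysis; the missing ingredient is the hypothesis that the induced hypergraph on $L\cup O$ itself admits a planar support, which must be assumed (as in the paper's application via Theorem~\ref{thm:main}) rather than derived from a support on all of $V$.
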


Lastly, we state the local search algorithm for covering the boundary of the input simple polygon $P$, where $k$ is the local search parameter which is set to $c/\epsilon^2$ for some appropriate constant $c$.

 \begin{algorithm}
    \caption{Local Search Algorithm to cover the boundary of $P$.}
    \label{alg:ls}  
    \begin{algorithmic}[1]
      \Require A simple polygon $P \subset \Real^2$ and $k \in \Natural$.
      \Ensure $\RR = \{R_1, \dots, R_m\}$ such that $\dP \subseteq \bigcup \RR$.
      \State $S \gets \RR^K(P)$\;
      \State $S^c \gets \RR^K(P) \setminus S$\;
      \While {$\exists X \subseteq S, \exists Y \subseteq S^c$ such that $|X| \le k$, $|Y| < |X|$, and $(S\setminus X) \cup Y$ covers $\dP$}
      \State $S \gets (S\setminus X) \cup Y$\;
      \State $S^c \gets \RR^K(P) \setminus S$\;
      \EndWhile
      \State \Return $S$\;
    \end{algorithmic}
    \end{algorithm}


\subsection{Intersection Patterns}

Given a simple orthogonal polygon $P$, if two maximal rectangles $R$ and $R'$ with respect to $P$ have a non-empty intersection then they intersect in one of the following ways:
\begin{enumerate}
\item {\bf Corner Intersection.} $R$ has exactly one corner of $R'$ in its interior and vice versa.
\item {\bf Piercing Intersection.} The two vertical (resp. horizontal) sides of $R$ intersect the two horizontal (resp. vertical) sides of $R'$.
\end{enumerate}

Furthermore, if $R$ and $R'$ have a piercing intersection and one of the sides of $R$ is contained in one of the sides of $R'$ then we say that the two rectangles are \emph{aligned} to each other. See Figure~\ref{fig:rect-intersect}.

 \begin{figure}[h]
        \centering
        \includegraphics[width=0.5\textwidth]{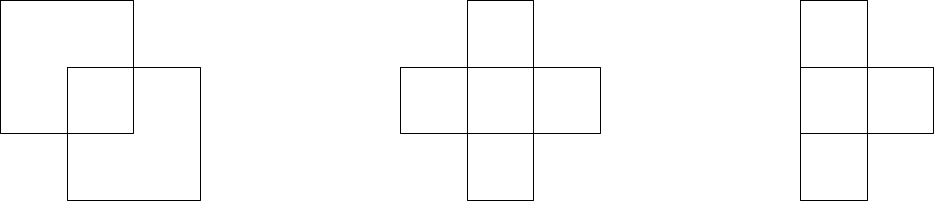}
        \caption{The left pair has a corner intersection. The middle and the right have a piercing intersection; in which the right pair are also aligned to each other.}
        \label{fig:rect-intersect}
    \end{figure}

\begin{definition}
  \label{defn:pierce-poset}
  Given a simple orthogonal polygon $P$ and a set of maximal rectangles $\RR$ with respect to $P$, define a partial order $(\RR,\prec)$. If for every $R ,R' \in \RR$ such that $R$ and $R'$ have a piercing intersection and the two vertical sides of $R$ intersect the two horizontal sides of $R'$ then we say $R \prec R'$.
\end{definition}

More informally, we say $R \prec R'$ if they have a piercing intersection and the width of $R$ is less than that of $R'$.

Lastly, we would like to emphasize that we shall deal with intersections at the boundary of the polygon, i.e., $R\cap R' \cap \dP \neq \emptyset$ which may not be true even when $R \cap R' \neq \emptyset$.

\subsection{Left-Right Coloring}
\label{sec:lrcolor}

In this section we recall a few definitions and results from a series of work by de Fraysseix, Ossona de Mendez, and Rosenstiehl \cite{de2006tremaux}.

\begin{definition}[DFS-orientation]
Given an undirected graph $G = (V,E)$, its DFS-orientation is a directed graph $G' = (V, T \uplus B)$ where $T$ is the set of the tree edges obtained by depth-first traversal of $G$ starting from a fixed vertex in $V$, and $B = E\setminus T$ is the set of cotree edges.
\end{definition}

Thus, $T$ defines a partial order on $V$ and the root of $T$ is the minimum vertex.
Given a tree edge $e = (u, v)$, return edges of $e$ are the cotree edges $e' = (v', u')$ such that $v'$ is a descendant of $v$ or $v = v'$, and $u'$ is an ancestor of $u$ that is different from $u$.
Define a mapping $low: E \rightarrow V$, where for every $e \in E$, $low(e)$ is the minimum vertex that can be reached by a directed path starting with $e$ and containing at most one cotree edge.

\begin{definition}[Left-Right Coloring]
Let $G' = (V, T \uplus B)$ be a DFS-oriented graph. A bipartition of its co-tree edges $B = L \uplus R$ into two classes, referred to as left and right, is called left-right coloring,
  if for every $v \in V$ and outgoing edges $e_i$ and $e_j$ from $v$, the following two conditions are satisfied.
  \begin{itemize}
  \item
    All return edges of $e_i$ ending strictly higher than $low(e_j)$ belong to one class, and
  \item
    All return edges of $e_j$ ending strictly higher than $low(e_i)$ belong to the other class.
  \end{itemize}
\end{definition}

\begin{lemma}{\cite{de2006tremaux}}
  \label{lem:left-right}
  A finite graph $G$ is planar if and only if there exists a Left-Right coloring of the cotree edges with respect to a depth-first-tree $T$ on $G$.
\end{lemma}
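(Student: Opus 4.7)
The plan is to establish both directions of this planarity characterization, with the forward direction serving as the warm-up and the backward direction doing most of the work. For the forward implication, I would begin with a planar embedding of $G$ together with the chosen DFS traversal from a fixed root, and, without loss of generality, realize the DFS tree as a vertical spine with tree edges directed downward. Each cotree edge $(v', u')$ then lies entirely to the left or to the right of the unique tree path from $u'$ to $v'$ in the embedding; color it accordingly. To verify the two LR conditions at a vertex $v$ with two outgoing tree edges $e_i$ and $e_j$, one observes that any return edge of $e_i$ ending strictly higher than $low(e_j)$ must, in the embedding, pass over the entire subtree hanging off $e_j$. Planarity then forces all such return edges to lie on the same side of that subtree, and hence to receive the same color; the symmetric argument handles return edges of $e_j$.

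For the backward implication I would prefer an embedding-construction argument by induction on the DFS tree. The inductive statement I would carry is that each subtree, together with all cotree edges whose endpoints are both inside it, admits a planar drawing in which the cotree edges heading out of the subtree are partitioned into a left stack and a right stack matching the given coloring. At the inductive step I would combine the embeddings of the subtrees rooted at the children $w_1, \dots, w_r$ of a vertex $v$ by choosing a linear order on the children that respects the LR assignment of the outgoing return edges. The LR conditions, applied at $v$ and its ancestors, translate into a constraint system, in the spirit of a 2-SAT instance, on the relative positions of the $w_i$ and on the sides on which their outgoing return edges are placed; I would argue that this system is satisfiable precisely because the given coloring is consistent.

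The main obstacle is showing that the local LR conditions, which only look at pairs of outgoing tree edges at a single vertex, really do assemble into a globally consistent planar embedding: the subtle point is that a return edge exiting a deep subtree participates simultaneously in LR constraints at every ancestor it crosses, and one has to argue that these constraints never conflict. A cleaner alternative route is via Kuratowski's theorem: prove that no subdivision of $K_5$ or $K_{3,3}$ admits an LR coloring. For each of these two obstructions I would fix an arbitrary DFS, enumerate the possible configurations of cotree edges and their $low$ values up to symmetry, and in each configuration exhibit a triple of return edges whose required color assignments are pairwise incompatible. This route sidesteps the embedding construction but demands a careful finite case analysis; I expect the $K_{3,3}$ side to be the harder of the two because of the additional freedom in choosing the DFS tree.
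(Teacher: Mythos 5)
The paper does not prove this lemma at all: it is imported wholesale from de Fraysseix, Ossona de Mendez, and Rosenstiehl \cite{de2006tremaux} and used as a black box in the proof of Lemma~\ref{lem:hereditary-planar}. So there is no in-paper argument to compare against; what you have written is an outline of a proof of the cited theorem itself. Your forward direction (planar embedding $\Rightarrow$ LR coloring) is essentially the standard argument and is fine in spirit: color each cotree edge by the side of the tree path it returns along, and planarity forces the two LR conditions at every branch vertex. The difficulty is entirely in the converse, and there your text is a plan rather than a proof. The sentence ``I would argue that this system is satisfiable precisely because the given coloring is consistent'' is the whole theorem restated, not an argument: the known proofs (and the algorithmic formulations such as Brandes' LR algorithm) spend all their effort showing that the purely local pairwise conditions at each vertex propagate, via a carefully defined conflict structure on return edges interleaved on the boundary of the partial embedding, into a globally consistent choice of child orderings and edge sides. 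You have correctly identified this as ``the main obstacle,'' but identifying it is not discharging it.

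The alternative route via Kuratowski has a concrete flaw as stated. You propose to show that no subdivision of $K_5$ or $K_{3,3}$ admits an LR coloring ``for an arbitrary DFS'' of the obstruction. But the statement you must refute concerns an LR coloring of $G$ with respect to a DFS tree of $G$: the tree edges of $G$ lying inside a $K_5$ or $K_{3,3}$ subdivision need not form a DFS tree of that subdivision (some of its edges may be cotree edges of $G$, and its tree edges may form a disconnected forest inside the subgraph), and the values $low(e)$ are computed in $G$, not in the subgraph, so the LR conditions do not obviously restrict. To make this route work you would need an additional hereditarity lemma transferring an LR coloring of $G$ to a suitable Tr\'emaux structure on the subdivision (or on a minor), and that transfer is itself a nontrivial piece of the theory, not a routine finite case analysis. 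As it stands, both of your proposed routes for the hard direction leave the essential step unproven.
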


\section{Family of Rectangles with a Kernel}
\label{sec:star}

The main results of this section are Lemmata~\ref{lem:star-kernel} and \ref{lem:etoile-sans-centre}, which in turn are used to prove Lemma~\ref{lem:hereditary-planar} in Section~\ref{sec:hereditary}.
(Please see Section~\ref{sec:star-appendix} for missing details.)
In the following, we define a kernel of a rectangle family.
Informally, Lemma~\ref{lem:star-kernel} says that if a rectangle family has a star graph as its support, then the rectangle corresponding to the center of the star belongs to the kernel. Lemma~\ref{lem:etoile-sans-centre} states that a family of rectangles without its kernel still has a planar support.

\begin{definition}[Kernel of a Family]
  Given a family of maximal rectangles $\RR$ with respect to a simple polygon $P$, the kernel of $(\RR, \dP)$ is defined as follows.
  $$ker(\RR) := \displaystyle\bigcap_{\substack{p \in \dP \\ |\RR_p| \ge 2}} \RR_p$$
  where $\RR_p:= \{R \in \RR \mid p \in R\}$.
\end{definition}

A family $(\RR, \dP)$ with non-empty kernel $ker(\RR)$ is called \emph{proper} if for every $R \in \RR$ there exists $p \in \dP$ such that $ker(\RR) \cup \{R\} \subseteq \RR_p$. See Figure~\ref{fig:proper-kernel} for an example of a family that is not proper.

\begin{figure}[h]
        \centering
        \includegraphics[width=0.6\textwidth]{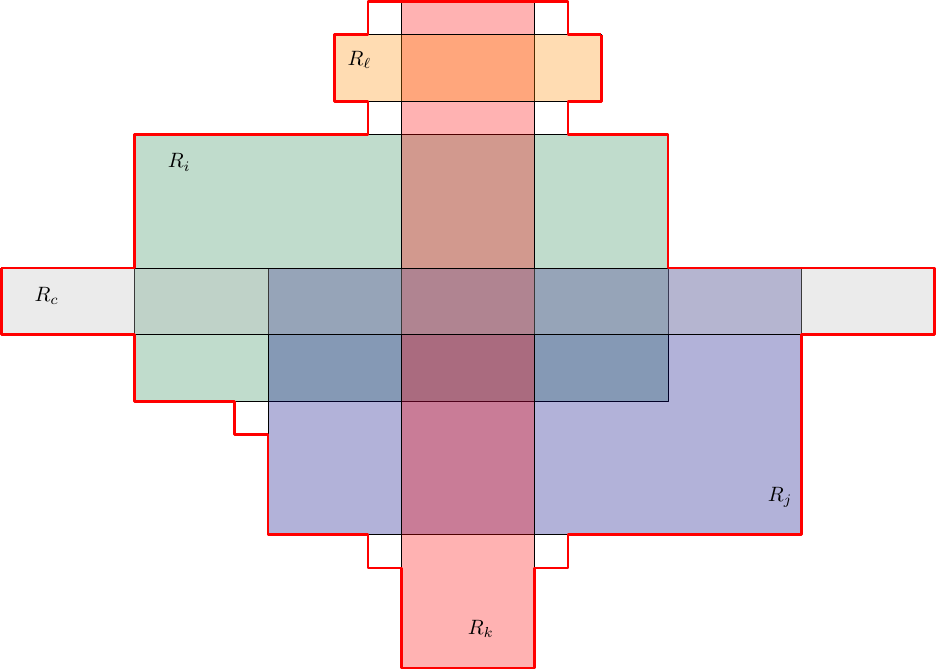}
        \caption{$(\{R_i, R_j, R_k, R_\ell, R_c\}, \dP)$ has a non-empty kernel $\{R_c\}$ but it is not proper. Whereas, $(\{R_i, R_j, R_c\}, \dP)$ is proper and has the same kernel. Note that $\dP$ is drawn with red lines.}
        \label{fig:proper-kernel}
\end{figure}




\begin{observation}
  \label{lem:kernel-star}
  Every proper family $(\RR, \dP)$ with a non-empty kernel has a support that is a star graph with the center vertex corresponding to a kernel rectangle. 
\end{observation}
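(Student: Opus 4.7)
The plan is to exhibit a concrete star graph on vertex set $\RR$ and check directly that it is a support. Pick any $R_c \in ker(\RR)$; such a rectangle exists by the non-emptiness hypothesis. Let $G$ be the graph on $\RR$ whose edges are exactly $\{(R_c,R) : R \in \RR \setminus \{R_c\}\}$, so $G$ is a star centered at the kernel rectangle $R_c$. I claim $G$ is a support for the hypergraph $(\RR,\dP)$, which is exactly what the observation asks for.

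To verify the support property, I would fix an arbitrary $p \in \dP$ and inspect the induced subgraph $G[\RR_p]$. Two cases arise. If $|\RR_p| \le 1$ then $G[\RR_p]$ is either empty or a single isolated vertex, and is trivially connected. If instead $|\RR_p| \ge 2$, then $p$ contributes to the intersection defining $ker(\RR)$, so $ker(\RR) \subseteq \RR_p$ and in particular $R_c \in \RR_p$. Consequently $G[\RR_p]$ is the sub-star of $G$ centered at $R_c$ with leaf set $\RR_p \setminus \{R_c\}$, which is connected. Since $p$ was arbitrary, $G[\RR_p]$ is connected for every hyperedge, so $G$ is a valid support graph, star-shaped and centered at a kernel rectangle, as required.

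There is no real obstacle here; the observation is essentially a direct unpacking of the definition of $ker(\RR)$, which forces every non-singleton hyperedge to contain every kernel rectangle and in particular the chosen center $R_c$. The properness hypothesis is not logically needed for connectivity of $G[\RR_p]$, but it plays an auxiliary role by guaranteeing that each leaf $R$ of $G$ actually co-occurs with $R_c$ in some hyperedge $\RR_p$ containing $ker(\RR) \cup \{R\}$; this ensures the star edges correspond to genuine boundary incidences rather than vacuous connections, which is the viewpoint that will matter when this observation is fed into Lemmata~\ref{lem:star-kernel} and~\ref{lem:etoile-sans-centre}.
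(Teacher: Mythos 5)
Your proof is correct and is exactly the direct unpacking the paper intends (the paper states this as an observation without proof): since every hyperedge $\RR_p$ with $|\RR_p|\ge 2$ contains $ker(\RR)$ by definition, the star centered at any kernel rectangle induces a connected subgraph on every hyperedge. Your side remark is also accurate — properness is not needed for the bare support property, only to make the star edges correspond to actual boundary co-occurrences, which is what the converse Lemma~\ref{lem:star-kernel} and the later constructions rely on.
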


In the following, we show that the converse is also true.

\begin{lemma}
  \label{lem:star-kernel}
Given a family of maximal rectangles $\RR$ with respect to a simple orthogonal polygon $P$. If there exists a minimal support graph $G$ of $(\RR, \dP)$ such that $G$ is a star graph then the rectangle corresponding to the center of $G$ belongs to $ker(\RR)$ and $\RR$ is proper.
\end{lemma}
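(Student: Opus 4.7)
The plan is to exploit the purely combinatorial fact that in a star graph $G$ with center $R_c$, the non-center vertices (leaves) form an independent set. Equivalently, for any $S\subseteq V(G)$ with $|S|\ge 2$, the induced subgraph $G[S]$ is connected if and only if $R_c\in S$. Once this observation is in place, both conclusions of the lemma follow from the defining property of a support together with the minimality of $G$; no geometric properties of maximal rectangles need to enter.

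First I would show that $R_c\in ker(\RR)$. Fix any $p\in\dP$ with $|\RR_p|\ge 2$. Because $G$ is a support of $(\RR,\dP)$, $G[\RR_p]$ is connected; by the star structure recalled above, this forces $R_c\in\RR_p$. Intersecting over all such $p$ gives
\[
R_c\ \in\ \bigcap_{\substack{p\in\dP\\ |\RR_p|\ge 2}}\RR_p\ =\ ker(\RR),
\]
so in particular $ker(\RR)$ is non-empty.

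Next I would verify that $\RR$ is proper, i.e.\ that for every $R\in\RR$ there exists $p\in\dP$ with $ker(\RR)\cup\{R\}\subseteq \RR_p$. If $R\in ker(\RR)$, pick any $p$ with $|\RR_p|\ge 2$ (such a $p$ exists because $G$ has at least one edge whose existence is witnessed by some hyperedge). By definition of the kernel, $ker(\RR)\subseteq \RR_p$, so $ker(\RR)\cup\{R\}=ker(\RR)\subseteq \RR_p$. If instead $R\notin ker(\RR)$, then in particular $R\neq R_c$, so $R$ is a leaf of the star. Here I would invoke the minimality of $G$: the edge $e=(R_c,R)$ must be necessary, so there exists $p\in\dP$ such that $G[\RR_p]$ is connected but $G[\RR_p]$ with $e$ removed is not. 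This forces both endpoints of $e$ to lie in $\RR_p$, so $R,R_c\in\RR_p$ and $|\RR_p|\ge 2$. Hence $ker(\RR)\subseteq \RR_p$ by definition, and combining this with $R\in\RR_p$ gives $ker(\RR)\cup\{R\}\subseteq \RR_p$.

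The main ``difficulty,'' such as it is, is really bookkeeping rather than content: one must be careful about degenerate cases (a one-vertex family, or the possibility of no hyperedge of size $\ge 2$, in which case $G$ would have no edges and the star is trivial) and one must cite minimality of $G$ at exactly the right moment to conclude that every leaf of the star sits inside some hyperedge of size at least two. Apart from this, the argument is a short combinatorial consequence of the star structure and amounts to the converse of Observation~\ref{lem:kernel-star}.
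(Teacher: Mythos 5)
Your proposal is correct and follows essentially the same route as the paper: the star structure forces the center into every hyperedge of size at least two (hence into $ker(\RR)$), and edge-minimality of $G$ guarantees each leaf shares a hyperedge with the center, which yields properness. You merely spell out in more detail what the paper compresses into the remark that a non-proper family would leave isolated vertices in a minimal support.
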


\begin{proof}
  Let $R_c$ be the center of $G$. As $G$ is a minimal support graph of $(\RR, \dP)$, for every $p \in \dP$ such that $|\RR_p| \ge 2$, $R_c \in \RR_p$. If there exists $p \in \dP$ such that $|\RR_p| \ge 2$ and $R_c \notin \RR_p$ then $p$ is not supported by $G$ and hence a contradiction. This implies $R_c \in ker(\RR)$.

  $\RR$ is proper because otherwise there are isolated vertices in a minimal support. 
\end{proof}



Throughout this section we assume $\RR$ to be proper and have a non-empty kernel. Without loss of generality, we consider $\RR$ to have a unique kernel rectangle, i.e., $ker(\RR) = \{R_c\}$.
From Definition~\ref{defn:pierce-poset}, we can partition the family without its kernel rectangle $R_c$ into the following subsets. The set of rectangles that have a corner intersection with $R_c$ that we denote by $\RR_N$. The remaining rectangles in $\RR$ have a piercing intersection with $R_c$. Among the ones having a piercing intersection, we call the rectangles $R \in \RR$ to be vertical rectangles if $R \prec R_c$, i.e., $\RR_V := \{R \in \RR \mid R \prec R_c\}$. Likewise, we call the rectangles $R \in \RR$ to be horizontal rectangles if $R_c \prec R$, i.e., $\RR_H := \{R \in \RR \mid R_c \prec R \}$.
Thus, $\RR := \RR_N \uplus \RR_V \uplus \RR_H \uplus \{R_c\}$.

Finally, we state the remaining main lemma of this section.
  \begin{restatable}{lemma}{EtoileSansCentre}
  \label{lem:etoile-sans-centre}
Given a family $(\RR, \dP)$ such that $ker(\RR) \neq \emptyset$, then $(\RR\setminus ker(\RR), \dP)$ has a planar support.
  \end{restatable}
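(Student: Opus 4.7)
The plan is to construct the planar support for $(\RR\setminus\{R_c\},\dP)$ explicitly, building on the partition $\RR\setminus\{R_c\} = \RR_N \uplus \RR_V \uplus \RR_H$ described after Lemma~\ref{lem:star-kernel}. I may assume $ker(\RR) = \{R_c\}$ is a single rectangle (if there are several kernel rectangles, they all sit inside exactly the same hyperedges of size $\ge 2$, so the argument is identical up to bookkeeping). The topological preliminary is that every boundary point $p$ whose hyperedge survives the deletion of $R_c$ must lie on one of the four sides of $\partial R_c$, because $R_c \in \RR_p$ forces $p \in R_c$, and the maximality of $R_c$ together with the simple-connectedness of $P$ forbids $\dP$ from visiting the interior of $R_c$. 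Hence each surviving hyperedge is anchored to a single side of $R_c$ and can be treated locally.

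Guided by this, I would construct four chain subgraphs, one per side of $R_c$, and stitch them together at the corners. For the top side $s^\top$ of $R_c$, the rectangles of $\RR\setminus\{R_c\}$ that can contain a point of $s^\top$ are exactly the vertical rectangles in $\RR_V$ that extend strictly above $R_c$ together with the corner rectangles in $\RR_N^{NW}\cup\RR_N^{NE}$ (those containing the NW or NE corner of $R_c$). Each such rectangle $R$ intersects $s^\top$ in a horizontal sub-interval $I_R$; I would sort these rectangles left-to-right according to $I_R$ and join each consecutive pair by an edge, producing the top chain. Repeating on the bottom, left, and right sides produces four chains; the corner rectangles in $\RR_N$ are shared between two adjacent chains (for instance, each rectangle of $\RR_N^{NE}$ terminates both the top and the right chains), and this is what glues the four chains into a single graph. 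Drawn in the natural geometric arrangement around where $R_c$ sat, each chain lies in its own side strip with the corner rectangles placed at the external corners of $R_c$'s footprint and $R_c$'s former position as an inner face, so planarity is manifest.

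The main obstacle is to show that each chain already supports the hyperedges anchored to its side: for any $p \in s^\top$ with $|\RR_p\setminus\{R_c\}|\ge 2$, the members of $\RR_p\setminus\{R_c\}$ should appear consecutively along the top chain, so as to induce a connected subpath. I would prove this by leveraging the maximality of the rectangles: if an intervening rectangle $R''$ in the ordering were not in $\RR_p$, the $\dP$-blocker responsible for the shorter reach of $R''$ would sit inside the bay of $P$ above $R_c$ jointly used by the two sandwiching members $R, R' \in \RR_p$, and that same blocker would then cut off $R$ or $R'$ from reaching $p$, a contradiction. The other three sides are symmetric, and the four corner points of $R_c$, where a vertical, a horizontal, and a corner rectangle may co-occur, are handled by the shared corner rectangles of $\RR_N$ that bridge the two neighbouring chains. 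Establishing this consecutivity claim completes the construction and hence the proof of Lemma~\ref{lem:etoile-sans-centre}.
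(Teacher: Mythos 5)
Your topological preliminary is correct (every boundary point whose hyperedge survives the deletion of $R_c$ lies on $\partial R_c$, simply because $R_c\subseteq P$), but the premise on which your whole construction rests is false: the rectangles containing points of $\dP$ on the top side $s^\top$ of $R_c$ are \emph{not} only vertical rectangles and corner rectangles. A horizontal rectangle $H$ with $R_c\prec H$ that is top-aligned with $R_c$ has its top side on the same line as $s^\top$ and its $x$-range contains that of $R_c$, so it contains every point of $\dP\cap s^\top$; symmetrically, left/right-aligned vertical rectangles meet the left/right sides of $R_c$. This is not a bookkeeping omission: such an aligned $H$ can share a boundary point $p_i\in s^\top$ (e.g.\ the top blocker of a top-aligned vertical $V_i$) with each of several verticals $V_1,\dots,V_m$ having pairwise disjoint $x$-ranges, and $p_i$ may lie in no other member of the family. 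Every support graph must then make $H$ adjacent to all of $V_1,\dots,V_m$, i.e.\ $H$ is forced to have degree $m$, which your top chain (all degrees at most $2$, and $H$ not even placed on it) cannot realize; the two-element hyperedges $\{H,V_i\}$ would induce disconnected subgraphs. This is precisely why the paper's construction gives the aligned terminal rectangles edges to \emph{all} rectangles of the opposite orientation class, and recovers planarity not with four side strips but with a global drawing: the $\prec$-minimal horizontals on a line inside a circle, the $\prec$-maximal verticals on a line outside it, corner rectangles on the circle, and nested rectangles hung as stars off their unique maximal representative (Lemmas~\ref{lem:prec-star} and~\ref{lem:2cor-1pie}).

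Two further soft spots: your left-to-right sorting of the intervals $I_R$ is not well defined when intervals are nested (a narrow tall vertical strictly inside the $x$-range of a wider one), and the consecutivity claim is only sketched --- the paper sidesteps it by ordering only the maximal verticals (resp.\ minimal horizontals) in $<_v$ (resp.\ $<_h$) and proving separately that a nested rectangle cannot attach to two such consecutive rectangles outside $R_c$ (Lemma~\ref{lem:2cor-1pie}). Also, the corner situation is more delicate than ``corner rectangles bridge two chains'': a corner of $R_c$ can carry two corner rectangles, and deciding which edges towards aligned vertical/horizontal rectangles to draw requires the case analysis of Lemmas~\ref{lem:corner}--\ref{lem:corner-non-corner}. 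The overall instinct (localize to $\partial R_c$, order along each side, glue at corners) is in the right spirit, but as written the construction does not yield a support graph.
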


  \begin{hproof}
    The proof to this lemma can be found in Section \ref{sec:star-appendix}. Here we sketch the proof.
    To this end we prove a few statements regarding how the rectangles from $\RR_N, \RR_V,$ and $\RR_H$ intersect among themselves. Furthermore among the rectangles in $\RR_H$ (resp. $\RR_V$), we denote the set of minimally horizontal (resp. maximally vertical) rectangles as $\RR^*_H$ (resp. $\RR^*_V$). Then we show that the rectangles in $\RR^*_H$ (resp. $\RR^*_V$) are in a linear order $<_h$ (resp. $<_v$).
    This gives us enough understanding to design an algorithm to draw a planar graph that is also a support graph of $(\RR\setminus ker(\RR), \dP)$.

    We very briefly give an overview of our algorithm in the following. We place the vertices corresponding to the minimally horizontal rectangles, i.e., $\RR^*_H$ on a line in the same order as in $<_h$. Similarly we do the same for the rectangles in $\RR^*_V$.
Every vertex corresponding to a vertical (resp. horizontal) rectangle is made adjacent to a vertex corresponding to an aligned horizontal (resp. vertical) rectangle.
See Figure~\ref{fig:kernel-schema} for a schema of the drawing of the graph $G$.
Refer to Subsection~\ref{sec:algo-kernel-less} in the appendices for the complete algorithm to draw planar support $G$ for $\RR\setminus ker(\RR)$.
  \end{hproof}

\hide{ 
\subsection{Algorithm to Draw Planar Support for $\RR\setminus ker(\RR)$}
We very briefly give an overview of our algorithm to draw a planar support of $(\RR\setminus ker(\RR), \dP)$. We place the vertices corresponding to the maximally horizontal rectangles, i.e., $\RR^*_H$ on a line in the same order as in $<_h$. Similarly we do the same for the rectangles in $\RR^*_V$.
Every vertex corresponding to a vertical (resp. horizontal) rectangle is made adjacent to a vertex corresponding to an aligned horizontal (resp. vertical) rectangle.
See Figure~\ref{fig:kernel-schema} for a schema of the drawing of $G$.
Refer to Subsection~\ref{sec:algo-kernel-less} in the appendices for the complete algorithm to draw planar support $G$ for $\RR\setminus ker(\RR)$.

} 

  \hide{
  \subsubsection{Placing the vertices}
  Draw a circle and place the terminal vertices $V_1, H_1, V_k, H_\ell$ on it in this order i.e., alternating between the horizontal and vertical rectangles as shown in Figure~\ref{fig:kernel-schema} where the circle is drawn in bold.
  Place the vertices corresponding to the corner rectangle between the terminal rectangles, i.e.,
  $N_\swarrow$ between $V_1$ and $H_1$,
  $N_\searrow$ between $V_k$ and $H_1$,
  $N_\nearrow$ between $V_k$ and $H_\ell$, and
  $N_\nwarrow$ between $V_1$ and $H_\ell$.
  
  Define a line joining $H_1$ and $H_\ell$ in the inner face of the cycle. Place the vertices corresponding to the other horizontal rectangles $H_j \in \RR^*_H\setminus \{H_1, H_\ell\}$, on this line obeying the linear order defined in Definition~\ref{def:vert-order}.
  Similarly, define a line joining $V_1$ and $V_k$ in the outer face of the cycle.
  Place the vertices corresponding to the other vertical rectangles $V_i \in \RR^*_V\setminus\{V_1, V_k\}$, on this line obeying the linear order defined in Definition~\ref{def:hori-order}.
  
  For every $V_i \in \RR^*_V$, if there exists $R$ such that $R \prec V_i$, then define a \emph{private} region in the plane (shown by a triangle with a vertex at $V_i$ in Figure~\ref{fig:kernel-schema}). Place the vertices corresponding to all such rectangles in the $V_i$'s private region.
  Similarly define private regions for $H_j \in \RR^*_H$, if required, and place all vertices corresponding to rectangles $R$ in the region, where $H_j \prec R$.

  \subsubsection{Drawing the edges}
  \begin{enumerate}
    \item {\bf $(\RR^*_V \times \RR^*_H)$.}
  If $V_1$ is left-aligned with $R_c$ then draw an edge $(V_1,H_j)$, for every $H_j \in \RR^*_H$ as shown in Figure~\ref{fig:kernel-schema}. Similarly, if $V_k$ is right-aligned then draw an edge $(V_k,H_j)$, for every $H_j \in \RR^*_H$.
  Conversely, if $H_1$ is bottom-aligned then draw an edge $(V_i, H_1)$, for every $V_i \in \RR^*_V$. Likewise, if $H_\ell$ is top-aligned then draw an edge $(V_i, H_\ell)$, for every $V_i \in \RR^*_V$.

\item {\bf $(\RR_V \times \RR_V)$.}
    If $V_i \cap V_{i+1} \cap \dP \neq \emptyset$ then draw an edge $(V_i,V_{i+1})$ for every $V_i \in \RR^*_V$ where $1 \le i < k$, along the line where the $V_i$'s are placed, as shown in Figure~\ref{fig:kernel-schema}.
  Draw an edge $(R,V_i)$ inside $V_i$'s private region, for every $V_i \in \RR^*_V$ and $R \prec V_i$.

\item {\bf $(\RR_H \times \RR_H)$.}
  If $H_j \cap H_{j+1} \cap \dP \neq \emptyset$ then draw an edge $(H_j,H_{j+1})$ for every $H_i \in \RR^*_H$ where $1 \le j < \ell$, along the line where the $H_j$'s are placed, as shown in Figure~\ref{fig:kernel-schema}.
  Draw an edge $(R,H_j)$ inside $H_j$'s private region, for every $H_j \in \RR^*_H$ and $H_j \prec R$.

\item {\bf $(\RR_N \times (\RR^*_V \cup \RR^*_H))$.}
  Lemma~\ref{lem:corner} implies that if $N_\nearrow \cap R \cap \dP \neq \emptyset$ where $R \in \RR^*_V \cup \RR^*_H$ then $R$ is either top-aligned and/or right-aligned to $R_c$. If $R$ is top-aligned to $R_c$ and $R \in \RR^*_H$ then $R = H_\ell$. Draw the edge $(N_\nearrow, H_\ell)$.
  If there exists $R' \in \RR^*_V$ such that $R' \cap N_\nearrow \cap \dP \neq \emptyset$ and $R'$ is top-aligned with $R_c$ then draw the edge $(N_\nearrow, R')$ if $N_\nearrow \cap H_\ell \cap \dP = \emptyset$, otherwise do not draw the edge $(N_\nearrow, R')$. See Figure~\ref{fig:kernel-corner}. The case where $R$ is top-aligned is identical and we accordingly draw the edges. Finally, we iterate over the other corner rectangles in $\RR_N$ and do a similar case analysis.
\end{enumerate}
}

   \begin{SCfigure}[1][h]
        \centering
        \includegraphics[width=0.45\textwidth]{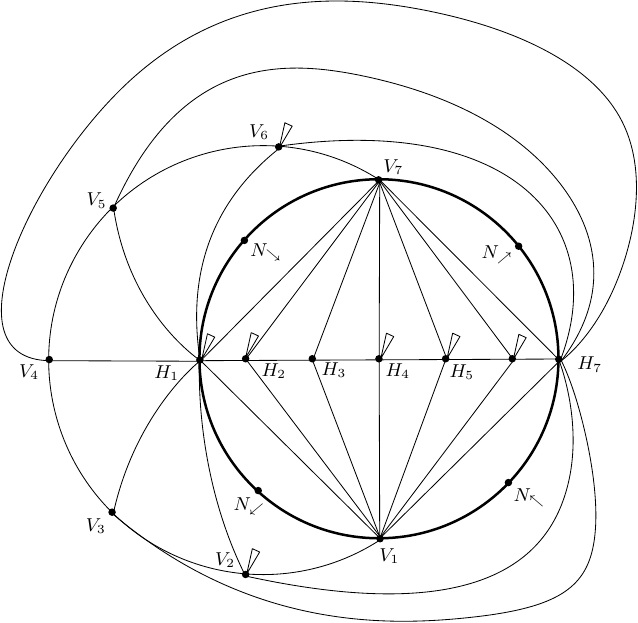}
        \caption{Schema of a planar support graph for $(\RR\setminus ker(\RR), \dP)$. $\RR^*_V = \{V_1,\dots, V_7\}$, $\RR^*_H = \{H_1, \dots, H_7\}$, $\RR_N = \{N_\swarrow, N_\nwarrow, N_\nearrow, N_\searrow\}$. For every $V_i \in \RR^*_V$, if there exists $R \prec V_i$, the vertex corresponding to $R$ is placed in the private triangular region attached to $V_i$. Likewise for vertices in $\RR^*_H$.}
        \label{fig:kernel-schema}
      \end{SCfigure}
      

      
    \hide{
  \begin{algorithm}
    \caption{Algorithm to draw planar support for $(\RR\setminus ker(\RR), \dP)$}
    \label{alg:star}
    \begin{algorithmic}[1]
      \State Draw a circle $C$ on the plane.\;
      
      \State Draw the vertices corresponding to the terminal rectangles on $C$, in the order $V_1, H_1, V_k, H_\ell$. \;

      \State Draw the vertices corresponding to $V_2, \dots, V_{k-1}$ in the interior of $C$ along a line joining $V_1$ and $V_k$, respecting the linear ordering $<_v$. \;

      \State Draw the vertices corresponding to $H_2, \dots, H_{\ell -1}$ in the exterior of $C$ along some curved line joining $H_1$ and $H_\ell$, respecting the linear ordering $<_h$.\;

      \State Draw the vertex corresponding to $R_{11}$ on $C$ between $V_1$ and $H_1$.
      \If{$V_1 \cap R_{11} \cap \dP \neq \emptyset$}
      \State Draw an edge $(R_{11},V_1)$.\;
      \EndIf
      \If{$H_1 \cap R_{11} \cap \dP \neq \emptyset$}
      \State Draw an edge $(R_{11},H_1)$.\;
      \EndIf
      
      \State Draw the vertex corresponding to $R_{1 \ell}$ on $C$ between $V_1$ and $H_\ell$.
      \If{$V_1 \cap R_{1\ell} \cap \dP \neq \emptyset$}
      \State Draw an edge $(R_{1\ell},V_1)$.\;
      \EndIf
      \If{$H_\ell \cap R_{1\ell} \cap \dP \neq \emptyset$}
      \State Draw an edge $(R_{1\ell},H_\ell)$.\;
      \EndIf

      \State Draw the vertex corresponding to $R_{k1}$ on $C$ between $V_k$ and $H_1$.
      \If{$V_k \cap R_{k1} \cap \dP \neq \emptyset$}
      \State Draw an edge $(R_{k1},V_k)$.\;
      \EndIf
      \If{$H_1 \cap R_{k1} \cap \dP \neq \emptyset$}
      \State Draw an edge $(R_{k1},H_1)$.\;
      \EndIf
      
      \State Draw the vertex corresponding to $R_{k \ell}$ on $C$ between $V_k$ and $H_\ell$.
      \If{$V_k \cap R_{k\ell} \cap \dP \neq \emptyset$}
      \State Draw an edge $(R_{k\ell},V_k)$.\;
      \EndIf
      \If{$H_\ell \cap R_{k\ell} \cap \dP \neq \emptyset$}
      \State Draw an edge $(R_{k\ell},H_\ell)$.\;
      \EndIf

      \If{$V_1$ is left-aligned}
      \ForAll{$H_j \in \RR^*_H$}
      \State Draw an edge $(V_1, H_j)$.
      \EndFor
      \EndIf

      \If{$V_k$ is right-aligned}
      \ForAll{$H_j \in \RR^*_H$}
      \State Draw an edge $(V_k, H_j)$.
      \EndFor
      \EndIf

      \If{$H_1$ is top-aligned}
      \ForAll{$V_i \in \RR^*_V$}
      \State Draw an edge $(V_i, H_1)$.
      \EndFor
      \EndIf

      \If{$H_\ell$ is bottom-aligned}
      \ForAll{$V_i \in \RR^*_V$}
      \State Draw an edge $(V_i, H_\ell)$.
      \EndFor
      \EndIf
      \algstore{myalg}
    \end{algorithmic}
  \end{algorithm}

  \begin{algorithm}                     
    \begin{algorithmic}
      \caption{Algorithm~\ref{alg:star} continued}
      \algrestore{myalg}
       
      \State $i \gets 1$\;
      \For{$i < k$}
      \If{$V_i \cap V_{i+1} \cap \dP \neq \emptyset$}
      \State Draw an edge $(V_i, V_{i+1})$.\;
      \EndIf
      \State $i \gets i + 1$
      \EndFor

      \State $j \gets 1$\;
      \For{$j < \ell$}
      \If{$H_j \cap H_{j+1} \cap \dP \neq \emptyset$}
      \State Draw an edge $(H_j, H_{j+1})$.\;
      \EndIf
      \State $j \gets j + 1$
      \EndFor
      
    \end{algorithmic}
  \end{algorithm}

  } 


  \hide{ 
\begin{proof}

  It is easy to see with the help of Figures~\ref{fig:kernel-schema} and~\ref{fig:kernel-corner} that the resultant graph $G$ is planar.
  We claim that $G$ is also a support graph. For $p \in \dP$ consider $\RR_p \setminus \{R_c\}$.
  Consider $R, R' \in \RR_p \setminus \{R_c\}$. We claim $R$ and $R'$ are connected in $G$.
  \begin{enumerate}
    \item
  If $R, R' \in \RR_V$ then let $V, V' \in \RR^*_V$ such that $R \preceq V$ (i.e., $R \prec V$ or $R=V$) and $R' \preceq V'$. From Lemma~\ref{lem:prec-star}, $(R,V)$ and $(R',V')$ are edges in $G$. This also means $V$ and $V'$ must be consecutive in the ordering $<_v$ and hence they are adjacent in $G$. Thus, $R$ and $R'$ are connected in $G$. 
  Same is the case for $R, R' \in \RR_H$.
  
\item
  If $R \in \RR_V$ and $R' \in \RR_H$. Let $V \in \RR^*_V$ and $H \in \RR^*_H$ such that $R \prec V$ and $H \prec R'$. Then observe that either $V$ is a left/right aligned terminal rectangle or $H$ is a top/bottom aligned terminal rectangle, or both. In either cases, they $(H,V)$ is an edge in $G$. Hence, $R$ and $R'$ connected in $G$.

\item
  If $R \in \RR_N, R' \in \RR_V$. Without loss of generality, let $R = N_\nearrow$ and let $V \in \RR^*_V$ and $R \preceq V$. From Lemma~\ref{lem:corner} there are two cases, either $V$ is top-aligned with $R_c$ or right-aligned with $R_c$. In either cases, there is an edge drawn between $V$ and $N_\nearrow$. For the former case, see Figure~\ref{fig:kernel-corner}. For the latter case, see Figure~\ref{fig:kernel-schema}. Thus, $R$ and $N_\nearrow$ are connected. The other cases involving $\RR_N$ and $\RR_H$ are identical.
\end{enumerate}
\end{proof}

} 

\section{Hereditary Property of Planar Boundary Supports}
\label{sec:hereditary}

In this section, we prove Lemma~\ref{lem:hereditary-planar}, i.e., the property of the boundary supports being planar is hereditary, which in turn is used to prove Theorem~\ref{thm:main}. To that end, we need Lemmata~\ref{lem:left-right}, \ref{lem:star-kernel}, and~\ref{lem:etoile-sans-centre}.


\begin{lemma}[Hereditary Property]
  \label{lem:hereditary-planar}
  Given a simple orthogonal polygon $P$ and a set of maximal rectangles $\RR$ with respect to $P$. If $(\RR,\dP)$ has a planar support then for every $\RR' \subset \RR$, $(\RR',\dP)$ also has a planar support.
\end{lemma}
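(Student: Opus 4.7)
The plan is to prove the hereditary property by induction on $|\RR \setminus \RR'|$, which reduces the task to the single-removal version: if $(\RR, \dP)$ admits a planar support $G$ and $R \in \RR$ is any rectangle, then $(\RR \setminus \{R\}, \dP)$ also admits a planar support. The candidate graph is $G - R$, which is planar; the only way it can fail to be a support is that for a point $p \in \dP$ with $R \in \RR_p$ and $|\RR_p| \geq 2$, the subgraph $G[\RR_p \setminus \{R\}]$ may be disconnected. The remaining question is to add edges among the rectangles in $\RR \setminus \{R\}$ so as to restore connectivity of every such $\RR_p \setminus \{R\}$ while preserving planarity.

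The key step is to exploit the local kernel structure around $R$. Assuming $G$ is a minimal planar support, I would isolate the subfamily $\RR^R \subseteq \RR$ consisting of $R$ together with every rectangle that, for some hyperedge broken by deleting $R$, lies in $G$ on a path through $R$ inside that hyperedge. By minimality, the minimal support of $(\RR^R, \dP)$ is a star centered at $R$, so Lemma~\ref{lem:star-kernel} ensures that $R$ belongs to $ker(\RR^R)$ and that $\RR^R$ is proper. Lemma~\ref{lem:etoile-sans-centre} then produces a planar support $H$ of $(\RR^R \setminus \{R\}, \dP)$ whose edges precisely reconnect each hyperedge broken by removing $R$. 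The new graph $G'$ is obtained by deleting $R$ and its incident edges from $G$ and embedding $H$ in the resulting open face; by construction, $G'$ is a support of $(\RR \setminus \{R\}, \dP)$.

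The main obstacle is to verify that $G'$ is planar after this local surgery, which is exactly where Lemma~\ref{lem:left-right} enters. The plan is to take a DFS-orientation of $G'$, retain the Left-Right Coloring of the cotree edges of $G - R$ inherited from a valid coloring of $G$, and extend the coloring to the new cotree edges contributed by $H$, using the linear orderings $<_v$ and $<_h$ on $\RR_V, \RR_H$ established inside the proof of Lemma~\ref{lem:etoile-sans-centre}. The delicate step is verifying the two Left-Right conditions at every vertex of $G'$, especially at those vertices where branches of the DFS tree are merged by $H$; the structural properties of $\RR_N, \RR_V, \RR_H$ from Section~\ref{sec:star} are what make this verification go through. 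Once the extended coloring is shown consistent, Lemma~\ref{lem:left-right} certifies that $G'$ is planar, completing the inductive step.
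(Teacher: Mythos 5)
Your overall architecture (reduce to deleting a single rectangle $R$, keep the planar graph $G-R$, reconnect the broken hyperedges via the star/kernel lemmas, certify planarity with Left--Right coloring) is the same as the paper's, but the two steps you leave to "minimality" and to the coloring extension are exactly where the argument breaks as written. First, your subfamily $\RR^R$ (the union of all hyperedges broken by deleting $R$) need not admit a star-shaped minimal support centered at $R$: minimality of $G$ as a support of the \emph{whole} family says nothing about supports of the subhypergraph $(\RR^R,\dP)$, and in general two rectangles $R_1,R_2\in\RR^R$ can share a boundary point $q\notin R$ (e.g.\ members of two different broken hyperedges, or of the same one, that also meet elsewhere on $\dP$). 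Then $\RR^R_q$ is a hyperedge of size at least $2$ avoiding $R$, so no star centered at $R$ is even a support, $R\notin ker(\RR^R)$, and neither Lemma~\ref{lem:star-kernel} nor Lemma~\ref{lem:etoile-sans-centre} applies to your subfamily. Second, even granting a planar support $H$ of $(\RR^R\setminus\{R\},\dP)$, you cannot simply "embed $H$ in the open face" left by $R$: $\RR^R$ contains vertices that are not neighbors of $R$ in $G$, and these need not lie on the face created by deleting $R$, so the edges of $H$ may be forced to cross $G-R$; your plan to repair this by extending a Left--Right coloring of a DFS of $G'$ is asserted rather than proved, and it is precisely the crux.

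The paper avoids both problems by choosing the subfamily differently. It roots a DFS tree of $G$ at the vertex $v$ of $R$; since an undirected DFS has no cross edges, $v$ separates the subtrees $T(w_1),\dots,T(w_t)$ of its children, so any boundary point whose hyperedge meets two different subtrees must contain $R$. Consequently $G[\{w_1,\dots,w_t,v\}]$ is a star that supports $(\{R(w_1),\dots,R(w_t),R\},\dP)$, and Lemmata~\ref{lem:star-kernel} and~\ref{lem:etoile-sans-centre} are applied only to this family of \emph{children}. Connectivity of each hyperedge inside a single subtree is restored not by $H$ but by the shortcutting step: each cotree edge $(x_{i,j},v)$ is replaced by $(x_{i,j},w_i)$, and Lemma~\ref{lem:left-right} is invoked only to show this redirection preserves planarity (the color class is kept). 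Finally, because the $w_i$ are all neighbors of $v$, the reconnection edges $E_2$ can be drawn in the region vacated by $v$, disjoint from the private regions of the subtrees, so no new coloring argument is needed. To repair your proposal you would need to (i) replace $\RR^R$ by the DFS children of $v$ and prove the separator property that justifies the kernel/star claim, and (ii) add the shortcutting step (or an equivalent device) so that the only new edges lie among neighbors of $v$ on the face freed by its deletion.
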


\begin{proof}
  We are given a planar support graph $G$ for $(\RR,\dP)$. We claim that for every $R \in \RR$, there exists a planar support graph $G'$ for $(\RR\setminus\{R\},\dP)$.
  If this claim is true, then for every $\RR' \subset \RR$, $(\RR',\dP)$ also has a planar support.

  Fix some $R \in \RR$ and let $v$ be the vertex in $G$ corresponding to $R$. In short, we denote $R = R(v)$.  Recall the definition of left-right coloring (Subsection~\ref{sec:lrcolor}).
Consider a DFS tree $T$ defined on $G$ rooted at $v$. Let $w_i$ be a child of $v$ in $T$, for every $i \in [t]$. We denote $T(w_i)$ to be the subtree rooted at $w_i$ in $T$. Let $(x_{i,j}, v)$ be a cotree edge such that it points from $x_{i,j}$ to $v$, where $x_{i,j} \in T(w_i)$.

From Lemma~\ref{lem:left-right}, we know that if $G$ is planar, then there exists a left-right coloring of the cotree edges. 
We do the following \emph{shortcutting} operation on the cotree edges $(x_{i,j},v)$.
We replace the edge $(x_{i,j}, v)$ by $(x_{i,j}, w_i)$ such that the color class of the cotree edge, in the aforesaid left-right coloring, is same as before after doing the shortcutting.
See Figure~\ref{fig:dfs}.
From Lemma~\ref{lem:left-right}, it follows that planarity is preserved. Let $E_1$ be the set of edges that we get after the shortcutting operation.

  For every $p \in \dP$, if $\RR_p$ lies in some subtree $T(w_i)$, for some $i \in [t]$, then $\RR_p$ is supported by this shortcutting. If the vertices corresponding to the rectangles in $\RR_p$ are distributed across the subtrees rooted at multiple children of $v$ then observe the following.
  \begin{observation}
    If there exist vertices $x,y \in \RR_p$ such that $x \in T(w_i)$ and $y \in T(w_j)$, for $i,j \in [t]$ and $i \neq j$, then $w_i, w_j \in \RR_p$.
  \end{observation}

  This is true because otherwise point $p$ is not supported by $G$.
  Therefore, we can ignore the descendants of $w_i$, for every $i \in [t]$ and only put edges among the $w_i$'s in order to support $\RR_p$.
  
Now we can use our results from Section~\ref{sec:star}.
  Observe that $G[\{w_1, \dots, w_t, v\}]$ is a star centered at $v$.
  From Lemma~\ref{lem:star-kernel}, $R$ belongs to the kernel of $(\{R(w_1), \dots, R(w_t), R\}, \dP)$, where $R = R(v)$.
  Furthermore, from Lemma~\ref{lem:etoile-sans-centre}, 
  $(\{R(w_1), \dots, R(w_t)\}, \dP)$ has a planar support. Let $E_2$ be the edge set of this support (drawn in red in Figure~\ref{fig:dfs}).

  Thus, $G' := (\RR\setminus \{R\}, E_1 \cup E_2)$ is a support of $(\RR \setminus \{R\}, \dP)$. $G'$ is also planar because an edge in $E_1$ and that in $E_2$ can be drawn without crossing each other. As for every distinct $i,j \in [t]$, edges in $E(T(w_i))$ can be drawn in their private regions which are disjoint to that of $E(T(w_j))$.
  Finally, the edges in $E_2$ can be drawn outside the union of these private regions.

%
%
%
%
%
%
\end{proof}

\begin{figure}[h]
  \centering
  \begin{subfigure}{0.45\textwidth}
    \centering
    \includegraphics[width=0.9\textwidth]{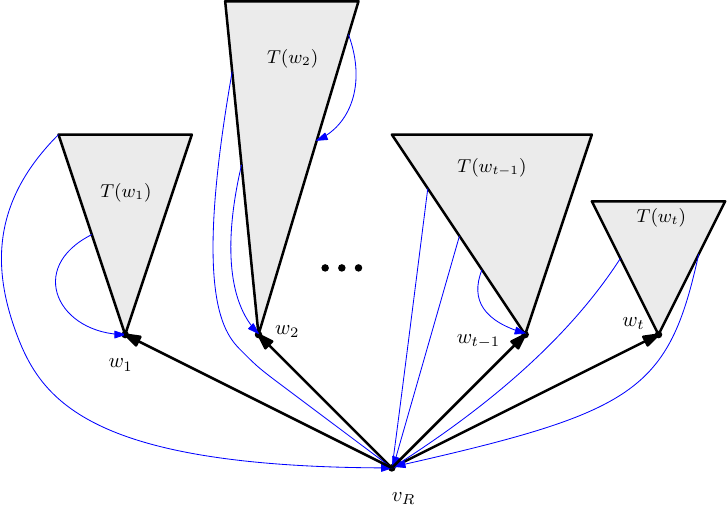}
  \end{subfigure}
  \begin{subfigure}{0.45\textwidth}
    \centering
    \includegraphics[width=0.9\textwidth]{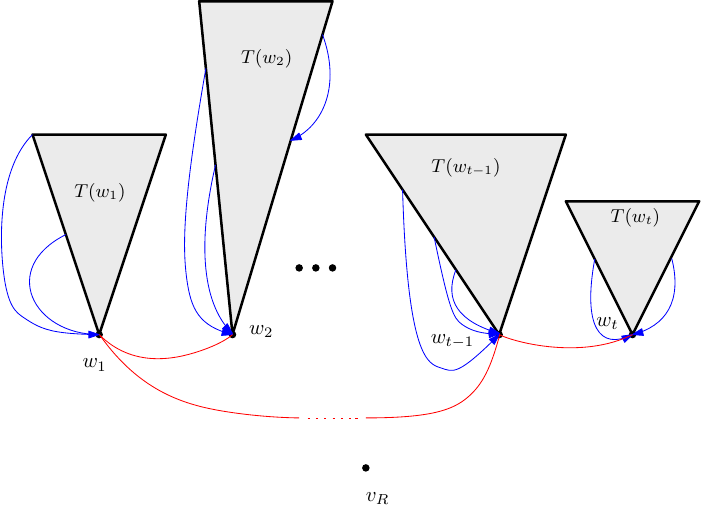}
  \end{subfigure}
  \caption{On the left, there is a schematic diagram of $G$, the support graph of $(\RR, \dP)$. On the right, there is a schematic diagram of $G'$, the support graph of $(\RR\setminus\{R\}, \dP)$. $G'$ is obtained from $G$ by shortcutting the cotree edges (drawn in blue), incoming to $v$, and adding the edges in $E_2$ (red edges among $w_j$'s) such that  $(\{R(w_1), \dots, R(w_t)\}, \dP)$ has a planar support.}
        \label{fig:dfs}
      \end{figure}

\section{Complete Family of Maximal Rectangles}
\label{sec:complete}

In this section, we state Lemma~\ref{thm:complete-rect} and give a proof sketch.
This lemma is used to prove Theorem~\ref{thm:main}.
Refer to Section~\ref{sec:complete-long} for the full proof of the lemma.
Recall that for a simple orthogonal polygon $P$, $\RR^K(P)$ is the set of all maximal rectangles with respect to $P$.

\begin{restatable}{lemma}{CompleteSupport}
  \label{thm:complete-rect}
  $(\RR^K(P),\dP)$ has a planar support.
\end{restatable}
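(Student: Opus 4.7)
The plan is to proceed by induction on the number of vertices of the horizontal R-tree of $P$. The base case is when this tree has a single vertex, in which case $P$ itself is a rectangle, $\RR^K(P) = \{P\}$, and the single-vertex graph is a trivial planar support.

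For the inductive step, I would select a leaf of the horizontal R-tree, corresponding to a topmost (or bottommost) horizontal strip $S$ of the decomposition. Removing $S$ from $P$ and, if needed, trimming the strip that was its parent, produces a simple orthogonal polygon $P'$ whose horizontal R-tree has strictly fewer vertices. By the inductive hypothesis, $(\RR^K(P'),\partial P')$ admits a planar support graph $G'$, and I will extend $G'$ into a planar support $G$ for $(\RR^K(P),\partial P)$.

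The key structural step is to classify the maximal rectangles of $P$ with respect to $S$: a rectangle in $\RR^K(P)$ either lies entirely within $P'$ (and there are two subcases depending on whether it is already maximal in $P'$ or was strictly contained in a maximal rectangle of $P'$), or it intersects the strip $S$. The rectangles of the second kind share a common horizontal span dictated by $S$, so they form what is essentially a proper family with a kernel in the sense of Section~\ref{sec:star}: the maximal rectangle of $P$ that fully spans the interface between $S$ and $P'$ plays the role of the kernel. I would then invoke Observation~\ref{lem:kernel-star} together with Lemma~\ref{lem:etoile-sans-centre} to obtain a planar support for this local family, and separately use the hereditary property (Lemma~\ref{lem:hereditary-planar}) to extract, from $G'$, a planar support on the subfamily of $\RR^K(P')$ whose rectangles are not affected by passage to $P$. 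These two planar supports are then glued along the interface: the local family is drawn inside the face of $G'$ incident to the boundary of $P'$ that is about to be replaced by the new boundary in $S$.

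The main obstacle I anticipate lies in the rectangles of $\RR^K(P)$ that are "extensions" of rectangles of $\RR^K(P')$, namely those which were maximal in $P'$ but can grow further into $S$ (so in $P$ they are strictly larger rectangles). These rectangles play a dual role: they still need to support points of $\partial P \cap \partial P'$ on the $P'$-side, where their unextended versions appear in $G'$, and they must also support the new points of $\partial P$ they touch in $S$. The delicate part is showing that we can consistently identify each such extended rectangle with a single vertex of the new graph $G$ while preserving planarity at the glue. I expect this to require a careful case analysis based on how the extension interacts with the kernel rectangle, together with a local re-routing of edges of $G'$ around the affected vertices, possibly relying on a Left-Right coloring of $G'$ as given by Lemma~\ref{lem:left-right} to choose consistently on which side of each extended-rectangle's vertex the new edges leaving into $S$ are placed.
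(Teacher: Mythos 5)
Your induction skeleton matches the paper's (induction on the horizontal R-tree, removing a leaf rectangle $A$ and extending a planar support $G_{P'}$ for $(\RR^K(P'),\dP')$), but the central reduction you propose is flawed. The rectangles of $\RR^K(P)$ that meet the leaf strip do all have their bottom side on $b_A$, yet they do \emph{not} form a proper family with the interface-spanning rectangle $A^\uparrow$ as kernel: two narrow rectangles rooted on $b_A$ can intersect each other at boundary points of $\dP$ far above the top side of $A^\uparrow$ (in the paper's own example, Figure~\ref{fig:type-12}, rectangles such as $I=[5,7]\times[0,21]$ and $J=[6,7]\times[0,22]$ meet $\dP$ well above $A^\uparrow=[2,34]\times[0,8]$), so $A^\uparrow$ fails the kernel condition and Observation~\ref{lem:kernel-star} and Lemma~\ref{lem:etoile-sans-centre} do not apply to this family. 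Relatedly, a purely local ``glue along the interface'' step does not account for hyperedges $\RR_p$ at boundary points far from the strip that mix strip rectangles with rectangles lying entirely in $P'$; these are exactly the points witnessed by the top/left/right blockers of the strip rectangles high up in $P'$.

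The paper resolves both issues with machinery your sketch does not supply: an injective map $f$ sending each strip rectangle to the maximal rectangle of $P'$ with the same top blocker (well-defined by Lemma~\ref{lem:opp-blockers}), a split into Type-1 rectangles (whose image stays maximal in $P$) and Type-2 rectangles (which are bottom extensions of their images and are handled by \emph{renaming} the image's vertex in $G_{P'}$, which is how the far-away hyperedges remain supported), a laminar/path support for the Type-1 rectangles along $b_A$, and, crucially, a \emph{strengthened} induction hypothesis that the vertically blocked rectangles lie on the outer face of the plane drawing, which is what lets the new edges $(R,f(R))$ be added without crossings (Lemmata~\ref{lem:path-supp}, \ref{lem:struct}, \ref{lem:comp-planar}, \ref{lem:comp-outer}). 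Your fallback of re-routing edges of $G'$ via a Left-Right coloring is not worked out and does not substitute for this outer-face invariant; you did correctly identify the extension rectangles as the crux, and the renaming idea is the right instinct, but as written the proof has a genuine gap at the kernel step and at the planarity of the glue.
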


\begin{hproof}
Given an orthogonal polygon $P$, we show that if the statement is true for a subpolygon $P' \subset P$, where $P\setminus P'$ is a rectangle, then it also true for $P$.
Thus, inductively one can argue that if $G_{P'}$ is a planar support for $(\RR^K(P'), \dP')$ then there exists a planar support $G_P$ for $(\RR^K(P), \dP)$. In particular, given $G_{P'}$ as an input, we give an algorithm to draw $G_P$.
\end{hproof}

\section{Planar Support Graphs}
\label{sec:main-results}


Finally we present our main result.

\BoundaryPlanarSupport*

\begin{proof}
  From Lemma~\ref{thm:complete-rect} it follows that $(\RR^K(P), \dP)$ has a planar support. By applying Lemma~\ref{lem:hereditary-planar} we prove that $(\RR, \dP)$ has a planar support, as $\RR \subseteq \RR^K(P)$.
\end{proof}




Additionally, we have two negative results, i.e., why local search fails for the Interior Cover problem and the Maximum Antirectangle problem for simple polygons.

\begin{theorem}
  \label{thm:interior}
    For every $r \ge 3$, there exists simple orthogonal polygon $P$ and an interior cover $\RR$ such that every minimal support graph of $(\RR, P)$ contains $K_{r,r}$.
  \end{theorem}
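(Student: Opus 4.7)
I will construct, for every $r \geq 3$, a simple orthogonal polygon $P_r$ together with an interior cover $\RR_r$ consisting of exactly $r$ horizontal maximal rectangles $H_1,\dots,H_r$ and $r$ vertical maximal rectangles $V_1,\dots,V_r$, arranged in a ``staggered grid'' pattern that generalises Figure~\ref{fig:biclique-v1}. The guiding property is that for every pair $(i,j) \in [r]\times[r]$ there is a point $p_{i,j} \in P_r$ such that $(\RR_r)_{p_{i,j}} = \{H_i,V_j\}$. Since every support graph must connect every hyperedge and a $2$-element hyperedge can only be connected by the edge between its two elements, this instantly forces the edge $\{H_i,V_j\}$ into every support graph (minimal or not) and yields $K_{r,r}$ as a subgraph on the bipartition $(\{H_i\},\{V_j\})$.

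\textbf{Construction.} Generalising Figure~\ref{fig:biclique-v1}, place the vertical rectangles with $x$-intervals $[2j-1,\,2j+2]$ for $j \in [r]$ (each of width~$3$, with consecutive intervals overlapping in a width-$1$ strip), and alternate their $y$-extents: odd-indexed $V_j$'s span $[1,\,2r+3]$ while even-indexed ones span $[0,\,2r+2]$. Do the symmetric thing for $H_1,\dots,H_r$ in $y$, with analogously staggered $x$-extents. Define $P_r := \bigcup_j V_j \cup \bigcup_i H_i$.

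\textbf{Key steps.}
(i) Check that $P_r$ is a simply connected orthogonal polygon. The staggering of tops/bottoms of adjacent verticals (and left/right sides of adjacent horizontals) means the boundary of $\bigcup_j V_j$ traces a single closed orthogonal polyline without enclosing any hole, and adding the $H_i$'s only fills the interior further without creating enclosed empty pockets; this mirrors the $r{=}4$ picture in Figure~\ref{fig:biclique-v1}.
(ii) Check that each $H_i$ and $V_j$ is maximal in $P_r$: on each of its four sides the blocker is a concave corner of $\dP_r$ created by the staggering of the neighbouring strip.
(iii) For every pair $(i,j)$, exhibit an explicit private point $p_{i,j} \in H_i \cap V_j$ not covered by any other rectangle. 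Because $V_{j'}$ for $j' \neq j$ covers only a width-$1$ strip inside $V_j$ (if at all), and similarly for the $H_i$'s, a suitably chosen interior point of $H_i \cap V_j$ avoids all $2r-2$ other rectangles.
(iv) Conclude as explained above: each private point forces the edge $\{H_i,V_j\}$, hence $K_{r,r}$ appears in every support graph of $(\RR_r,P_r)$.

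\textbf{Main obstacle.} The delicate part is the combinatorial-geometric verification in (i)--(iii): the staggering must be strong enough that no new large ``crossing'' maximal rectangle (analogous to $R_c$ in Figure~\ref{fig:biclique-v1}) is contained in $P_r$ and covers many of the private points $p_{i,j}$, yet gentle enough that $P_r$ remains simply connected and $\bigcup \RR_r = P_r$. The explicit coordinates above are designed with exactly that balance in mind, but the bookkeeping that the property scales uniformly with $r$ (rather than breaking down for large~$r$) is where all the real work of the proof sits.
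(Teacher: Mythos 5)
Your proposal is correct and takes essentially the same route as the paper: the paper's proof is precisely a staggered family of $r$ vertical and $r$ horizontal maximal rectangles (its figure shows the $r=4$ case, matching your coordinates), and your private points $(2j+\tfrac{1}{2},\,2i+\tfrac{1}{2})$ just make explicit the forcing argument, in fact yielding the slightly stronger conclusion that \emph{every} support graph of $(\RR,P)$, not only minimal ones, contains $K_{r,r}$. The ``main obstacle'' you flag is not actually an obstacle for the statement as given, since a large crossing rectangle such as $R_c$ is not a member of $\RR$ and hence never appears in any hyperedge $\RR_p$.
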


  \begin{proof}
    See Figure~\ref{fig:biclique}. For $r \ge 3$, one can construct a simple polygon $P$ and define a set of maximal rectangles as shown in Figure~\ref{fig:biclique} such that the minimal interior support graph is $K_{r,r}$.
  \end{proof}
  
  \begin{figure}
    \centering
  \begin{subfigure}[l]{0.45\textwidth}  
  \centering
        \includegraphics[width=\textwidth]{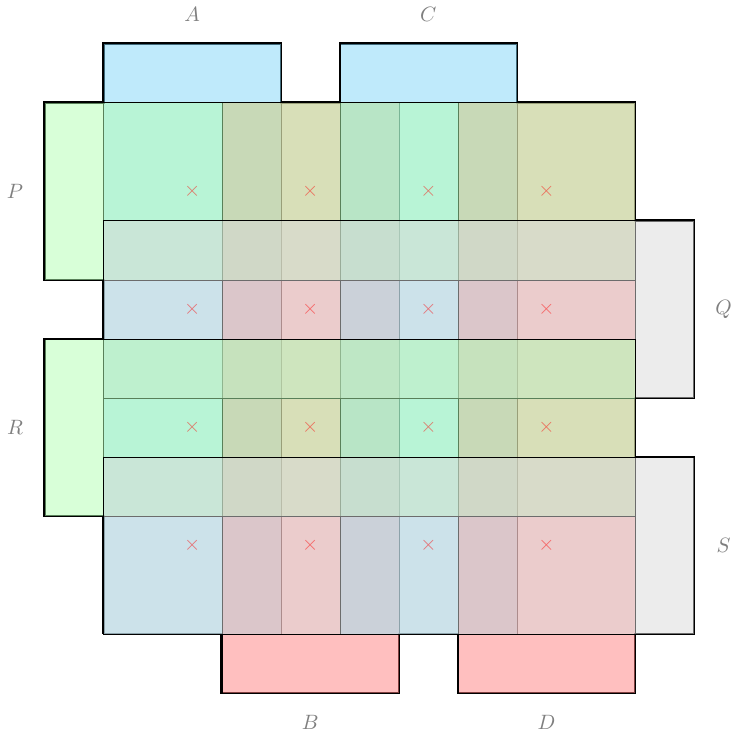}
  





      
      \caption{A simple polygon $P$ with \\
        $\RR = \{A, B, C, D, P, Q, R, S\}$.\\ The minimal support graph of $(\RR, P)$ is $K_{4,4}$.}
      \label{fig:biclique}
    \end{subfigure}
\begin{subfigure}[r]{0.49\textwidth}
        \centering
        \includegraphics[width=0.7\textwidth]{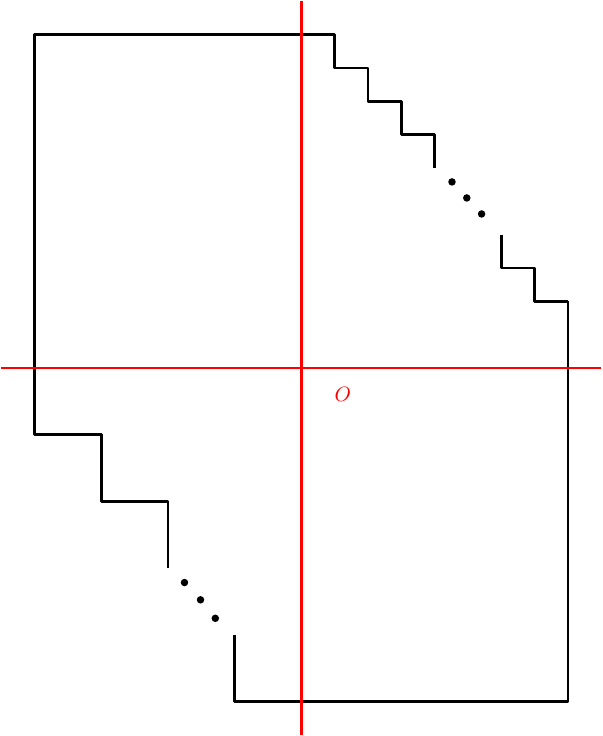}
        \caption{A schematic diagram of a polygon whose top-right quadrant has $r$ corners and its bottom-left quadrant has $s$ corners.}
        \label{fig:antirectangle}
      \end{subfigure}
      \caption{Instances that lead to (arbitrarily large) bicliques in the support graphs of the Interior Cover (left) and the Antirectangle problem (right).}
      \label{fig:neg-result}
\end{figure}
    Recall the visibility graph that we defined in Section~\ref{sec:visi-graph}.

    \begin{theorem}
      \label{thm:antirectangle}
      For every $r,s \ge 3$, there exists a simple polygon $P$ and a finite point set $Q \subset P$ such that $Q$ is a union of two antirectangles and the visibility graph defined on $Q$ is $K_{r,s}$.
    \end{theorem}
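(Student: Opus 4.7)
The plan is to realise the schema shown in Figure~\ref{fig:antirectangle} explicitly. I would build a simple orthogonal polygon $P$ that is the union of a "fat" cross-shaped region with two staircase extensions: an $r$-step descending staircase occupying the top-right quadrant, and an $s$-step descending staircase occupying the bottom-left quadrant. The top-left and bottom-right regions of $P$ will just be large axis-aligned rectangles, and the two staircases will have their reflex corners placed along the anti-diagonals of their respective quadrants.

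Concretely, after choosing the origin inside $P$, I would place the reflex corners of the top-right staircase so that the $i$-th one sits at approximately $(i, r+1-i)$, and put a point $a_i \in A$ just inside $P$ at that corner, for $i \in [r]$. Symmetrically, I would place $b_j \in B$ just inside $P$ near $(-j, -(s+1-j))$, for $j \in [s]$, on the bottom-left staircase. Set $Q := A \cup B$, so $|A| = r$ and $|B| = s$.

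I then need to verify three facts. First, $A$ is an antirectangle: for $i < i'$, the axis-aligned bounding box of $\{a_i, a_{i'}\}$ has its upper-right corner near $(i', r+1-i)$, which lies in one of the cut-out pockets of the top-right staircase and hence outside $P$; thus no rectangle $R \subseteq P$ can contain both points. Second, by the symmetric argument, $B$ is an antirectangle. Third, for every $a_i \in A$ and $b_j \in B$, the bounding rectangle $[x_{b_j}, x_{a_i}] \times [y_{b_j}, y_{a_i}]$ has its top-left corner inside the large rectangular top-left piece of $P$ and its bottom-right corner inside the large rectangular bottom-right piece, and (crucially) its interior stays clear of both staircases; hence this rectangle lies entirely in $P$ and witnesses the edge $(a_i,b_j)$ in the visibility graph. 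Combined, the visibility graph on $Q$ is precisely the complete bipartite graph $K_{r,s}$.

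The only real obstacle is ensuring in the third step that the bounding rectangle truly fits inside $P$ for \emph{every} choice of $(a_i, b_j)$; this is handled by scaling, i.e.\ drawing the top-left and bottom-right rectangular regions of $P$ wide and tall enough so that for all $i,j$ the horizontal strip at height $y_{a_i}$ extends far enough left to reach $x_{b_j}$ without re-entering the bottom-left staircase, and symmetrically for the vertical strip at $x_{a_i}$. Choosing the staircase step size of unit length and the two rectangular wings of side lengths at least $r+s+2$ suffices; this is where I expect the book-keeping to live, but no conceptual difficulty beyond it.
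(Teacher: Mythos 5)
Your construction is essentially the paper's own proof: the paper simply exhibits a polygon with an $r$-corner staircase in the top-right quadrant and an $s$-corner staircase in the bottom-left, takes the staircase corners as the two antirectangles (each pair on one staircase is separated because their bounding box pokes into a notch), and notes that any cross pair is contained in an axis-parallel rectangle inside $P$; your explicit coordinates and scaling of the two rectangular wings just make the same picture concrete. The only nit is terminological: the staircase corners at which you place the points are the \emph{convex} corners of $P$ (the paper's phrasing), not reflex ones, but your bounding-box argument is exactly the intended one, so the proof is correct.
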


    \begin{proof}
      See Figure~\ref{fig:antirectangle}. Observe that the set of convex corners in the top-right (resp. bottom-left) quadrant forms an antirectangle.
      Also, observe that for every pair of corner, one from the top-right quadrant and the other from the bottom-left quadrant, there exist a rectangle containing them and in $P$. 
    \end{proof}

    An easy corollary of the theorem is the following.
    
    \begin{corollary}
      \label{cor:antirectangle}
      There exists simple polygons for which the local optimum solution can be arbitrarily smaller than the global optimum solution for the Maximum Antirectangle problem.
    \end{corollary}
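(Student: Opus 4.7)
The plan is to instantiate Theorem~\ref{thm:antirectangle} with a small fixed $r$ and a large $s$, and argue that the antirectangle of size $r$ is a local optimum whose value is much smaller than the global optimum of size $s$. Let $k$ be the local search parameter (which is bounded by a constant $c/\epsilon^2$ if we want a $(1+\epsilon)$-approximation via the Mustafa--Ray framework). First I would fix any integer $r > k$ and any integer $s \gg r$, and apply Theorem~\ref{thm:antirectangle} to obtain a simple polygon $P$ and a point set $Q = A \uplus B \subset P$ with $|A| = r$, $|B| = s$, such that the visibility graph on $Q$ is the complete bipartite graph $K_{r,s}$ with bipartition $(A,B)$.

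Next, I would observe that since the visibility graph induced on $Q$ is $K_{r,s}$, the only antirectangles contained in $Q$ are subsets of $A$ or subsets of $B$; in particular the two maximal antirectangles are $A$ itself and $B$ itself, and there is no ``mixed'' antirectangle of size greater than $1$. Consequently $A$ is a feasible local-search solution of size $r$, while the global optimum is at least $|B| = s$.

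Then I would show that $A$ is a local optimum under the $k$-swap local search rule. A swap improves the solution only by removing some $X \subseteq A$ with $|X| \le k$ and adding some $Y$ of size $|Y| > |X|$ such that $(A \setminus X) \cup Y$ is still an antirectangle. Any point of $Y \setminus A$ must lie in $B$ (the only other candidates in the finite ground set), but every point of $B$ is adjacent in the visibility graph to every point of $A$. Hence $Y \cap B \ne \emptyset$ forces $A \setminus X = \emptyset$, i.e.\ $|X| = r$, contradicting $|X| \le k < r$. Therefore the only admissible $Y$ lies entirely in $A$, which cannot increase the solution size. Thus no improving swap exists and $A$ is a $k$-local optimum.

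Finally, the locality gap is at least $s/r$, which is unbounded as $s \to \infty$ for any fixed local search parameter $k$; this yields the corollary. The main (and essentially only) subtlety will be justifying that the relevant candidate set for local search can be taken to be $Q$; I would handle this by noting that in the construction of Figure~\ref{fig:antirectangle} every antirectangle of the polygon can be translated to convex corners, so restricting the local search candidate set to the convex corners (which form exactly $A \cup B$) does not change the optimum and yields the clean $K_{r,s}$ visibility graph that drives the argument.
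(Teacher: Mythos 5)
Your proposal is correct and follows essentially the same route as the paper: instantiate the $K_{r,s}$ construction of Theorem~\ref{thm:antirectangle} and start the local search on the smaller side of the biclique, so that no swap of size at most $k$ can ever introduce a point of the larger side, giving an unbounded locality gap (the paper merely swaps the roles of $r$ and $s$, taking the large side as the global optimum and the $O(\log n)$-size side as the stuck initial solution). Your explicit justification that an improving swap would have to delete the entire current solution, and your remark about restricting candidates to convex corners, are slightly more detailed than the paper's one-line argument but do not change the approach.
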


    \begin{proof}
      Consider Figure~\ref{fig:antirectangle}. Assume $r = \Omega(n)$ and $s = O(\log n)$.
      The global optimum solution size is $r$, one such solution is picking all the convex corners of the polygon in the top-right quadrant.
      If the initial feasible solution for the local search algorithm is the set of corners in the bottom-left quadrant, then it can never improve its solution and that would be the local optimum solution.
    \end{proof}

\section{Conclusion}
\label{sec:open}

It is easy to see that $\theta_c \le \theta_b \le \theta$, where $\theta, \theta_b, \theta_c$ are the sizes of the minimum interior, boundary, and corner covers, respectively.
What is not known are the upper bounds. In particular, what is the smallest value of $\beta$ such that $\theta \le \beta\cdot\theta_b$? Thus, our PTAS for the boundary cover problem for simple polygons would imply a $(\beta + \epsilon)$-approximation for the corresponding interior cover problem.
Heinrich-Litan and L\"{u}bbecke \cite{H-LL2006JEA}, presented a number of computational results on the Polygon Covering problem. Notably, their experiments suggest $\theta \le 2 \theta_b$. An open question is to prove $\beta$ to be less than $2$, which would improve the $2$-approximation algorithm of Franzblau \cite{franzblau1989performance}. In Figure~\ref{fig:beta-lower-bound}, we present a lower bound instance that implies $\beta \ge 3/2$.

\begin{SCfigure}[1][!h]
  \includegraphics[width=0.35\textwidth]{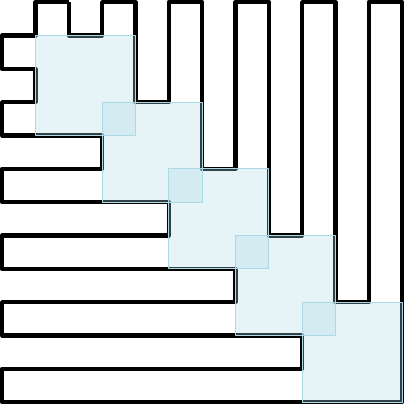}
  \caption{The boundary of the polygon is drawn with thick black lines. Here the rectangles in the minimum boundary cover are $6$ thin vertical rectangles and $6$ wide horizontal rectangles. Additionally for the minimum interior cover, we have $5$ blue squares. Thus, $\theta_b = 12$ and $\theta = 17$. This polygon can be generalized such that for every $k \ge 2$, $\theta_b = 2k +2$ and $\theta = 3k + 2$. Thus, $\beta \ge 3/2$.}
  
  \label{fig:beta-lower-bound}
\end{SCfigure}


  By far, the most challenging open question is whether there exists an $O(1)$-approximation algorithm running in polynomial time for the \emph{Interior Cover} problem for orthogonal polygons possibly with holes? 
  Can Erd\H{o}s' question (see Subsection~\ref{sec:visi-graph}) be answered in affirmative? Does there exist a primal-dual algorithm that could do so by returning interior cover and antirectangle solutions whose sizes are within a constant factor of each other?

  Possibly, an easier task would be to settle the status of the \emph{Corner Cover} problem for simple polygons. To the best of our knowledge we neither know a polynomial time exact algorithm or a proof of its $NP$-hardness.

%
\newpage

\bibliography{ref}

\begin{thebibliography}{10}

\bibitem{applegate2007compressing}
David~A Applegate, Gruia Calinescu, David~S Johnson, Howard Karloff, Katrina
  Ligett, and Jia Wang.
\newblock Compressing rectilinear pictures and minimizing access control lists.
\newblock In {\em Proceedings of the eighteenth annual ACM-SIAM symposium on
  Discrete algorithms}, pages 1066--1075, 2007.

\bibitem{AGKMP01}
Vijay Arya, Naveen Garg, Rohit Khandekar, Adam Meyerson, Kamesh Munagala, and
  Vinayaka Pandit.
\newblock Local search heuristic for k-median and facility location problems.
\newblock In {\em STOC}, pages 21--29, 2001.

\bibitem{aschner2013approximation}
Rom Aschner, Matthew~J Katz, Gila Morgenstern, and Yelena Yuditsky.
\newblock Approximation schemes for covering and packing.
\newblock In {\em International Workshop on Algorithms and Computation}, pages
  89--100. Springer, 2013.

\bibitem{ashok2020local}
Pradeesha Ashok, Aniket Basu~Roy, and Sathish Govindarajan.
\newblock Local search strikes again: Ptas for variants of geometric covering
  and packing.
\newblock {\em Journal of Combinatorial Optimization}, 39(2):618--635, 2020.

\bibitem{ashur2022terrain}
Stav Ashur, Omrit Filtser, Matthew~J Katz, and Rachel Saban.
\newblock Terrain-like graphs: {PTASs} for guarding weakly-visible polygons and
  terrains.
\newblock {\em Computational Geometry}, 101:101832, 2022.

\bibitem{basu2018packing}
Aniket Basu~Roy, Sathish Govindarajan, Rajiv Raman, and Saurabh Ray.
\newblock Packing and covering with non-piercing regions.
\newblock {\em Discrete \& Computational Geometry}, 60:471--492, 2018.

\bibitem{berman1997complexities}
Piotr Berman and Bhaskar DasGupta.
\newblock Complexities of efficient solutions of rectilinear polygon cover
  problems.
\newblock {\em Algorithmica}, 17:331--356, 1997.

\bibitem{buchin2011planar}
Kevin Buchin, Marc~J van Kreveld, Henk Meijer, Bettina Speckmann, and KAB
  Verbeek.
\newblock On planar supports for hypergraphs.
\newblock {\em Journal of Graph Algorithms and Applications}, 15(4):533--549,
  2011.

\bibitem{chaiken1981covering}
Seth Chaiken, Daniel~J Kleitman, Michael Saks, and James Shearer.
\newblock Covering regions by rectangles.
\newblock {\em SIAM Journal on Algebraic Discrete Methods}, 2(4):394--410,
  1981.

\bibitem{Chan2012}
Timothy~M. Chan and Sariel Har{-}Peled.
\newblock Approximation algorithms for maximum independent set of pseudo-disks.
\newblock {\em Discrete {\&} Computational Geometry}, 48(2):373--392, 2012.

\bibitem{combarro2023constraint}
Manuel Combarro~Sim{\'o}n, Pierre Talbot, Gr{\'e}goire Danoy, Jedrzej Musial,
  Mohammed Alswaitti, and Pascal Bouvry.
\newblock Constraint model for the satellite image mosaic selection problem
  (short paper).
\newblock In {\em 29th International Conference on Principles and Practice of
  Constraint Programming (CP 2023)}. Schloss Dagstuhl-Leibniz-Zentrum f{\"u}r
  Informatik, 2023.

\bibitem{conn1987some}
H~Conn and J~O'Rourke.
\newblock Some restricted rectangle covering problems,".
\newblock In {\em Proceedings of the 1987 Allerton Conference}, 1987.

\bibitem{culberson1994covering}
Joseph~C Culberson and Robert~A Reckhow.
\newblock Covering polygons is hard.
\newblock {\em J. Algorithms}, 17(1):2--44, 1994.

\bibitem{de2006tremaux}
Hubert De~Fraysseix, Patrice~Ossona De~Mendez, and Pierre Rosenstiehl.
\newblock Tr{\'e}maux trees and planarity.
\newblock {\em International Journal of Foundations of Computer Science},
  17(05):1017--1029, 2006.

\bibitem{durocher2015duality}
Stephane Durocher and Robert Fraser.
\newblock Duality for geometric set cover and geometric hitting set problems on
  pseudodisks.
\newblock In {\em CCCG}, 2015.

\bibitem{franzblau1989performance}
Deborah~S. Franzblau.
\newblock Performance guarantees on a sweep-line heuristic for covering
  rectilinear polygons with rectangles.
\newblock {\em SIAM journal on discrete mathematics}, 2(3):307--321, 1989.

\bibitem{FRANZBLAU1984164}
D.S. Franzblau and D.J. Kleitman.
\newblock An algorithm for covering polygons with rectangles.
\newblock {\em Information and Control}, 63(3):164--189, 1984.

\bibitem{GareyJohnson79}
Michael~R. Garey and David~S. Johnson.
\newblock {\em Computers and intractability}, page 232.
\newblock A Series of Books in the Mathematical Sciences. W. H. Freeman and
  Co., San Francisco, CA, 1979.
\newblock A guide to the theory of NP-completeness.

\bibitem{abs-0809-2554}
Anupam Gupta and Kanat Tangwongsan.
\newblock Simpler analyses of local search algorithms for facility location.
\newblock {\em CoRR}, abs/0809.2554, 2008.

\bibitem{gyHori2019mobile}
Ervin Gy{\H{o}}ri and Tam{\'a}s~R{\'o}bert Mezei.
\newblock Mobile versus point guards.
\newblock {\em Discrete \& Computational Geometry}, 61:421--451, 2019.

\bibitem{Har-PeledQ15}
Sariel Har{-}Peled and Kent Quanrud.
\newblock Approximation algorithms for polynomial-expansion and low-density
  graphs.
\newblock In {\em Algorithms -- {ESA} 2015 - 23rd Annual European Symposium,
  Patras, Greece, September 14--16, 2015, Proceedings}, pages 717--728, 2015.

\bibitem{H-LL2006JEA}
Laura Heinrich-Litan and Marco~E. L\"{u}bbecke.
\newblock Rectangle covers revisited computationally.
\newblock {\em ACM J. Exp. Algorithmics}, 11:2.6–es, feb 2007.
\newblock \href {https://doi.org/10.1145/1187436.1216583}
  {\path{doi:10.1145/1187436.1216583}}.

\bibitem{johnson1987hypergraph}
David~S Johnson and Henry~O Pollak.
\newblock Hypergraph planarity and the complexity of drawing venn diagrams.
\newblock {\em Journal of graph theory}, 11(3):309--325, 1987.

\bibitem{kaufmann2008subdivision}
Michael Kaufmann, Marc van Kreveld, and Bettina Speckmann.
\newblock Subdivision drawings of hypergraphs.
\newblock In {\em International Symposium on Graph Drawing}, pages 396--407.
  Springer, 2008.

\bibitem{koch2023hybrid}
Ivo Koch and Javier Marenco.
\newblock A hybrid heuristic for the rectilinear picture compression problem.
\newblock {\em 4OR}, 21(2):329--358, 2023.

\bibitem{krohn2014guarding}
Erik Krohn, Matt Gibson, Gaurav Kanade, and Kasturi Varadarajan.
\newblock Guarding terrains via local search.
\newblock {\em Journal of Computational Geometry}, 5(1):168--178, 2014.

\bibitem{kumar2003covering}
VS~Anil Kumar and H~Ramesh.
\newblock Covering rectilinear polygons with axis-parallel rectangles.
\newblock {\em SIAM journal on computing}, 32(6):1509--1541, 2003.

\bibitem{lubiw1990boolean}
Anna Lubiw.
\newblock The boolean basis problem and how to cover some polygons by
  rectangles.
\newblock {\em SIAM Journal on Discrete Mathematics}, 3(1):98--115, 1990.

\bibitem{masek1979some}
W~Masek.
\newblock Some {NP}-complete set covering problems. manuscript.
\newblock 1979.

\bibitem{MR10}
Nabil~H. Mustafa and Saurabh Ray.
\newblock Improved results on geometric hitting set problems.
\newblock {\em Discrete {\&} Computational Geometry}, 44(4):883--895, 2010.

\bibitem{raman2020constructing}
Rajiv Raman and Saurabh Ray.
\newblock Constructing planar support for non-piercing regions.
\newblock {\em Discrete \& Computational Geometry}, 64(3):1098--1122, 2020.

\bibitem{raman2023hypergraph}
Rajiv Raman and Karamjeet Singh.
\newblock On hypergraph supports.
\newblock {\em arXiv preprint arXiv:2303.16515}, 2023.

\bibitem{roughgarden2016communication}
Tim Roughgarden et~al.
\newblock Communication complexity (for algorithm designers).
\newblock {\em Foundations and Trends{\textregistered} in Theoretical Computer
  Science}, 11(3--4):217--404, 2016.

\end{thebibliography}

\newpage
 {\Huge{\bf Appendices}}
\appendix
\section{Missing Details from Section~\ref{sec:star}}
\label{sec:star-appendix}

\begin{lemma}
  \label{lem:corner}
   Let $N \in \RR_N$ such that it contains the top-right corner of $R_c$. If there exists $R \in \RR$ such that $R \cap N \cap \dP \neq \emptyset$, then $R$ is either aligned to the top side and/or the right side of $R_c$.
\end{lemma}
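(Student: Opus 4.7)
The plan is to combine three structural facts: $\dP \cap \mathrm{int}(R_c) = \dP \cap \mathrm{int}(N) = \emptyset$ (since $R_c, N \subseteq P$); the kernel hypothesis, which forces $R_c$ to cover every multiply-covered boundary point; and the maximality of $R_c$ and $N$. I fix coordinates so that $R_c = [0,w]\times[0,h]$, so the assumption that $N$ has a corner intersection with $R_c$ at the top-right corner translates to $N = [x_1,x_2]\times[y_1,y_2]$ with $0 < x_1 < w < x_2$ and $0 < y_1 < h < y_2$ (the strict inequalities coming from $(w,h) \in \mathrm{int}(N)$ and $(x_1,y_1) \in \mathrm{int}(R_c)$).

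Given $p \in R \cap N \cap \dP$, the kernel hypothesis gives $R_c \in \RR_p$ (since $|\RR_p|\ge 2$), so $p \in R_c$; combined with the interior-avoidance facts, $p \in \partial R_c \cap \partial N$. Using the strict inequalities above, the sides $p_x = 0$ and $p_y = 0$ are both impossible (they would place $p$ outside $N$), so $p_x = w$ or $p_y = h$. For $R \in \RR_V$, written as $R = [\ell,r]\times[y_B,y_T]$ with $r\le w$, the condition $p_x = w$ immediately forces $r = w$, i.e., $R$ is right-aligned to $R_c$; the symmetric derivation for $R \in \RR_H$ with $p_y = h$ yields top-alignment. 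So it suffices to rule out the two cross-cases: $R \in \RR_V$ with $p_y = h$, and $R \in \RR_H$ with $p_x = w$.

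For $R \in \RR_V$ with $p_y = h$: the constraint $p\in\partial R$ (with $y_B<0<h<y_T$) gives $p_x \in \{\ell,r\}$, and together with $p\in\partial N$ and $\ell,r\le w<x_2$ this forces $p = (x_1,h)$ and either $r = x_1$ or $\ell = x_1$. In the subcase $r = x_1$, three of the four quadrant-neighborhoods of $(x_1,h)$ lie in $P$ (via $R_c$ to the left-below and $N$ to the right-above), so $p\in\dP$ demands the upper-left quadrant be outside $P$; but the upper-left piece of $R$, namely $[\ell,x_1)\times(h,y_T]$ with $\ell<x_1$, would need to be in $P$, a contradiction. In the subcase $\ell = x_1$, the maximality of $N$ produces a left-blocker $(x_1,y^*)$ with $y^*\in(y_1,y_2)$; avoidance of $\mathrm{int}(R_c)$ rules out $y^*\in(y_1,h)$, and the fact that the strip $[x_1-\epsilon,x_1]\times\{h\}$ lies in $R_c\subseteq P$ (so that the extension is not blocked at height $h$) rules out $y^* = h$, forcing $y^* > h$ strictly. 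If $y^*\le y_T$, then $(x_1,y^*)\in R\cap N\cap\dP$ lies above $R_c$, violating the kernel; if $y^*>y_T$, then the top-blocker of $R$ at $(x_T^R,y_T)$ has $x_T^R\in(x_1,r)\subseteq(x_1,x_2)$ and $y_T\in(h,y^*)\subseteq(y_1,y_2)$, placing it strictly inside $N$ and contradicting $\dP\cap\mathrm{int}(N)=\emptyset$. The case $R \in \RR_H$ with $p_x = w$ is handled by the obvious diagonal reflection.

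The main subtlety I foresee is the $\ell = x_1$ subcase, which is the only place where both the kernel hypothesis and the maximality of $N$ must be invoked simultaneously. The kernel is genuinely needed: without it, it is possible to build simple polygons (e.g., by attaching a rectangular tab below $R_c$) in which a non-aligned $R\in\RR_V$ meets $N$ along a nontrivial arc of $\dP$ strictly above $R_c$, so getting the strict inequality $y^*>h$ and the subsequent kernel-violation step right is where the technical care has to go.
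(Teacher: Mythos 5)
Your first half matches the paper: using properness/the kernel plus the fact that $R_c, N \subseteq P$ keep $\dP$ out of their interiors, any point of $R \cap N \cap \dP$ must be one of the two points of $\partial N \cap \partial R_c$, namely $(x_1,h)$ on the top side of $R_c$ and $(w,y_1)$ on its right side. The problem is in your case split afterwards. You claim it "suffices to rule out" the cross-cases ($R \in \RR_V$ containing $(x_1,h)$, and $R\in\RR_H$ containing $(w,y_1)$), but these cases are not supposed to be impossible: a vertical rectangle whose top side lies exactly on the line $y=h$ (i.e., one that is top-aligned with $R_c$) can perfectly well contain $(x_1,h)$, and this is precisely one of the allowed conclusions of the lemma. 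The paper's later development depends on such configurations existing --- see the point $p \in V_i \cap H_\ell \cap N_\nearrow \cap \dP$ in Figure~\ref{fig:single-corner} and the edge $(V_5,N_\nearrow)$ in Figure~\ref{fig:kernel-corner}, which is drawn exactly because a top-aligned $V_5 \in \RR^*_V$ meets $N_\nearrow$ on $\dP$. Your parameterization smuggles in strict piercing ($y_B<0<h<y_T$), so top-aligned vertical rectangles (and, symmetrically, right-aligned horizontal rectangles at $(w,y_1)$) are either left out of the case analysis or wrongly declared contradictory; either way the lemma is not proved for them. The paper instead argues uniformly: any $R$ containing $(x_1,h)$ must have its top side on the line $y=h$ (shared top blocker), and the kernel is used only to exclude $R$ having its left side on the left side of $N$, which also disposes of corner rectangles as candidates for $R$ --- a case your proposal never addresses, even though the lemma quantifies over all $R \in \RR$ and is later used (Observation~\ref{lem:corner-disjoint}) exactly for corner--corner pairs.

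A secondary flaw sits inside your $\ell = x_1$ subcase. Your justification that the left-side boundary point of $N$ satisfies $y^* > h$ ("the extension is not blocked at height $h$") is incorrect: the leftward extension of $N$ \emph{is} blocked at $(x_1,h)$ itself, because that point is a reflex corner of $P$ whose exterior quadrant is the upper-left one (the lower half is inside via $R_c$, the upper-right via $N$). The correct repair is to use that reflex corner: one of its two boundary edges must run upward along $x = x_1$ (it cannot run downward, since points $(x_1,h-\epsilon)$ are interior to $R_c$), so there are points $p' = (x_1, h+\epsilon) \in \dP$ lying in both $N$ and $R$ but outside $R_c$, contradicting that $R_c$ is the kernel --- which is exactly the point $p'$ in the paper's Figure~\ref{fig:corner}. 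So that subcase is salvageable, but as written the step does not go through, and the overall framing ("rule out the cross-cases") proves a stronger statement that is false.
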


\begin{figure}
        \centering
        \includegraphics[width=0.5\textwidth]{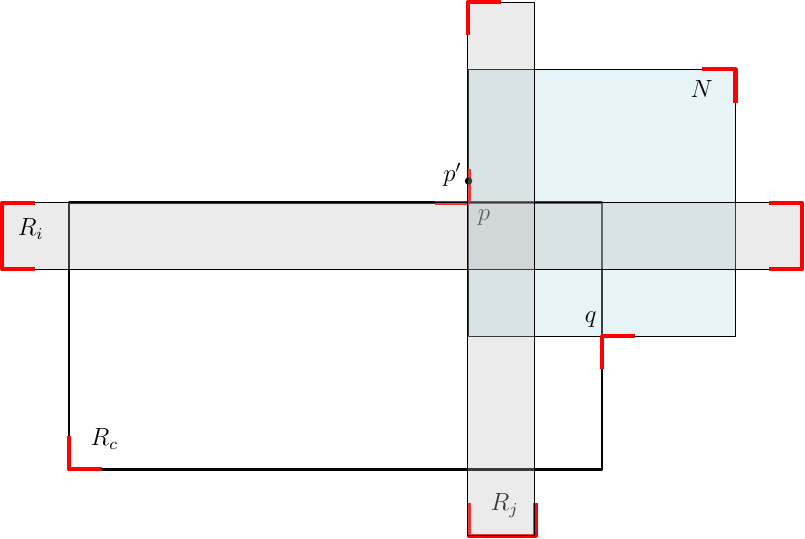}
        \caption{Let $\RR$ be a proper family with $R_c$ as the kernel rectangle. The above figure shows parts of $\dP$ in bold red. Corner intersection $N \cap R_c \cap \dP = \{p,q\}$. If $R_j$ belongs to $\RR$ then $p'$ witnesses the intersection $N \cap R_j \cap \dP$ but it is outside $R_c$, which implies $R_c$ is not a kernel. However, $R_i$ can be part of $\RR$ without contradicting the fact that $R_c$ is the kernel rectangle.}
        \label{fig:corner}
\end{figure}

  \begin{proof}
    Given $N$, a corner rectangle, there exists two points $p$ and $q$ where $\partial N \cap \partial R_c \neq \emptyset$. See Figure~\ref{fig:corner}.
    As $N \in \RR$ and $\RR$ is proper, at least one or both $p$ and $q$ belongs to $\dP$.
    If there exists $R \in \RR\setminus\{R_c\}$ such that $R \cap N \cap R_c \cap \dP \neq \emptyset$ then either $p$ or $q$ or both are contained in $R$. Without loss of generality, assume that $p \in R\cap \dP$. Then observe that the top side of $R$ is aligned with the top side of $R_c$. The left side of $R$ cannot be aligned with the left side of $N$, because otherwise there exists a point $p' \in \dP$ such that $p' \in (N \cap R \cap \dP)\setminus R_c$, which implies $R_c$ is not a kernel rectangle. See Figure~\ref{fig:corner}.
    Similarly, if $q \in R \cap \dP$ then the right side of $R$ is aligned with the right side of $R_c$.
  \end{proof}


  The following observation can be made from Lemma~\ref{lem:corner}.
\begin{observation}
  \label{lem:corner-disjoint}
    For $R_i, R_j \in \RR_N$, $R_i \cap R_j \cap \dP = \emptyset$.
  \end{observation}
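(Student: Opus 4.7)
The plan is to derive the observation as an immediate consequence of Lemma~\ref{lem:corner} together with the dichotomy between corner and piercing intersection patterns established in the preliminaries. I would argue by contradiction, assuming that there exist distinct $R_i, R_j \in \RR_N$ with $R_i \cap R_j \cap \dP \neq \emptyset$.

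First, since $R_i \in \RR_N$, it has a corner intersection with $R_c$, so $R_i$ contains exactly one corner of $R_c$ in its interior. Without loss of generality, assume this corner is the top-right corner of $R_c$ (the other three cases are symmetric). Now I would apply Lemma~\ref{lem:corner} with $N = R_i$ and $R = R_j$: the hypothesis $R_i \cap R_j \cap \dP \neq \emptyset$ is exactly what the lemma requires, and the conclusion forces $R_j$ to be aligned with the top side and/or the right side of $R_c$.

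The key observation to close the argument is that ``aligned'' is defined as a special case of piercing intersection, whereas the membership $R_j \in \RR_N$ means by definition that $R_j$ has a corner intersection with $R_c$. Since the preliminaries classify every non-empty intersection between two maximal rectangles as either a corner intersection or a piercing intersection, and these two patterns are mutually exclusive, $R_j$ cannot simultaneously be aligned with $R_c$ and have a corner intersection with $R_c$. This yields the desired contradiction and completes the proof.

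The argument is essentially a one-liner, so there is no real obstacle: the only subtlety to guard against is the degenerate situation where $R_i$ and $R_j$ happen to contain the same corner of $R_c$, but the invocation of Lemma~\ref{lem:corner} is valid regardless of whether $R_j$'s corner intersection with $R_c$ is at the same corner as $R_i$'s or at a different one, so the same contradiction applies uniformly.
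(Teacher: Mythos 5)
Your proposal is correct and takes essentially the same route as the paper: the paper gives no separate proof, stating the observation as an immediate consequence of Lemma~\ref{lem:corner}, which is exactly your argument (apply the lemma with $N=R_i$, $R=R_j$, and note that the alignment conclusion is incompatible with the corner intersection with $R_c$ that membership in $\RR_N$ requires). The only minor caveat is that ``aligned to the top side and/or the right side of $R_c$'' in Lemma~\ref{lem:corner} is meant in the shared-blocker (collinear-side) sense rather than literally as the piercing-intersection definition of aligned rectangles, but under either reading a rectangle so aligned cannot contain a corner of $R_c$ in its interior, so your contradiction via the corner/piercing dichotomy goes through.
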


  \begin{lemma}
    \label{lem:corner-pattern}
    Given a proper family $\RR$ with kernel rectangle $R_c$, there are at most $2$ corner rectangles, $N_i$ and $N_j$ containing a fixed corner of $R_c$. Furthermore, $N_i$ and $N_j$ pierce each other in a non-aligned way.
  \end{lemma}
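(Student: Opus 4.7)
The plan combines the proper property (each $N \in \RR$ shares a boundary point of $P$ with the kernel $R_c$) with the kernel condition (whenever two rectangles of $\RR$ share a point of $\dP$, so does $R_c$). Without loss of generality, fix the top-right corner $p = (p_x, p_y)$ of $R_c$ and parameterize any corner rectangle $N = [a, c] \times [b, d]$ containing $p$ in its interior by $a < p_x < c$, $b < p_y < d$, with $(a, b) \in \mathrm{int}(R_c)$.

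First, I would pin down the location of the properness witness. Since $R_c, N \subseteq P$, the interiors of $R_c$ and $N$ are disjoint from $\dP$; moreover $R_c \cap N$ is a small rectangle whose relative boundary splits into arcs of $\partial R_c$ and of $\partial N$. Therefore the witness $q \in R_c \cap N \cap \dP$ must lie at one of the two points where $\partial R_c$ and $\partial N$ cross, namely $(a, p_y)$ or $(p_x, b)$. Call $N$ \emph{type T} if $(a, p_y) \in \dP$ and \emph{type R} if $(p_x, b) \in \dP$; every corner rectangle is at least one of these types.

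Next, I would argue that at most one rectangle of each type exists. Suppose $N_1, N_2$ are both type T. If $a_1 \ne a_2$, say $a_1 < a_2$, then $(a_2, p_y)$ lies in $\mathrm{int}(N_1) \cap \dP$, a contradiction. Hence $a_1 = a_2 = a$, and I would classify the admissible local structures of $\dP$ at $(a, p_y)$: a horizontal $\dP$-edge extending rightward puts $\dP$ into $\mathrm{int}(N_1)$; a vertical $\dP$-edge extending downward puts $\dP$ into $\mathrm{int}(R_c)$; and a vertical edge extending upward creates $\dP$-points on the shared left edge of $N_1$ and $N_2$ just above $R_c$, which lie in $N_1 \cap N_2 \cap \dP \setminus R_c$, violating the fact that $R_c$ is the kernel. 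These exhaust the configurations, so at most one type T rectangle exists; by symmetry, at most one type R rectangle, giving at most two corner rectangles at $p$.

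Finally, for the piercing statement, suppose two such rectangles exist, $N_1$ (type T only) and $N_2$ (type R only). The interior argument yields strict $a_1 < a_2$ and $b_2 < b_1$. Using that no blocker of $N_i$ may enter the other rectangle's interior or $\mathrm{int}(R_c)$, I would rule out the corner-intersection configuration $c_1 < c_2$, $d_1 > d_2$: the bottom blocker of $N_1$ would be confined to $[a_1, a_2] \times \{b_1\}$ in order to avoid $\mathrm{int}(N_2)$, yet this segment lies in $\mathrm{int}(R_c)$. The only surviving configuration is $c_1 > c_2$ and $d_2 > d_1$, with the equalities $c_1 = c_2$ and $d_1 = d_2$ excluded by repeating the kernel argument from the previous step along a shared right or top edge sitting outside $R_c$; this is exactly the non-aligned piercing pattern. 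The main obstacle will be the exhaustive polygon-local case analysis in the previous step, whose recurring leverage point is that any piece of $\dP$ shared between $N_1$ and $N_2$ outside $R_c$ contradicts the kernel condition.
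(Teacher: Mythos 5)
Your route is genuinely different from the paper's. The paper first pins down the pairwise intersection pattern: a corner intersection between two corner rectangles is impossible (one subcase kills maximality, the other kills properness), an aligned piercing is impossible because the contained side must meet $\dP$, contradicting the fact that corner rectangles are disjoint on $\dP$; then a third corner rectangle is excluded because the middle one in the $\prec$-order would satisfy $N_k\cap R_c\cap\dP=\emptyset$, violating properness. You instead bound the count first, by classifying each corner rectangle according to which of the two crossings of $\partial N$ with $\partial R_c$ carries the properness witness, and showing each class has at most one member via a local analysis of $\dP$ at that crossing (rightward enters $\mathrm{int}(N_1)$, downward enters $\mathrm{int}(R_c)$, upward runs along the shared left edge and violates the kernel condition). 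This half is correct and self-contained; note also that strictness of $a_1<a_2$ and $b_2<b_1$ follows immediately from it (equality would make both rectangles the same type), not from the interior argument alone.

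The gap is in the second half, where the piercing pattern is derived. After excluding only the configuration $c_1<c_2$, $d_1>d_2$, you declare $c_1>c_2$, $d_2>d_1$ to be ``the only surviving configuration''; but the cases $c_1<c_2,\,d_1<d_2$ and $c_1>c_2,\,d_1>d_2$ (where an entire side of one rectangle lies in the other's interior) are never addressed, and your proposed exclusion of the equalities does not work as stated: for $c_1=c_2$, the ``kernel argument along the shared right edge'' needs a $\dP$ point on the shared portion of the two right sides, and nothing guarantees that the right blockers land there rather than on the private parts above $\min(d_1,d_2)$ or below $b_1$. Fortunately your own blocker-confinement idea closes all of this at once: the bottom side of $N_1$ lies in $\mathrm{int}(R_c)$ for $x\in[a_1,a_2]$, in $\mathrm{int}(N_2)$ for $x\in(a_2,\min(c_1,c_2))$, and the point $(c_2,b_1)$ is excluded by the kernel condition, so a bottom blocker of $N_1$ can exist only if $c_1>c_2$ strictly; symmetrically, the left side of $N_2$ admits a blocker only if $d_2>d_1$ strictly. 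Run in this form, the argument disposes of all remaining sign patterns and the equalities simultaneously and yields exactly the non-aligned piercing, so the proof is repairable with tools you already introduced.
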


 \begin{proof}
  Let $N_i$ and $N_j$ be two corner rectangles, i.e., $N_i, N_j \in \RR_N$ and they contain the top-right corner of $R_c$. We do a case analysis below and conclude that $N_i$ and $N_j$ can only have a piercing intersection in a non-aligned way.
    \begin{enumerate}
    \item
      Assume $N_i$ and $N_j$ have a corner intersection between themselves. Without loss of generality, further assume that the left side of $N_i$ is to the left of that of $N_j$. There are two subcases. First, the top side of $N_i$ is above that of $N_j$. Second, the top side of $N_i$ is below that of $N_j$.

      In the first subcase, the left side of $N_j$ is completely in the interior of $N_i$ and $R_c$, which implies that $N_j$ is not maximal. Hence, contradiction (see Figure~\ref{fig:corner-pattern}, part $a$). In the second subcase, $N_j \cap R_c \cap \dP = \emptyset$ which implies $\RR$ is not a proper family, hence a contradiction again (see Figure~\ref{fig:corner-pattern}, part $b$).

    \item
      Assume $N_i$ and $N_j$ have a piercing intersection. The first subcase is that they are aligned to each other and the second subcase is that they are not.
      
      Without loss of generality, assume one of the sides of $N_j$, $s_j$ is aligned to $s_i$, the corresponding side of $N_i$ and $s_j \subset s_i$. As $N_j$ is maximal, $s_j \cap \dP \neq \emptyset$. This also implies that $s_i \cap s_j \cap \dP \neq \emptyset$. Then it contradicts Observation~\ref{lem:corner-disjoint}
      (see Figure~\ref{fig:corner-pattern}, part $c$).

      For an example of $N_i$ and $N_j$ having a non-aligned piercing intersection, see part $d$ of Figure~\ref{fig:corner-pattern}.
      
    \end{enumerate}

    Next assume, that there is a third rectangle $N_k$ that also contains the top-right corner of $R_c$ and $N_i \prec N_k \prec N_j$. This implies $N_k \cap R_c \cap \dP = \emptyset$, which means $\RR$ is not proper and hence contradiction (see Figure~\ref{fig:corner-pattern}, part $d$).
      
  \end{proof}


  \begin{figure}[h]
        \centering
        \includegraphics[width=0.8\textwidth]{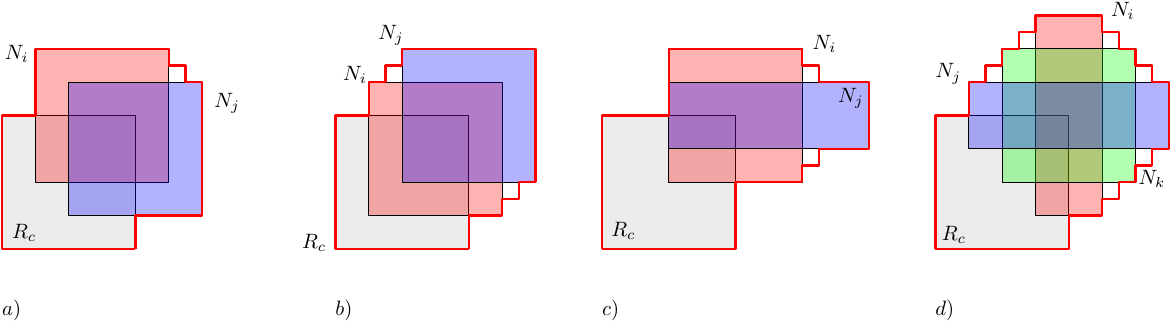}
        \caption{The subfigures are to support the various subcases of the proof of Lemma~\ref{lem:corner-pattern}. $\dP$ is drawn in thick red. The fill colors of rectangles $R_c, N_i, N_j, N_k$ are gray, red, blue, and green, respectively.}
        \label{fig:corner-pattern}
    \end{figure}

    As a consequence of Lemma~\ref{lem:corner-pattern}, we know that there are at most $8$ corner rectangles in $\RR$, at most $2$ per corner of $R_c$.
If there is a unique corner rectangle containing the top-right corner of $R_c$, we shall denote it by $N_{\nearrow}$. If there are two of them then we shall denote them by $N^v_\nearrow$ and $N^h_\nearrow$, where $N^v_\nearrow \prec N^h_\nearrow$. In general, a corner rectangle is denoted by $N^a_b$, where $a \in \{v,h,\emptyset\}$ and $b \in\{\swarrow, \nwarrow, \nearrow, \searrow\}$.

  \subsection{Vertical and Horizontal Rectangles}
  We define the subset of maximally vertical rectangles with respect to the partial order $\prec$ as $\RR^*_V := \max\{R \in \RR_V\}$. Similarly, we define the subset of minimally horizontal rectangles $\RR^*_H := \min\{R \in \RR_H\}$.

\begin{lemma}
    \label{lem:vertical-triple}
There does not exist $3$ rectangles $R_i, R_j, R_k \in \RR^*_V$ (resp. $\RR^*_H$) such that each one has a corner intersection with the other two in $\dP$, i.e., $R_i \cap R_j \cap \dP \neq \emptyset$, $R_j \cap R_k \cap \dP \neq \emptyset$, and $R_i \cap R_k \cap \dP \neq \emptyset$.
\end{lemma}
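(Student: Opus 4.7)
My plan is to proceed by contradiction and case analysis on the relative placement of the three rectangles. Suppose $R_1, R_2, R_3 \in \RR^*_V$ all pairwise have corner intersections in $\dP$. Order by left edge: WLOG $a_1 < a_2 < a_3$. Since every pair has a corner intersection (horizontal overlap, neither contains the other), the $x$-coordinates interleave to give $a_1 < a_2 < a_3 < b_1 < b_2 < b_3$. For each pair there are only two possible vertical orientations (the right rectangle is either higher or lower than the left one), and consistency of these orientations picks out a total vertical ordering of the three rectangles. Up to the top--bottom symmetry $y \mapsto -y$, there are essentially two cases: the vertical ordering matches the horizontal ordering (the \emph{monotone} case), or the horizontally-middle rectangle $R_2$ is the vertically-extremal one (the \emph{non-monotone} case).

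In the monotone case, WLOG $R_1$ is lowest and $R_3$ is highest. The three corner-intersection conditions force $c_1 < c_2 < c_3 < d_1 < d_2 < d_3$. Then the intersection $R_1 \cap R_3 = [a_3, b_1] \times [c_3, d_1]$ sits strictly inside the open interior of $R_2 = [a_2,b_2] \times [c_2, d_2]$, since $a_2 < a_3$, $b_1 < b_2$, $c_2 < c_3$, and $d_1 < d_2$ all hold strictly. Because $R_2 \subseteq P$, we get $R_1 \cap R_3 \subseteq \mathrm{int}(R_2) \subseteq \mathrm{int}(P)$, contradicting $R_1 \cap R_3 \cap \dP \neq \emptyset$.

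The non-monotone case is the main difficulty, and here I would exploit that $R_2$ is containment-maximal in $P$ (inherited from $R_2 \in \RR^*_V \subseteq \RR \subseteq \RR^K(P)$). WLOG $R_2$ is vertically the highest (otherwise flip $y$), so $c_2 > \max(c_1, c_3)$ and $d_2 > \max(d_1, d_3)$. Let $m := \max(c_1, c_3) < c_2$. I claim the enlarged rectangle $R_2' := [a_2,b_2] \times [m, d_2]$ is contained in $P$, contradicting the maximality of $R_2$. To see this, I would split the strip $[a_2, b_2]$ into sub-strips $[a_2, a_3]$, $[a_3, b_1]$, $[b_1, b_2]$, and on each verify that $R_1 \cup R_2 \cup R_3$ covers the relevant slice of $R_2'$: on $[a_2, b_1]$ the rectangle $R_1 \subseteq P$ reaches down to $y = c_1$, and on $[a_3, b_2]$ the rectangle $R_3 \subseteq P$ reaches down to $y = c_3$; combined with $R_2$ itself, the vertical overlaps forced by the pairwise corner-intersection orientations glue the three extents into the single interval $[m, d_2]$ at every $x \in [a_2, b_2]$.

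The main obstacle is making the covering argument uniform across the two sub-cases $R_1 < R_3 < R_2$ and $R_3 < R_1 < R_2$ of the non-monotone case; in each sub-case one has to use a different pair of vertical-overlap inequalities (coming from the orientations of pairs $(1,2)$, $(2,3)$, and $(1,3)$) to conclude that $[c_1, d_1] \cup [c_2, d_2] \cup [c_3, d_3]$ already contains $[m, d_2]$ at every relevant $x$. Given this, $R_2 \subsetneq R_2' \subseteq P$ contradicts $R_2 \in \RR^K(P)$, completing the proof. The statement for $\RR^*_H$ follows by swapping the roles of horizontal and vertical throughout.
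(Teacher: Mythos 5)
Your proposal is correct and follows essentially the same route as the paper: the paper also splits into the two combinatorial configurations, ruling out the monotone (staircase) one because the intersection of the two extreme rectangles lies in the interior of the middle one and hence misses $\dP$, and ruling out the non-monotone one by a maximality violation (a side of one rectangle lies in the interior of the union of the other two, so it can be extended). Your coordinate-based verification of the extension $R_2'$ just makes explicit what the paper argues via its figure.
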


  
\begin{proof}
    Assume there exist $3$ rectangles, $R_i, R_j, R_k \in \RR$ such that $R_i \cap R_j \cap \dP \neq \emptyset$, $R_j \cap R_k \cap \dP \neq \emptyset$, and $R_i \cap R_k \cap \dP \neq \emptyset$.
    Combinatorially, triple of rectangles can be have pairwise corner intersection in two ways as shown in Figure~\ref{fig:vertical-triple}.
    If the triple intersect as shown on the left then $R_i \cap R_k \cap \dP = \emptyset$ which is a contradiction.
    If the triple intersect as shown on the right then the bottom side of $R_j$ and the right side of $R_i$ are completely in the interior of the union of the other two. This implies $R_i$ and $R_j$ are not maximal which is a contradiction.
  \end{proof}

 \begin{figure}[h]
        \centering
        \includegraphics[width=0.5\textwidth]{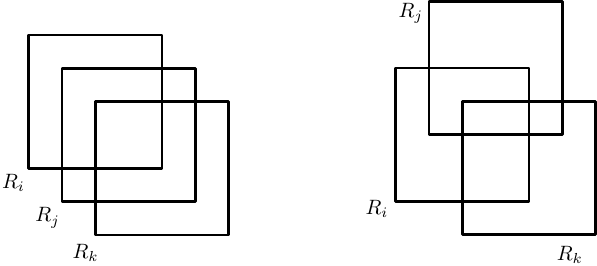}
        \caption{Three rectangles can intersect in two ways such that every pair has a corner intersection.}
        \label{fig:vertical-triple}
    \end{figure}


    Lemma~\ref{lem:vertical-triple} implies that we can linearly order the rectangles in $\RR^*_V$ (resp. $\RR^*_H$) as shown below.
    \begin{definition}
      \label{def:vert-order}
    Define a linear ordering $<_v$ on the rectangles in $\RR^*_V$, such that $R <_v R'$ if the left side of $R$ is to the left of that of $R'$, for every $R, R' \in \RR^*_V$.
  \end{definition}

  Let $\RR^*_V = \{V_1, \dots, V_k\}$, where $V_1 <_v \dots <_v V_k$.

  \begin{definition}
    \label{def:hori-order}
    Define a linear ordering $<_h$ on the rectangles in $\RR^*_H$, such that $R <_h R'$ if the bottom side of $R$ is below that of $R'$, for every $R, R' \in \RR^*_H$.
  \end{definition}

  Let $\RR^*_H = \{H_1, \dots, H_\ell\}$, where $H_1 <_h \dots <_h H_\ell$.

  \begin{lemma}
    \label{lem:2cor-1pie}
    Let $R_i, R_j \in \RR^*_V$ such that $R_i \cap R_j \cap \dP \neq \emptyset$. There does not exist $R_k \in \RR_V$ such that $R_k \prec R_i$ and $R_k \prec R_j$.
  \end{lemma}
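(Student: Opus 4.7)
The plan is to derive a contradiction by exhibiting a boundary point that lies in two rectangles of $\RR$ yet not in $R_c$, thereby violating $R_c \in ker(\RR)$. Throughout, write $R_c = [a,b]\times[c,d]$, $R_i = [x_i^\ell, x_i^r]\times[y_i^b, y_i^t]$, and parameterise $R_j, R_k$ analogously.

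First I would extract any $p \in R_i \cap R_j \cap \dP$. Since $\{R_i, R_j\} \subseteq \RR_p$, the kernel condition forces $R_c \in \RR_p$, so $p \in R_c$. Because $R_i, R_j \in \RR^*_V$ are incomparable under $\prec$, they intersect in a corner (not a piercing); WLOG via $<_v$ take $R_i$ upper-left and $R_j$ lower-right, so $x_i^\ell < x_j^\ell < x_i^r < x_j^r$ and $y_j^b < y_i^b < y_j^t < y_i^t$. As $p \in \dP$ cannot lie in $\text{int}(R_i) \cup \text{int}(R_j) \subseteq \text{int}(P)$, it sits on $\partial R_i \cap \partial R_j$, and a short case analysis over which sides of $R_i, R_j$ contain $p$ (ruling out the diagonal cases by the strict corner-intersection inequalities) gives $p \in \{(x_i^r, y_j^t),\ (x_j^\ell, y_i^b)\}$. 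Combining with $p \in R_c$ (so $y_p \in [c,d]$) and the piercings $y_j^t \ge d$, $y_i^b \le c$ from $R_j, R_i \prec R_c$, one of the piercings is forced to be aligned; WLOG $y_j^t = d$ (the other case is symmetric, swapping top/bottom and left/right), and hence $y_i^t > y_j^t = d$ strictly.

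Next I would examine the candidate contradictory point $q := (x_k^\ell, y_i^t)$. By $R_k \prec R_i$ we have $x_k^\ell \in [x_i^\ell, x_i^r]$ and $y_i^t \in [y_k^b, y_k^t]$, so $q$ lies on the top side of $R_i$ and on the left side of $R_k$; in particular $\{R_i, R_k\} \subseteq \RR_q$. Crucially $y_q = y_i^t > d$, so $q \notin R_c$. If I can verify $q \in \dP$, then $|\RR_q| \ge 2$ while $R_c \notin \RR_q$, directly contradicting $R_c \in ker(\RR)$. A four-quadrant local analysis at $q$ shows the lower-left lies in $R_i$ (using $x_k^\ell > x_i^\ell$ from strictness of $R_k \prec R_i$, with a separate boundary-tracking argument in the aligned subcase $x_k^\ell = x_i^\ell$), the right quadrants lie in $R_k \cup R_i$, while $(x_k^\ell - \epsilon, y_i^t + \epsilon)$ is not contained in any of $R_c, R_i, R_j, R_k$ by direct coordinate checks ($y > d$, $y > y_i^t$, $y > y_j^t = d$, and $x < x_k^\ell$ respectively).

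The main obstacle is ruling out that some other rectangle $R' \in \RR$ covers the upper-left quadrant of $q$ and thereby places $q$ in $\text{int}(P)$. The intended resolution is structural: any such $R'$ must extend strictly above $y_i^t$ and strictly left of $x_k^\ell$, and by properness of $\RR$ we have $R' \cap R_c \cap \dP \neq \emptyset$, so $R'$ either pierces or corner-intersects $R_c$. Re-running the same local-analysis at the meeting corner of $R'$ with $R_k$'s left side (or with $R_i$'s top) produces a further boundary point lying in two members of $\RR$ but outside $R_c$, recycling the kernel violation. Packaging this into an inductive argument---for instance by choosing $R_k$ among all candidates piercing both $R_i$ and $R_j$ so as to minimise $y_k^t - d$---closes the proof.
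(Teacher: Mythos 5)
Your first half is sound: extracting $p\in R_i\cap R_j\cap\dP$, forcing $p\in R_c$ via the kernel, observing that two maximal elements of $\RR^*_V$ can only corner-intersect, locating $p$ at one of the two crossing points, and deducing the alignment $y_j^t=d$ (so $y_i^t>d$) are all correct. The gap is at the decisive step: you never establish $q:=(x_k^\ell,y_i^t)\in\dP$, and in general it is simply not true. Membership in $\dP$ is a property of the polygon, not of the family $\RR$: a proper family with a kernel need not cover $P$, let alone determine it locally, so checking that none of $R_c,R_i,R_j,R_k$ (or any other member of $\RR$) contains the upper-left quadrant of $q$ says nothing about whether $P$ contains a neighbourhood of $q$. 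For instance, $R_k$ can leave the top of $R_i$ through a wide opening of $P$ whose left wall lies far to the left of $x_k^\ell$, while the top blocker of $R_i$ sits outside $[x_k^\ell,x_k^r]$; then $q$ is an interior point of $P$ and your candidate contradiction point does not exist. Your fallback does not repair this: the rectangle $R'$ you want to recurse on need not belong to $\RR$ at all (the material of $P$ above $R_i$ need not be witnessed by any member of the family), and the proposed induction (minimising $y_k^t-d$) comes with no well-definedness or termination argument.

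The missing ingredient is properness of $R_k$ itself, which your write-up never uses, and this is exactly how the paper closes the argument. The paper splits on whether $R_k$ is aligned to $R_i$ or to $R_j$. If it is aligned to neither, i.e.\ $x_j^\ell<x_k^\ell\le x_k^r<x_i^r$, then (using your own facts $y_j^b<y_i^b\le c$, $y_j^t=d<y_i^t$) every point of $R_k\cap R_c=[x_k^\ell,x_k^r]\times[c,d]$ lies in $\mathrm{int}(R_i)\cup\mathrm{int}(R_j)\subseteq\mathrm{int}(P)$, so $R_k\cap R_c\cap\dP=\emptyset$, contradicting properness. If $R_k$ is aligned to $R_i$ (so $x_k^r=x_i^r$; alignment to $R_j$ is symmetric), the same interior computation confines any boundary point of $R_k\cap R_c$ to the crossing point $p=(x_i^r,d)$; since the other three quadrants at $p$ are covered by $R_i$, $R_j$, $R_c$, the boundary must leave $p$ upward along $x=x_i^r$, and a point $(x_i^r,d+\epsilon)$ on that edge lies in $R_i\cap R_k\cap\dP$ but outside $R_c$, contradicting $R_c\in ker(\RR)$. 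Your coordinate setup could be reused for this argument, but as written the proposal has a genuine hole where the contradiction is supposed to appear.
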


   \begin{proof}
    Without loss of generality, assume $R_i <_v R_j$. For the sake of contradiction, lets assume such an $R_k$ exists. Then $R_k$ should be within the left side of $R_j$ and the right side of $R_i$ in order to have a piercing intersection with them. If $R_k$ is aligned to $R_i$ then they have an intersection with $\dP$ outside $R_c$, which means $R_c$ is not a kernel rectangle, hence a contradiction (see the left half of Figure~\ref{fig:vertical-consecutive}). Same is the case if it is aligned to $R_j$. If it is aligned to neither $R_i$ nor $R_j$, then $R_k \cap R_c \cap \dP = \emptyset$, which means $\RR$ is not proper, hence a contradiction again (see the right half of Figure~\ref{fig:vertical-consecutive}).
  \end{proof}

 \begin{figure}[h]
        \centering
        \includegraphics[width=0.65\textwidth]{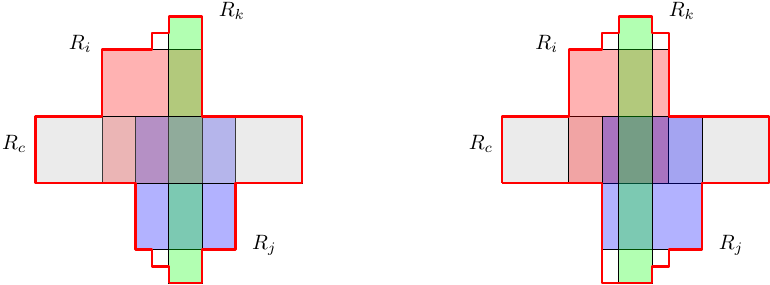}
        \caption{
          The above subfigures are to support Lemma~\ref{lem:2cor-1pie}.
        }
        \label{fig:vertical-consecutive}
    \end{figure}

    \begin{lemma}
      \label{lem:prec-star}
      Let $\RR^{\prec}_V(V_i) := \{R \in \RR_V \mid R \prec V_i\}$, for $V_i \in \RR^*_V$.
      There exists a support graph of $(\RR^{\prec}_V(V_i) \cup \{V_i\}, \dP)$ that is a star graph with $V_i$ as the center.
    \end{lemma}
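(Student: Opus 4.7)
The plan is to verify that the star graph $G$ with center vertex $V_i$ and leaves $\RR^{\prec}_V(V_i)$ is itself a support of $(\RR^{\prec}_V(V_i) \cup \{V_i\}, \dP)$. By the definition of a support graph, this reduces to checking that for every $p \in \dP$ the subgraph of $G$ induced on the hyperedge $\RR_p \cap (\RR^{\prec}_V(V_i) \cup \{V_i\})$ is connected. Since $G$ is a star, its induced subgraph on such a set is connected iff either the set has size at most one, or the center $V_i$ lies in the set. So the only nontrivial obligation will be the following: whenever two distinct rectangles $R, R' \in \RR^{\prec}_V(V_i)$ share a common point $p \in \dP$, $p$ must also lie in $V_i$.

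The first main step will be to invoke the kernel hypothesis that we maintain throughout Section~\ref{sec:star}, namely $ker(\RR) = \{R_c\}$. Since $R, R' \in \RR$ both contain $p$, we have $|\RR_p| \ge 2$, and then the definition of $ker(\RR)$ immediately yields $R_c \in \RR_p$, i.e., $p \in R_c$.

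To conclude that $p \in V_i$, I will chain the two piercing relations $R \prec V_i$ and $V_i \prec R_c$ (the latter holds because $V_i \in \RR^*_V \subseteq \RR_V$). Unpacking $R \prec V_i$ via Definition~\ref{defn:pierce-poset}, $R$'s horizontal extent is contained in $V_i$'s horizontal extent, so $p \in R$ places the $x$-coordinate of $p$ inside $V_i$'s horizontal range. Symmetrically, $V_i \prec R_c$ places $R_c$'s vertical extent inside $V_i$'s vertical extent, so $p \in R_c$ places the $y$-coordinate of $p$ inside $V_i$'s vertical range. Together these two containments force $p \in V_i$, completing the argument.

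The main conceptual step is observing that kernel membership of $R_c$ can be combined with the width-and-height monotonicity encoded in $\prec$ to sandwich $p$ inside $V_i$. The argument is essentially a two-step transitive unpacking of $\prec$, so no geometric case analysis of the various ways $R$ and $R'$ could meet at $p$ should be required.
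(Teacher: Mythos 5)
Your argument is correct, but it reaches the conclusion by a different route than the paper. The paper's proof first asserts, via Lemma~\ref{lem:2cor-1pie}, that $V_i$ is a kernel rectangle of the subfamily $(\RR^{\prec}_V(V_i) \cup \{V_i\}, \dP)$, and then invokes Observation~\ref{lem:kernel-star} to obtain the star support; note that the observation is stated for \emph{proper} families, so that route implicitly also needs properness of the subfamily. You instead verify the star support directly: you reduce it to the single obligation that any boundary point shared by two rectangles of $\RR^{\prec}_V(V_i)$ lies in $V_i$, and you discharge that obligation from the section's standing assumption $ker(\RR) = \{R_c\}$ (which puts the shared point inside $R_c$) combined with the coordinate containments encoded in Definition~\ref{defn:pierce-poset}: $R \prec V_i$ confines the point's $x$-coordinate to $V_i$'s horizontal extent, and $V_i \prec R_c$ confines its $y$-coordinate to $V_i$'s vertical extent. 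This is a more elementary and self-contained derivation of the same key fact ($V_i$ is a kernel rectangle of the subfamily): it bypasses the geometric case analysis behind Lemma~\ref{lem:2cor-1pie} and, because a star with all leaf edges present supports any hyperedge of size at least two that contains the center, it also avoids having to check properness of the subfamily altogether. The paper's route has the advantage of reusing already-established structural lemmata of the section, but your sandwich argument is arguably tighter and makes the dependence on the global kernel assumption explicit.
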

    \begin{proof}
      Lemma~\ref{lem:2cor-1pie} implies that $V_i$ is a kernel rectangle of $(\RR^{\prec}_V(V_i) \cup \{V_i\}, \dP)$. Observation~\ref{lem:kernel-star} implies that the support graph of $(\RR^{\prec}_V(V_i) \cup \{V_i\}, \dP)$ is a star graph with $V_i$ as the center.
      \end{proof}
    Same arguments hold for $H_j \in \RR^*_H$.
    
  We refer $V_1, V_k, H_1,$ and $H_\ell$ to be terminal rectangles.

  If $V_1$ and $R_c$ have a common left blocker, then we call $V_1$ to be left-aligned. Likewise, if $V_k$ and $R_c$ have a common right blocker, then we call $V_k$ to be right-aligned. Conversely, if $H_1$ and $R_c$ have a common top blocker, then we call $H_1$ to be top-aligned. Finally, if $H_\ell$ and $R_c$ have a common bottom blocker, then we call $H_\ell$ to be bottom-aligned. Note that a terminal rectangle may or may not be aligned.

\subsection{Intersection between Corner and Non-corner Rectangles}

From Lemma~\ref{lem:corner} we know that any vertical or horizontal rectangle having an intersection with $N^a_\nearrow$ at $\dP$, should be top-aligned and/or right-aligned to $R_c$, where $a \in \{v,h,\emptyset\}$.

\begin{lemma}
  \label{lem:corner-non-corner}
  There are at most $2$ vertical rectangles in $\RR^*_V$ and $2$ horizontal rectangles in $\RR^*_H$ intersecting a corner rectangle at $\dP$. Among these up to $4$ rectangles, there are at most $1$ vertical terminal rectangle, and at most $1$ horizontal terminal rectangle. Furthermore, for a fixed corner of $R_c$, if there are two corner rectangles, then there are at most $1$ vertical and $1$ horizontal rectangle per corner rectangle.
\end{lemma}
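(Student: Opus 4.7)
The plan is to combine Lemma~\ref{lem:corner}'s alignment constraint with the linear orders $<_v$ and $<_h$, with the maximality of rectangles in $\RR^*_V$ and $\RR^*_H$, and with the kernel property of $R_c$. By symmetry, I fix the top-right corner of $R_c$ and let $N \in \RR_N$ denote a corner rectangle containing it.

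First I would restrict the alignment geometry. By Lemma~\ref{lem:corner}, every $R \in \RR$ with $R \cap N \cap \dP \neq \emptyset$ has a side aligned with either the top or the right side of $R_c$. For a vertical rectangle $V \in \RR^*_V$ (recall $V \prec R_c$), the top side of $V$ lies strictly above the top side of $R_c$, so $V$ cannot be non-degenerately top-aligned; hence such a $V$ must be right-aligned to $R_c$, i.e., $V$ and $R_c$ share a right blocker on a common $x$-coordinate. Symmetrically, every $H \in \RR^*_H$ with $H \cap N \cap \dP \neq \emptyset$ must be top-aligned to $R_c$.

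Next I would establish the count of at most two. Suppose $V_i <_v V_j <_v V_\ell$ in $\RR^*_V$ all intersect $N$ at $\dP$. By the previous step they share a common right $x$-coordinate with $R_c$, while their left sides are strictly ordered. A case analysis analogous to Lemma~\ref{lem:2cor-1pie} then shows that $V_j$ participates in a piercing intersection that is $\prec$-dominated by one of $V_i$ or $V_\ell$; the kernel property of $R_c$ is what forces the dominance witness to remain inside $R_c$ and rules out escape routes via $\dP$. This contradicts $V_j \in \RR^*_V$. Symmetrically, at most two horizontal rectangles in $\RR^*_H$ intersect $N$ at $\dP$.

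For the terminal bound, the vertical terminals are $V_1$ and $V_k$; if both were right-aligned and both intersected $N$ at $\dP$, then the same maximality/kernel argument would make one of them $\prec$-dominate the other, contradicting membership in $\RR^*_V$. Hence at most one vertical terminal meets $N$ at $\dP$, and symmetrically at most one horizontal terminal. For the last statement, when the top-right corner carries two corner rectangles $N^v_\nearrow \prec N^h_\nearrow$ as in Lemma~\ref{lem:corner-pattern}, Observation~\ref{lem:corner-disjoint} separates their $\dP$-incidences, and a short geometric split (using the non-aligned piercing and the single shared right/top alignment line with $R_c$) shows that a right-aligned vertical from $\RR^*_V$ (respectively a top-aligned horizontal from $\RR^*_H$) can witness its $\dP$-intersection with exactly one of $N^v_\nearrow$ or $N^h_\nearrow$. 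Combined with the count-of-two argument applied to each corner rectangle individually, this yields at most $1$ vertical and $1$ horizontal neighbor per corner rectangle.

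The main obstacle will be the maximality step: verifying that any putative ``middle'' right-aligned vertical rectangle among three is always $\prec$-dominated by a neighbor. This requires locating the shared right blocker precisely and invoking the kernel property of $R_c$ to keep the produced piercing witness inside $R_c$, ruling out pathological configurations in which all three rectangles could survive as maximally vertical.
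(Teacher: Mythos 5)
Your opening reduction is where the argument breaks. From Lemma~\ref{lem:corner} (more precisely from its proof) a rectangle $R$ with $R \cap N \cap \dP \neq \emptyset$ must contain one of the two points $p \in \partial N \cap \partial R_c$ on the top side of $R_c$ or $q \in \partial N \cap \partial R_c$ on the right side of $R_c$, and is accordingly top-aligned (through $p$) or right-aligned (through $q$); nothing ties the alignment direction to whether $R$ is vertical or horizontal. Your claim that a vertical $V \prec R_c$ ``cannot be non-degenerately top-aligned, hence must be right-aligned'' dismisses exactly the relevant configurations: in this paper alignment \emph{is} the degenerate coincidence of sides (an aligned piercing intersection), and a tall narrow rectangle whose top side is flush with the top side of $R_c$ and which pierces through its bottom is a perfectly good member of $\RR_V$ that meets $N_\nearrow$ at $p$. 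The paper's own extremal picture realizes this: in Figure~\ref{fig:single-corner} the two verticals meeting $N_\nearrow$ are one top-aligned rectangle $V_i \ni p$ and one right-aligned rectangle $V_k \ni q$ (and dually $H_j \ni q$ is a right-aligned horizontal), and the edge-drawing rule in Section~\ref{sec:algo-kernel-less} together with Figure~\ref{fig:kernel-corner} explicitly handles top-aligned members of $\RR^*_V$ adjacent to $N_\nearrow$. Consequently your counting step --- three verticals sharing the right abscissa of $R_c$ with strictly ordered left sides, the middle one being $\prec$-dominated --- argues about a family that does not contain the actual worst case; if anything, pushing it through would ``show'' at most one vertical per corner rectangle, contradicting the achievable bound of two. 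The terminal count and the per-corner-rectangle split inherit the same flawed premise.

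The paper's route, which your proposal should be repaired towards, is point-based rather than alignment-class-based: any rectangle meeting $N$ on $\dP$ passes through $p$ or through $q$; at each of these two points at most one rectangle of $\RR^*_V$ and at most one of $\RR^*_H$ can pass (two members of $\RR^*_V$, resp.\ $\RR^*_H$, through the same point would be comparable under $\prec$ or violate maximality/the kernel property), which gives the bound of $2+2$; the right-aligned vertical through $q$ is the terminal $V_k$ and the top-aligned horizontal through $p$ is the terminal $H_\ell$, giving at most one terminal of each kind; and when a corner of $R_c$ carries two corner rectangles $N^v_b \prec N^h_b$, Observation~\ref{lem:corner-disjoint} and the non-aligned piercing of Lemma~\ref{lem:corner-pattern} force $N^v_b$ and $N^h_b$ to use different points of $\{p,q\}$ (as in Figure~\ref{fig:double-corner}), so each corner rectangle meets at most one vertical and one horizontal rectangle. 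Your final paragraph correctly anticipates that the ``middle rectangle'' maximality step is the obstacle, but the fix is not a finer analysis of right-aligned triples; it is abandoning the claim that verticals are only right-aligned and horizontals only top-aligned.
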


  \begin{proof}
  Without loss of generality, fix the top-right corner of $R_c$. If there are corner rectangles, then either there is a single corner rectangle or there are two of them.

  For the first case, the corner rectangle is $N_\nearrow$ and there are possibly $4$ rectangles $V_i, V_k \in \RR^*_V$ and $H_j, H_\ell \in \RR^*_H$, where $1 \le i < k$ and $1 \le j < \ell$ such that they intersect $N_\nearrow$ at $\dP$. Recall that $V_k$ and $H_\ell$ are terminal rectangles (see Figure~\ref{fig:single-corner}).

  Similar is the case, if there are two corner rectangles, $N^v_\nearrow$ and $N^h_\nearrow$. $N^v_\nearrow$ has an intersection with $V_i$ and $H_\ell$ at a boundary point $p$, whereas, $N^h_\nearrow$ has an intersection with $V_k$ and $H_j$ at another boundary point $q$ (see Figure~\ref{fig:double-corner}).

\end{proof}

See Figure~\ref{fig:all-corners} for an illustration of all corners having two corner rectangles each, and also Figure~\ref{fig:graph-all-corners} for the corresponding support graph.

\begin{figure}[h]
  \centering
\begin{subfigure}[l]{0.49\textwidth}
        \centering
        \includegraphics[width=0.9\textwidth]{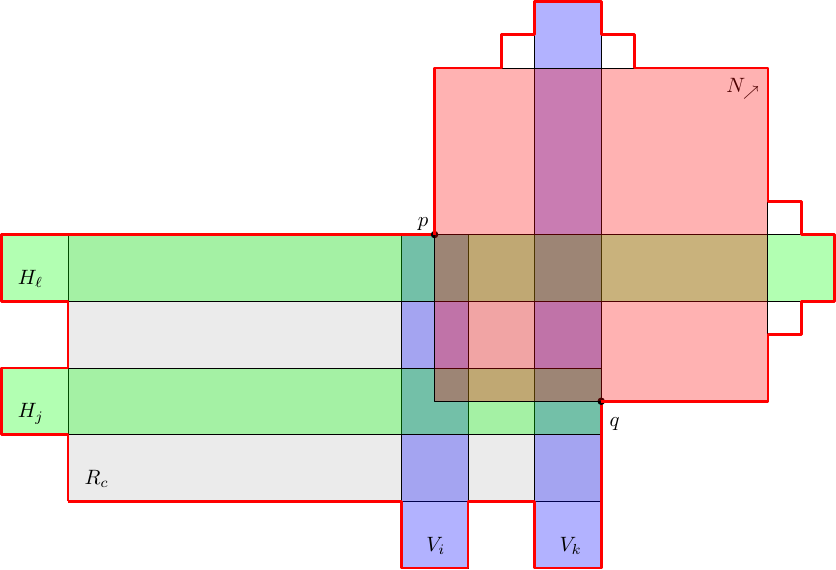}
        \caption{
          $N_\nearrow$ is the unique corner rectangle containing the top-right corner of $R_c$. Boundary points $p \in V_i \cap H_\ell \cap N_\nearrow \cap \dP$ and $q \in V_k \cap H_j \cap N_\nearrow \cap \dP$.
        }
        \label{fig:single-corner}
      \end{subfigure}
      \hfill
\begin{subfigure}[r]{0.49\textwidth}
        \centering
        \includegraphics[width=0.9\textwidth]{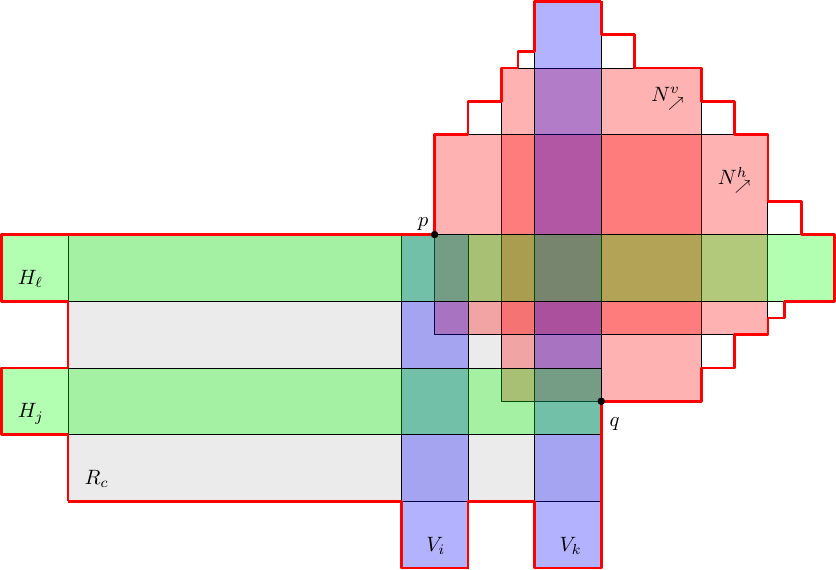}
        \caption{
          $N^v_\nearrow$ and $N^h_\nearrow$ are the two corner rectangles containing the top-right corner of $R_c$.
          Boundary points $p \in V_i \cap H_\ell \cap N^h_\nearrow \cap \dP$ and $q \in V_k \cap H_j \cap N^v_\nearrow \cap \dP$.
        }
        \label{fig:double-corner}
      \end{subfigure}
      \caption{The intersection patterns between corner and non-corner rectangles.}
      \label{fig:single-double-corner}
    \end{figure}


We make the following observation.

\begin{lemma}
  If there are two corner rectangles $N^v_b, N^h_b$ for a fixed corner $b$ of $R_c$, where $b = \{\nearrow, \nwarrow, \searrow, \swarrow\}$, then there exists a support graph such that degree of both these corner rectangles in this graph is at most $2$.
\end{lemma}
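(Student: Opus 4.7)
The plan is to take the planar support graph $G_0$ produced for $(\RR\setminus ker(\RR),\dP)$ in Lemma~\ref{lem:etoile-sans-centre} and add a very small number of edges incident to the two corner rectangles at the fixed corner, so that both still have degree at most $2$. Without loss of generality, fix $b=\nearrow$ and let $N^v_\nearrow\prec N^h_\nearrow$ denote the two corner rectangles containing the top-right corner of $R_c$.

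First, I would catalogue all hyperedges of $(\RR\setminus ker(\RR),\dP)$ containing $N^v_\nearrow$. Observation~\ref{lem:corner-disjoint} rules out any other corner rectangle sharing a boundary point with $N^v_\nearrow$. The ``furthermore'' clause of Lemma~\ref{lem:corner-non-corner} (the one about two corner rectangles per corner) confines the elements of $\RR^*_V\cup\RR^*_H$ that meet $N^v_\nearrow$ on $\dP$ to at most one $V_i\in\RR^*_V$ and at most one $H_\ell\in\RR^*_H$; see Figure~\ref{fig:double-corner}. For any non-maximal $R\in\RR_V\setminus\RR^*_V$ (respectively $R\in\RR_H\setminus\RR^*_H$) meeting $N^v_\nearrow$ on $\dP$, we have $R\prec V_i$ (respectively $H_\ell\prec R$), so Lemma~\ref{lem:prec-star} already provides an edge $(R,V_i)$ (respectively $(R,H_\ell)$) inside $G_0$. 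Consequently, every hyperedge $\RR_p$ with $N^v_\nearrow\in\RR_p$ and $|\RR_p|\ge 2$ has the form $\{N^v_\nearrow\}$ together with a set of rectangles that, in $G_0$, are all adjacent (directly or via one intermediate edge) to $V_i$ or $H_\ell$. An analogous statement holds for $N^h_\nearrow$ with the rectangles $V_k, H_j$.

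The construction is then to add to $G_0$ at most the four edges $(N^v_\nearrow,V_i)$, $(N^v_\nearrow,H_\ell)$, $(N^h_\nearrow,V_k)$, $(N^h_\nearrow,H_j)$, omitting any edge whose second endpoint does not exist. By the cataloguing above, the resulting graph supports every hyperedge containing $N^v_\nearrow$ or $N^h_\nearrow$, and each of these two corner rectangles has degree at most $2$ by construction. For planarity, the corner-arc of the outer cycle in Figure~\ref{fig:kernel-schema} between the terminals $V_k$ and $H_\ell$ is reserved for both $N^v_\nearrow$ and $N^h_\nearrow$; Lemma~\ref{lem:corner-pattern} forces them to occupy this arc in an order compatible with the cyclic positions of $V_i,H_\ell$ on one side and $V_k,H_j$ on the other, so the new edges can be drawn inside the arc without crossing any edge of $G_0$. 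The step I expect to be the most delicate is verifying this cyclic-order compatibility in all subcases of which sides of $R_c$ the rectangles $V_i,H_\ell,V_k,H_j$ are aligned to; this is a short case analysis using Lemma~\ref{lem:corner} together with the linear orders $<_v$ and $<_h$.
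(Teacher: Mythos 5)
Your proposal is correct and follows essentially the same route as the paper: the ``furthermore'' clause of Lemma~\ref{lem:corner-non-corner} (together with Observation~\ref{lem:corner-disjoint}) limits each of $N^v_\nearrow, N^h_\nearrow$ to at most one maximal vertical and one maximal horizontal neighbour on $\dP$, and Lemma~\ref{lem:prec-star} lets the non-maximal rectangles ride on those maximal ones, so two edges per corner rectangle suffice. The extra planarity/cyclic-order discussion you flag as delicate is not actually required by this lemma, which only asserts the existence of a support graph with degree at most $2$ at these two vertices; the paper defers the drawing issue to the subsequent remark and the drawing algorithm.
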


\begin{proof}
  Fix $b = \nearrow$ and consider $N_\nearrow^v$. From Lemma~\ref{lem:corner-non-corner}, and Figures~\ref{fig:double-corner} and~\ref{fig:all-corners}, we know there exists at most one maximally horizontal rectangle $H_j$, and similarly at most one maximally vertical rectangle $V_k$, such that there is a unique point $p \in N_\nearrow^v \cap H_j \cap V_k \cap \dP \neq \emptyset$.
  Recall, from Lemma~\ref{lem:prec-star}, there exists a support graph such that non-maximally horizontal (resp. vertical) rectangles are adjacent to a unique maximally horizontal (resp. vertical) rectangle.
  Thus, in a minimal support graph, $N_\nearrow^v$ has degree at most two.
  Similar arguments hold for $N_\nearrow^h$, and also for other values of $b$.
\end{proof}




\begin{remark}
  For a given corner $b \in \{\nearrow, \nwarrow, \searrow, \swarrow\}$, we shall ignore all pair of corner rectangles $N^v_b$ and $N^h_b$, if they exist, from our following graph drawing algorithm. Suppose $v$ is a degree-$2$ vertex, incident to edges $(u,v)$ and $(v,w)$. Instead of drawing the pair of edges, one can draw an edge $(u,w)$ and place $v$ anywhere in the interior of the drawing of this edge. On the other hand, if $v$ is a degree-$1$ vertex and $(u,v)$ is the only edge incident to it, then this edge can be drawn arbitrarily close to $u$ without intersecting any other edges.
\end{remark}

\begin{figure}[h]
        \centering
        \includegraphics[width=0.65\textwidth]{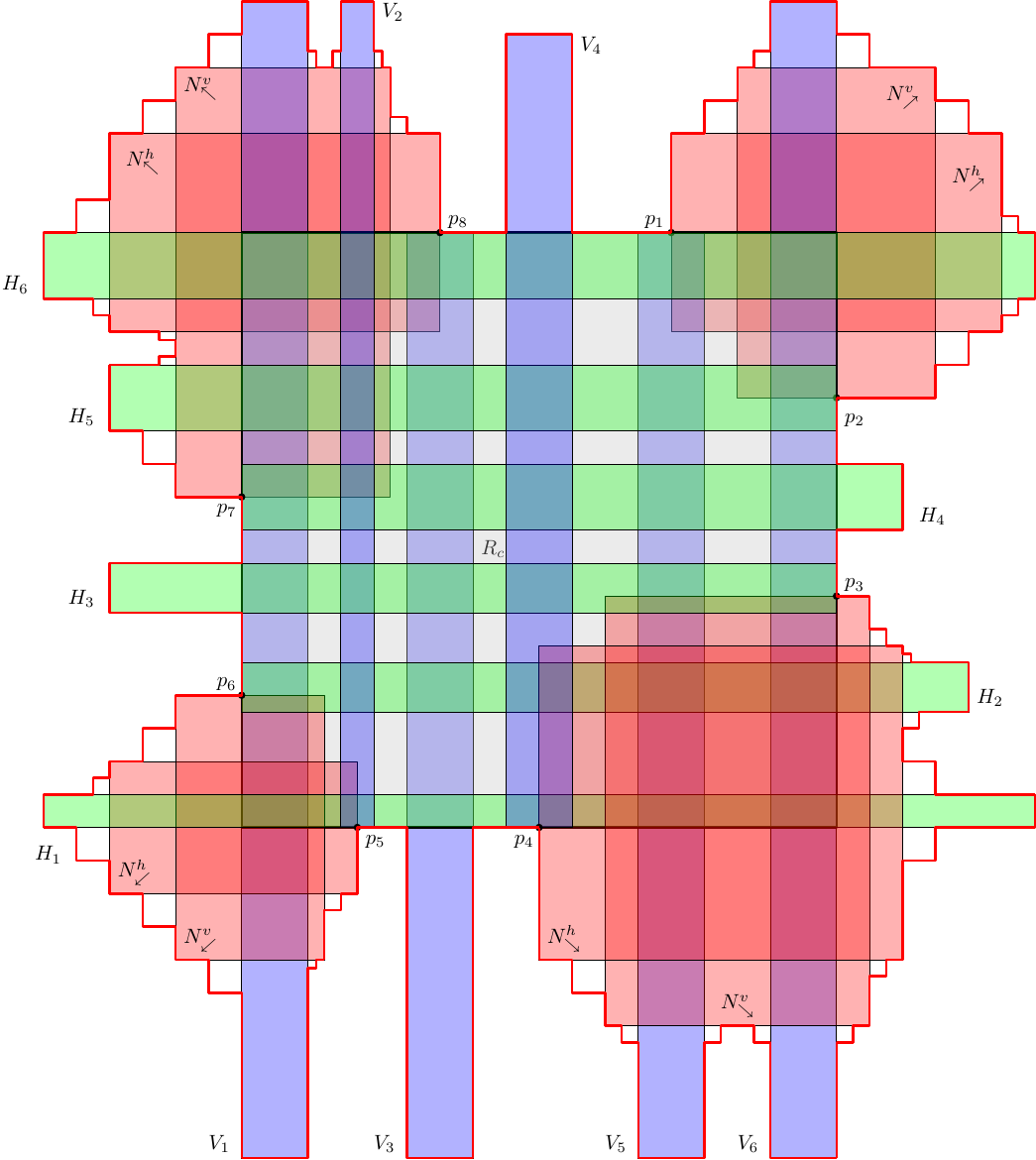}
        \caption{
          This figure is an extension of Figure~\ref{fig:double-corner} to all the $4$ corners of $R_c$. See Figure~\ref{fig:graph-all-corners} for the corresponding support graph.
        }
        \label{fig:all-corners}
\end{figure}

\begin{figure}[h]
  \centering \includegraphics[width=0.65\textwidth]{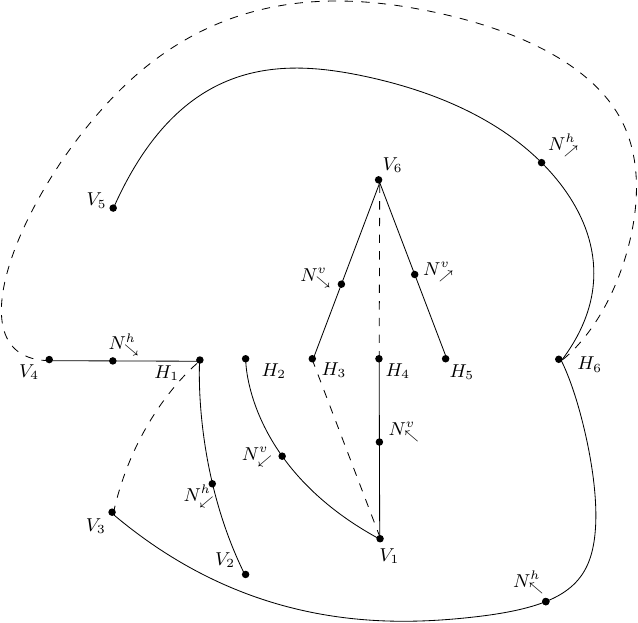}
        \caption{
          This figure shows the drawing of the support graph of the arrangement shown in Figure~\ref{fig:all-corners}.
          The edges drawn in solid lines are to support rectangles intersecting at $p_i$, $1 \le i \le 8$. Whereas the edges that do not involve corner rectangles are drawn in dashed lines.
        }
        \label{fig:graph-all-corners}
\end{figure}


\begin{figure}[h]
        \centering
        \includegraphics[width=0.65\textwidth]{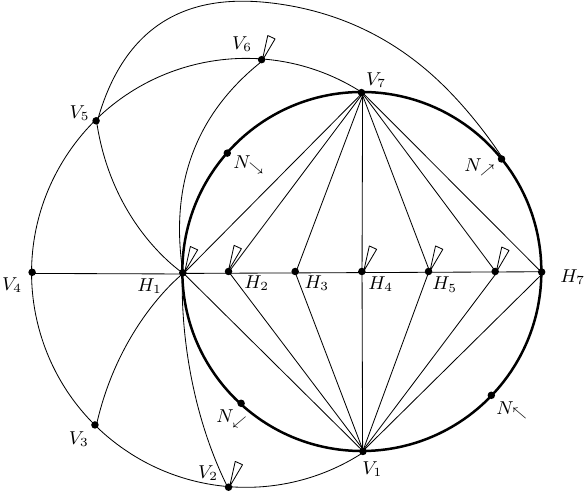}
        \caption{Schema of a planar support graph. $\RR^*_V = \{V_1,\dots, V_7\}$, $\RR^*_H = \{H_1, \dots, H_7\}$, $\RR_N = \{N_\swarrow, N_\nwarrow, N_\nearrow, N_\searrow\}$.
          In comparison to Figure~\ref{fig:kernel-schema}, here $H_7$ is not top-aligned and $V_5 \cap N_\nearrow \cap \dP \neq \emptyset$, hence there is the edge $(V_5, N_\nearrow)$.}
        \label{fig:kernel-corner}
      \end{figure}

\subsection{Algorithm to Draw Planar Support for $\RR\setminus ker(\RR)$}
\label{sec:algo-kernel-less}
This section is an extension of Section~\ref{sec:star}.
  In the following we give a simple algorithm to draw a planar support of $(\RR\setminus ker(\RR), \dP)$.
  We put an edge between the left-aligned (resp.~right-aligned) rectangle, if it exists, and every horizontal rectangle. Similarly, we put an edge between the top-aligned (resp.~bottom-aligned) rectangle, if it exists, and every vertical rectangle. Without loss of generality, assume that all the aligned rectangles exist.
  \subsubsection{Placing the Vertices}
  Draw a circle and place the terminal vertices $V_1, H_1, V_k, H_\ell$ on it in this order i.e., alternating between the horizontal and vertical rectangles as shown in Figure~\ref{fig:kernel-schema} where the circle is drawn in bold.
  Place the vertices corresponding to the corner rectangle between the terminal rectangles, i.e.,
  $N_\swarrow$ between $V_1$ and $H_1$,
  $N_\searrow$ between $V_k$ and $H_1$,
  $N_\nearrow$ between $V_k$ and $H_\ell$, and
  $N_\nwarrow$ between $V_1$ and $H_\ell$.
  
  Define a line joining $H_1$ and $H_\ell$ in the inner face of the cycle. Place the vertices corresponding to the other horizontal rectangles $H_j \in \RR^*_H\setminus \{H_1, H_\ell\}$, on this line obeying the linear order defined in Definition~\ref{def:vert-order}.
  Similarly, define a line joining $V_1$ and $V_k$ in the outer face of the cycle.
  Place the vertices corresponding to the other vertical rectangles $V_i \in \RR^*_V\setminus\{V_1, V_k\}$, on this line obeying the linear order defined in Definition~\ref{def:hori-order}.
  
  For every $V_i \in \RR^*_V$, if there exists $R$ such that $R \prec V_i$, then define a \emph{private} region in the plane (shown by a triangle with a vertex at $V_i$ in Figure~\ref{fig:kernel-schema}). Place the vertices corresponding to all such rectangles in the $V_i$'s private region.
  Similarly define private regions for $H_j \in \RR^*_H$, if required, and place all vertices corresponding to rectangles $R$ in the region, where $H_j \prec R$.

  \subsubsection{Drawing the Edges}
  \begin{enumerate}
    \item {\bf $(\RR^*_V \times \RR^*_H)$.}
  If $V_1$ is left-aligned with $R_c$ then draw an edge $(V_1,H_j)$, for every $H_j \in \RR^*_H$ as shown in Figure~\ref{fig:kernel-schema}. Similarly, if $V_k$ is right-aligned then draw an edge $(V_k,H_j)$, for every $H_j \in \RR^*_H$.
  Conversely, if $H_1$ is bottom-aligned then draw an edge $(V_i, H_1)$, for every $V_i \in \RR^*_V$. Likewise, if $H_\ell$ is top-aligned then draw an edge $(V_i, H_\ell)$, for every $V_i \in \RR^*_V$.

\item {\bf $(\RR_V \times \RR_V)$.}
    If $V_i \cap V_{i+1} \cap \dP \neq \emptyset$ then draw an edge $(V_i,V_{i+1})$ for every $V_i \in \RR^*_V$ where $1 \le i < k$, along the line where the $V_i$'s are placed, as shown in Figure~\ref{fig:kernel-schema}.
  Draw an edge $(R,V_i)$ inside $V_i$'s private region, for every $V_i \in \RR^*_V$ and $R \prec V_i$.

\item {\bf $(\RR_H \times \RR_H)$.}
  If $H_j \cap H_{j+1} \cap \dP \neq \emptyset$ then draw an edge $(H_j,H_{j+1})$ for every $H_i \in \RR^*_H$ where $1 \le j < \ell$, along the line where the $H_j$'s are placed, as shown in Figure~\ref{fig:kernel-schema}.
  Draw an edge $(R,H_j)$ inside $H_j$'s private region, for every $H_j \in \RR^*_H$ and $H_j \prec R$.

\item {\bf $(\RR_N \times (\RR^*_V \cup \RR^*_H))$.}
  Lemma~\ref{lem:corner} implies that if $N_\nearrow \cap R \cap \dP \neq \emptyset$ where $R \in \RR^*_V \cup \RR^*_H$ then $R$ is either top-aligned and/or right-aligned to $R_c$. If $R$ is top-aligned to $R_c$ and $R \in \RR^*_H$ then $R = H_\ell$. Draw the edge $(N_\nearrow, H_\ell)$.
  If there exists $R' \in \RR^*_V$ such that $R' \cap N_\nearrow \cap \dP \neq \emptyset$ and $R'$ is top-aligned with $R_c$ then draw the edge $(N_\nearrow, R')$ if $N_\nearrow \cap H_\ell \cap \dP = \emptyset$, otherwise do not draw the edge $(N_\nearrow, R')$. See Figure~\ref{fig:kernel-corner}. The case where $R$ is top-aligned is identical and we accordingly draw the edges. Finally, we iterate over the other corner rectangles in $\RR_N$ and do a similar case analysis.
  \end{enumerate}
      
\EtoileSansCentre*
\begin{proof}

  It is easy to see with the help of Figures~\ref{fig:kernel-schema} and~\ref{fig:kernel-corner} that the resultant graph $G$ is planar.
  We claim that $G$ is also a support graph. For $p \in \dP$ consider $\RR_p \setminus \{R_c\}$.
  Consider $R, R' \in \RR_p \setminus \{R_c\}$. We claim $R$ and $R'$ are connected in $G$.
  \begin{enumerate}
    \item
  If $R, R' \in \RR_V$ then let $V, V' \in \RR^*_V$ such that $R \preceq V$ (i.e., $R \prec V$ or $R=V$) and $R' \preceq V'$. From Lemma~\ref{lem:prec-star}, $(R,V)$ and $(R',V')$ are edges in $G$. This also means $V$ and $V'$ must be consecutive in the ordering $<_v$ and hence they are adjacent in $G$. Thus, $R$ and $R'$ are connected in $G$. 
  Same is the case for $R, R' \in \RR_H$.
  
\item
  If $R \in \RR_V$ and $R' \in \RR_H$. Let $V \in \RR^*_V$ and $H \in \RR^*_H$ such that $R \prec V$ and $H \prec R'$. Then observe that either $V$ is a left/right aligned terminal rectangle or $H$ is a top/bottom aligned terminal rectangle, or both. In either cases, $(H,V)$ is an edge in $G$. Hence, $R$ and $R'$ are connected in $G$.

\item
  If $R \in \RR_N, R' \in \RR_V$. Without loss of generality, let $R = N_\nearrow$ and let $V \in \RR^*_V$ and $R \preceq V$. From Lemma~\ref{lem:corner} there are two cases, either $V$ is top-aligned with $R_c$ or right-aligned with $R_c$. In either cases, there is an edge drawn between $V$ and $N_\nearrow$. For the former case, see Figure~\ref{fig:kernel-corner}. For the latter case, see Figure~\ref{fig:kernel-schema}. Thus, $R$ and $N_\nearrow$ are connected. The other cases involving $\RR_N$ and $\RR_H$ are identical.
\end{enumerate}
\end{proof}

\section{Missing Details from Section~\ref{sec:complete}}
\label{sec:complete-long}

Given a simple orthogonal polygon $P$, recall $\RR^K(P)$ is the set of all maximal rectangles with respect to $P$.
In this section we give the full proof of Lemma~\ref{thm:complete-rect}, i.e., $(\RR^K(P), \dP)$ has a planar support, whose proof sketch can be found in Section~\ref{sec:complete}. To this end we shall present a number of definitions and lemmata.


\begin{definition}
  Given a rectangle $X$, we denote its top, bottom, left and right sides by $t_X, b_X, \ell_X,$ and $r_X$, respectively.
\end{definition}

\begin{definition}[Vertically Blocked Rectangles]
  Given an orthogonal polygon $P$, a maximal rectangle $R \subseteq P$ is called vertically blocked if $R$ has a top blocker $(x,y_1)$ and a bottom blocker $(x,y_2)$, and
  the line joining these two blockers is called the vertical blocker of $R$.
  We say $R$ has two distinct vertical blockers $\ell_1$ and $\ell_2$ if $\ell_1$ joins top blocker $(x_1,y_1)$ and bottom blocker $(x_1,y_2)$ of $R$, 
  $\ell_2$ joins top blocker $(x_2,y_1)$ and bottom blocker $(x_2,y_2)$, and
  there exists $x$ where $x_1 < x < x_2$, such that either $(x,y_1) \notin \dP$ or $(x,y_2) \notin \dP$, or both.   
\end{definition}

\begin{lemma}
  \label{lem:vert-blocker}
  Let $R \in \RR^K(P)$ has two distinct vertical blockers $\ell_1$ and $\ell_2$. Let $\RR(\ell_1, \ell_2)$  be the subset of rectangles in $\RR^K(P)$ that are contained in the vertical slab defined by $\ell_1$ and $\ell_2$.
  Then in any minimal support $G$ of $(\RR^K(P), \dP)$, $R$ separates the two sets $\RR(\ell_1,\ell_2)$ and  $\RR^K(P)\setminus (\RR(\ell_1,\ell_2) \cup \{R\})$.
\end{lemma}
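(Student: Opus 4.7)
The plan is to show that any boundary point $p\in A\cap B\cap\dP$ with $A\in\RR(\ell_1,\ell_2)$ and $B\in\RR^K(P)\setminus(\RR(\ell_1,\ell_2)\cup\{R\})$ must lie inside $R$. Once this geometric claim is in place, $R\in\RR_p$ for every such witness $p$, so $R$ can mediate the connectivity required by the hyperedge $\RR_p$; by minimality of the support, the direct cross-edge $(A,B)$ is then redundant and cannot appear in $G$, and consequently every path in $G$ between $\RR(\ell_1,\ell_2)$ and $\RR^K(P)\setminus(\RR(\ell_1,\ell_2)\cup\{R\})$ passes through $R$.

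To set up, I would write $R=[x_L,x_R]\times[y_2,y_1]$ with $x_L<x_1<x_2<x_R$ and $B=[b_L,b_R]\times[c_1,c_2]$. Since $A$ is contained in the slab $[x_1,x_2]\times\mathbb{R}$, the point $p\in A$ satisfies $p_x\in[x_1,x_2]\subseteq[x_L,x_R]$. Without loss of generality $b_L<x_1$, so $B$ extends to the left of $\ell_1$; the case where $B$ extends to the right of $\ell_2$ is symmetric.

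The core step is to argue $p_y\in[y_2,y_1]$, using the pinch at $\ell_1$. The blockers $(x_1,y_1),(x_1,y_2)\in\dP$, together with the fact that $R$ has $x$-extent on both sides of $x_1$, rule out a vertical boundary segment of $P$ at $x=x_1$ through either blocker, since such a segment would force $P$ onto a single side of $x=x_1$ locally and contradict $R\subseteq P$. This constrains the interval of the vertical slice $P\cap(\{x_1\}\times\mathbb{R})$ containing $R$'s span to be pinched to exactly $[y_2,y_1]$. Because the vertical strip $\{x_1\}\times[c_1,c_2]$ lies in $B\subseteq P$ and is continuously connected (as $x$ sweeps across $x_1$ from the left) to the part of $B$ at $x<x_1$, this strip must fall inside that same interval, giving $[c_1,c_2]\subseteq[y_2,y_1]$ and hence $p_y\in[y_2,y_1]$. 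Combined with $p_x\in[x_L,x_R]$, this places $p$ in $R$.

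The main obstacle is this pinch step: one has to rule out configurations in which a reflex corner at $(x_1,y_1)$ or $(x_1,y_2)$ produces an auxiliary interval of the slice above $y_1$ or below $y_2$ through which $B$ could slip, and symmetrically at $\ell_2$ if $B$ extends to the right. Handling these cases requires careful use of both the simple connectedness of $P$ and the distinctness condition on $\ell_1,\ell_2$ (which, via the gap in $\dP$ between the two blockers at $y_1$ or $y_2$, prevents $R$ itself from being further split by additional pinches); a full treatment involves a short case analysis on the types of boundary corners meeting at the four blocker points, after which the separation assertion follows uniformly.
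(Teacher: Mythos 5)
Your overall route is the same as the paper's: the paper's entire proof of this lemma is the single assertion that any $p \in \dP$ witnessing an intersection between some $R_i \in \RR(\ell_1,\ell_2)$ and some $R_j \notin \RR(\ell_1,\ell_2)\cup\{R\}$ must itself lie in $R$, and your proposal tries to actually establish that assertion and then pass to separation via minimality. The trouble is that the step you defer is exactly where the argument can break, and your sketch does not close it. Ruling out a vertical boundary segment through the blockers only controls the connected component of the slice $P\cap(\{x_1\}\times\Real)$ that contains $R$'s span $[y_2,y_1]$; it says nothing about other components of that slice, and nothing in your sweep argument forces $B$'s crossing segment $\{x_1\}\times[c_1,c_2]$ to lie in the same component as $[y_2,y_1]$ --- being connected to the part of $B$ left of $\ell_1$ does not determine which passage of $P$ across the line $x=x_1$ it uses. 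Concretely, let $P$ be the union of a lower corridor $[0,10]\times[0,1]$, an upper corridor $[0,10]\times[2,3]$, and a connecting passage $[4,6]\times[1,2]$. Take $R=[0,10]\times[0,1]$, $\ell_1$ at $x_1=2$ and $\ell_2$ at $x_2=8$ (distinct, since $(5,1)\notin\dP$). Then $A=[4,6]\times[0,3]$ is a maximal rectangle contained in the strip $2\le x\le 8$, $B=[0,10]\times[2,3]$ is a maximal rectangle outside it, and $p=(5,3)\in A\cap B\cap\dP$ lies far outside $R$. So the ``auxiliary interval above $y_1$'' configurations you set aside are not a routine finish: they are precisely the configurations in which the witness point escapes $R$, and no local case analysis at the four blocker points can dispose of them --- any correct treatment has to exploit global structure, the way the lemma is actually invoked inside the induction of Lemma~\ref{lem:comp-outer}, or a more restrictive reading of the slab, none of which appears in your sketch.

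The second half of your reduction is also not as automatic as stated. Edge-minimality of $G$ does not by itself exclude a cross-edge $(A,B)$: such an edge survives in a minimal support whenever it is a cut edge of $G[\RR_p]$ for some hyperedge $\RR_p \ni A,B$ (for instance if $G[\RR_p]$ is the path $A\mbox{--}B\mbox{--}R$), and knowing merely that $R\in\RR_p$ does not make $(A,B)$ removable. To argue along these lines one needs an exchange step --- replace $(A,B)$ by $(A,R)$ and $(R,B)$ and check every hyperedge stays connected --- but that produces a different support rather than contradicting minimality of the given one, so it does not yield the ``any minimal support'' form of the conclusion. Both halves therefore need genuine repair before your write-up substantiates even the (very terse) argument given in the paper.
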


\begin{proof}
This follows because if there exists $p \in \dP$ such that $ p \in R_i \cap R_j$, where $R_i \in \RR(\ell_1,\ell_2)$ and $R_j \in \RR^K(P)\setminus (\RR(\ell_1,\ell_2) \cup \{R\})$, then $p \in R$.
\end{proof}

\begin{lemma}
  {\label{lem:opp-blockers}}
  Given a simple orthogonal polygon, no $2$ distinct maximal rectangles have the same blockers in the opposite directions, viz., top and bottom or left and right.
\end{lemma}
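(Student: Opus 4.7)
The plan is to argue by contradiction using the simple connectivity of $P$. Suppose two distinct maximal rectangles $R_1, R_2 \in \RR^K(P)$ share a common top blocker $p_t = (x_t, y_t)$ and a common bottom blocker $p_b = (x_b, y_b)$. Since $p_t$ lies strictly in the interior of the top side of each $R_i$ and $p_b$ strictly in the interior of its bottom side, the tops of both rectangles lie on the line $y = y_t$ and the bottoms on $y = y_b$. Writing $R_i = [a_i, c_i] \times [y_b, y_t]$, I would then record that $a_i < x_t, x_b < c_i$ for $i = 1, 2$, so the two rectangles agree vertically and can differ only in their horizontal extents.

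The next step is to form the ``super-rectangle'' $R' := [\min(a_1, a_2), \max(c_1, c_2)] \times [y_b, y_t]$ and establish $\partial R' \subseteq P$. Because $x_t$ lies strictly between $a_i$ and $c_i$ for both $i$, we have $\max(a_1, a_2) < x_t < \min(c_1, c_2)$, so the top sides of $R_1$ and $R_2$ overlap at $p_t$ and their union is exactly the top side of $R'$; the same observation applies at $y_b$. The left side of $R'$ is the left side of whichever $R_i$ attains $\min(a_1, a_2)$ and therefore lies in $P$, and the right side is handled symmetrically. Hence every point of $\partial R'$ lies in $R_1 \cup R_2 \subseteq P$.

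The crucial step is then to invoke simple connectivity: $\partial R'$ is a Jordan curve lying in $P$. If some interior point $q$ of $R'$ were not in $P$, the connected component of $\Real^2 \setminus P$ containing $q$ would be trapped inside the bounded region enclosed by $\partial R'$ (it cannot meet $\partial R' \subseteq P$), and would therefore be a bounded component of $\Real^2 \setminus P$, i.e., a hole in $P$, contradicting the hypothesis that $P$ is simple. Hence $R' \subseteq P$. Since $R_1 \neq R_2$, at least one of $a_1 \neq a_2$ or $c_1 \neq c_2$ holds, so $R_i \subsetneq R'$ for some $i \in \{1,2\}$, contradicting maximality of $R_i$. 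The left/right case follows by the same argument with the roles of the horizontal and vertical axes interchanged.

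The main obstacle I anticipate is not any deep geometric fact but simply making the simple-connectivity step rigorous; a minor subtlety worth flagging is the need to verify that the top (and bottom) sides of $R_1$ and $R_2$ genuinely overlap, rather than merely abut, so that their union forms a single horizontal segment equal to the corresponding side of $R'$. This is ensured by the strict inequalities $\max(a_1, a_2) < x_t < \min(c_1, c_2)$ inherited from the blocker $p_t$ being interior to the top side of each $R_i$.
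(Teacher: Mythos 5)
Your proof is correct, but it follows a genuinely different route from the paper's. The paper argues locally on the vertical sides: since the shared top and bottom blockers force the two rectangles to have the same vertical extent, if the left sides are distinct then the left blocker of the narrower rectangle (a non-corner point of $\partial P$ on its left side) lies in the open interior of the wider rectangle, which is impossible for a rectangle contained in $P$; repeating the argument for the right sides forces the two rectangles to coincide. You instead form the bounding super-rectangle $R'$ of $R_1\cup R_2$, verify $\partial R'\subseteq R_1\cup R_2\subseteq P$ via the overlap at the shared blockers, and invoke the Jordan curve theorem together with hole-freeness to conclude $R'\subseteq P$, contradicting containment-maximality because $R'$ strictly contains one of the $R_i$ and has larger area. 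The trade-off is instructive: your argument genuinely uses the hypothesis that $P$ is simple (with a hole inside $R'$ the filling step would fail), whereas the paper's blocker-swallowing argument never needs it, so the paper's proof is both shorter and more general (it works for polygons with holes). On the other hand, your proof uses only the two given blockers and the area-based definition of maximality, while the paper's proof implicitly relies on the slightly delicate fact that a maximal rectangle has a non-corner blocker on each side (the preliminaries only observe that each side meets $\partial P$, which a priori could happen only at corners); your route sidesteps that existence issue entirely. Both contradictions are sound, so this is a valid alternative proof rather than a gap.
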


\begin{proof}
  For the sake of contradiction, assume that there exists a simple orthogonal polygon $P$, and $2$ distinct maximal rectangles $A$ and $B$ that have the same blockers on their top and bottom direction. Consider the left sides of $A$ and $B$.
  Either their left sides are distinct or same. If they are distinct then one of the left side is to the left side of the other rectangle.  Let $A$'s left side be to the left of $B$'s left side. This means $B$'s left blocker is contained in $A$, which contradicts the fact that $A$ is a maximal rectangle.
  The other case is when their left sides are same. Then one argues the same for their right sides. If their right sides are distinct then one can argue similarly as above and lead to a contradiction. If the right sides are also same then the $A$ and $B$ are the same rectangle.
\end{proof}

\begin{lemma}
Given a simple orthogonal polygon, if $2$ maximal rectangles have the set of their common blockers exactly in $2$ adjacent directions, viz., top and left, left and bottom, bottom and right, or right and top. Then the two rectangles are distinct.
\end{lemma}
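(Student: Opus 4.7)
The plan is to give a short proof by contradiction, relying on a single observation from the preamble of the paper: every maximal rectangle has a (non-corner) blocker on each of its four sides, as a direct consequence of maximality. No new geometric machinery is required; the statement is essentially a bookkeeping companion to Lemma~\ref{lem:opp-blockers}, phrased as a positive counterpart for the adjacent-directions case.

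First, I would name the two maximal rectangles $A$ and $B$ from the hypothesis and suppose, for contradiction, that $A=B$. Under this identification, every blocker of $A$ is trivially a blocker of $B$, so the set of common blockers coincides with the full blocker set of $A$. Second, I would invoke the maximality observation to conclude that this blocker set contains at least one blocker in each of the four directions top, bottom, left, right: if some side of $A$ carried no blocker, that side could be translated outward by a small amount while keeping the resulting larger rectangle in $P$, contradicting maximality. Finally, I would contrast this with the hypothesis that the common blockers lie in exactly two adjacent directions, which forbids any common blocker in the remaining two directions. This is a direct contradiction, and hence $A\neq B$.

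The only subtlety---and I expect it to be more a matter of care than a real obstacle---is justifying that the maximality observation yields a \emph{non-corner} point of $\dP$ on each side of $R$, rather than merely a corner of $R$ that lies on $\dP$. This is immediate: if a side of $R$ met $\dP$ only at the two corners of $R$, then in the orthogonal setting one can translate that side outward by a positive amount without leaving $P$, contradicting maximality of $R$. With this in hand, the lemma falls out, and it fits neatly next to Lemma~\ref{lem:opp-blockers}: sharing blockers in \emph{opposite} directions forces two maximal rectangles to coincide, whereas the present lemma records that sharing common blockers in only \emph{adjacent} directions instead forces their distinctness.
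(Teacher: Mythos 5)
Your proposal is correct and matches the paper's own argument: the paper likewise observes that if the two rectangles coincided, maximality would force common blockers in all four directions, contradicting the hypothesis of exactly two adjacent directions. Your extra remark on why each side carries a non-corner blocker is a reasonable refinement of the same idea, not a different route.
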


\begin{proof}
  Suppose the two rectangles are not distinct. Then the set of common blockers is in all the $4$ directions.
\end{proof}

\begin{definition}[Extension of a Rectangle]
We say $R$ is a top (resp. bottom, left, right) extension of $R'$ with respect to a polygon $P$ if $R$ can be obtained by scaling $R'$ in the top (viz., bottom, left, right) direction and $R$ is maximal with respect to $P$.
\end{definition}

\begin{lemma}[Forest Support on Segments]
  \label{lem:laminar}
  Given a simple orthogonal polygon $P$, and a set of maximal rectangles $\RR$ such that for every $R \in \RR$, their bottom (resp. top, left, and right) blocker exists on a segment $s \in \dP$. Then there exists a support graph on the hypergraph $(\RR,s)$ that is a forest.
\end{lemma}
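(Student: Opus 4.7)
The plan is to show that the intervals $\{I_R := R \cap s\}_{R \in \RR}$ form a \emph{laminar family} on $s$ and then take the associated laminar forest as the support. By symmetry among the four cases in the statement, assume without loss of generality that $s$ is horizontal, $P$ lies locally above $s$, and each $R \in \RR$ has a bottom blocker on $s$; then $R$ is a maximal rectangle in the histogram above $s$, determined by its base $I_R \subseteq s$ and its height $h_R = \min_{x \in I_R} c(x)$, where $c(\cdot)$ is the ceiling function above $s$. Maximality of $R$ encodes the strict drops $c(x) < h_R$ just outside $I_R$.

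The key step is laminarity: I claim no two distinct intervals $I_R = [a,b]$ and $I_{R'} = [a', b']$ can properly overlap. Suppose for contradiction that $a < a' \le b < b'$. Maximality of $R$ on the right gives $c(b^+) < h_R$, while $b^+ \in (a', b')$ together with the definition of $h_{R'}$ forces $c(b^+) \ge h_{R'}$, so $h_R > h_{R'}$. A symmetric argument using $R'$'s left-maximality at $a'$ (where $c(a'^-) \ge h_R$ since $a'^- \in (a,b)$, yet $c(a'^-) < h_{R'}$) yields $h_{R'} > h_R$, a contradiction. Hence $\{I_R\}_{R \in \RR}$ is laminar.

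Given laminarity, define a forest $F$ on $\RR$ by taking the parent of $R$ to be the rectangle $R'$ whose base $I_{R'}$ is the smallest one in $\RR$ that properly contains $I_R$ (roots have no such $R'$). To verify that $F$ supports $(\RR, s)$, fix $p \in s$ and set $J_p := \{R : p \in I_R\}$; by laminarity the intervals in $J_p$ are totally ordered by inclusion as $I_{R_1} \supsetneq \cdots \supsetneq I_{R_k}$. The parent of each $R_{i+1}$ in $F$, being the smallest interval in $\RR$ that properly contains $I_{R_{i+1}}$, itself contains $p$ and therefore lies in $J_p$; iterating, the ancestor chain from $R_k$ up to $R_1$ stays inside $J_p$, so $F[J_p]$ is connected. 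The main subtlety to pin down is the clean reduction to the histogram picture: if the bottom side of some $R$ extends past an endpoint of $s$, or if $\dP$ contains other horizontal segments at the same height abutting $s$, one must verify that the specific blocker used by $R$ on $s$ together with $R$'s global maximality in $P$ still yields the required strict-drop condition on $c$ at the endpoints of $I_R$; this is mostly book-keeping, and the remaining three symmetry classes (top blocker on a horizontal $s$, or left/right blocker on a vertical $s$) follow by reflection.
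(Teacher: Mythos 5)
Your overall strategy is the same as the paper's (show the relevant intervals form a laminar family, then take the laminar forest), but the place you dismissed as ``mostly book-keeping'' hides a genuine gap. You build the forest on the traces $I_R = R\cap s$, and your connectivity argument needs the intervals stabbed by $p$ to form a chain under \emph{proper} inclusion. When the bottom side of a rectangle extends past an endpoint of $s$, two distinct maximal rectangles can have \emph{identical} traces on $s$, and then your parent rule (smallest properly containing interval) gives them no common ancestor. Concretely, take $P = ([0,10]\times[0,20]) \cup ([10,15]\times[-5,3])$ and let $s$ be the edge $[0,10]\times\{0\}$. Both $R_1=[0,10]\times[0,20]$ and $R_2=[0,15]\times[0,3]$ are maximal in $P$ and have a bottom blocker on $s$ (e.g.\ the point $(5,0)$), yet $I_{R_1}=I_{R_2}=s$. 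With $\RR=\{R_1,R_2\}$ your forest has no edges, so the hyperedge at $(5,0)$ is disconnected and the constructed graph is not a support. The same example shows your histogram reduction is not literally valid: $R_2$ is not contained in the region vertically above $s$, and $\min_{x\in I_{R_2}}c(x)=20$ is not the height of $R_2$. (Your double-inequality argument ruling out proper overlaps does survive if you replace $\min_{x\in I_R}c(x)$ by the actual height of $R$, since both rectangles' bottom sides sit at the height of $s$; the problem is only the equal-trace case, which laminarity does not exclude.)

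The paper sidesteps this by working with the full horizontal projections of the rectangles rather than their traces on $s$: since every $R\in\RR$ has its bottom side pinned at the height of $s$, two distinct maximal rectangles cannot have equal projections (equal projection plus upward maximality forces equality), and a proper overlap of projections would place a left or right blocker of one rectangle in the interior of the other, contradicting that blockers lie on $\dP$. So the projections are pairwise disjoint or strictly nested, and your ancestor-chain argument then goes through verbatim. Alternatively you can keep the traces and explicitly chain rectangles with equal traces into a path below their common parent; either patch closes the gap, but as written the construction fails on the instance above.
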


\begin{proof}
  The horizontal projections of the rectangles in $\RR$ forms a set of intervals. Moreover, we claim that the set of intervals forms a laminar family i.e., if two such distinct intervals $R_X:= [a,b]$ and $R_x':= [c,d]$ intersect then either $R_x \subset R_x'$ or  $R_x' \subset R_x$. Suppose otherwise is true, and without loss of generality, we assume, $a < c < b < d$. Then left blocker of $\RR_x'$ is contained in $\RR_x$, which is a contradiction.

  For every laminar family of intervals one can define a forest, where we have a node for every interval and $R_x'$ is a descendant of $R_x$ if $R_x' \subset R_x$.
\end{proof}

Moreover, observe that the forest support graph of $(\RR,s)$ is a tree, if the intersection graph of the intervals that are obtained by projecting the rectangles in $\RR$ on $s$ is connected.


We borrow some of the definitions and observations made by Gy{\H{o}}ri and Mezei \cite{gyHori2019mobile}.

\begin{definition}[Horizontal R-Tree \cite{gyHori2019mobile}]
  Given a simple orthogonal polygon $P$, we horizontally cut $P$ along all its concave corners. This partitions $P$ into a set of interior disjoint rectangles $R_1,\uplus\dots,\uplus R_m$. We define a graph $T(P) = (V_T,E_T)$ where we have a vertex $v_i$ corresponding to $R_i$, for every $1\le i \le m$, and $(v_i,v_j) \in E_T$ if $\partial R_i \cap \partial R_j \neq \emptyset$.
\end{definition}

\begin{lemma}[\cite{gyHori2019mobile}]
Given a simple orthogonal polygon $P$, $T(P)$ is a tree.
\end{lemma}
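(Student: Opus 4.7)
The plan is to verify both connectedness and acyclicity of $T(P)$; the former is immediate from $P$ being connected, while the latter proceeds by induction on the number of concave corners of $P$. For connectedness, observe that given any two rectangles $R_i, R_j$ in the horizontal decomposition, a continuous path inside $P$ from an interior point of $R_i$ to an interior point of $R_j$ crosses a sequence of shared horizontal boundaries between consecutive rectangles it visits, yielding a walk in $T(P)$.

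For acyclicity, I would induct on the number of concave corners of $P$. In the base case, $P$ has no concave corners, so $P$ itself is a rectangle, the decomposition consists of a single piece, and $T(P)$ is a single isolated vertex, trivially a tree. For the inductive step, pick a concave corner $c$ and extend a horizontal segment $\sigma$ from $c$ into $P$ until it first meets $\partial P$. Since $P$ is simply connected, $\sigma$ disconnects $P$ into two simple orthogonal sub-polygons $P_1$ and $P_2$, each with strictly fewer concave corners than $P$ (the corner $c$ becomes convex in both pieces, and the new corner created where $\sigma$ meets $\partial P$ is also convex in both pieces). By induction, both $T(P_1)$ and $T(P_2)$ are trees. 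The horizontal decomposition of $P$ is exactly the union of those of $P_1$ and $P_2$, and the edges of $T(P)$ split as $E(T(P_1)) \cup E(T(P_2)) \cup E_\sigma$, where $E_\sigma$ is the set of new edges realising adjacencies across $\sigma$.

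The main obstacle is analysing $E_\sigma$. Enumerate in left-to-right order the rectangles $A_1, \ldots, A_p$ of the decomposition of $P_1$ whose bottom edge lies on $\sigma$ and, similarly, $B_1, \ldots, B_q$ of $P_2$ whose top edge lies on $\sigma$. Their endpoints interleave along $\sigma$ to partition it into exactly $p + q - 1$ sub-intervals, each of which lies directly beneath a unique $A_i$ and above a unique $B_j$, contributing precisely one edge $(A_i, B_j)$ to $E_\sigma$. Moreover, two consecutive sub-intervals differ in exactly one of the endpoints $A_i$ or $B_j$, so the bipartite graph formed by $E_\sigma$ on the vertex set $\{A_1, \ldots, A_p\} \cup \{B_1, \ldots, B_q\}$ is a monotone staircase and hence itself a tree on $p + q$ vertices with $p + q - 1$ edges. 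Splicing this staircase tree to the disjoint union $T(P_1) \cup T(P_2)$ via the shared vertex identifications $A_i$ and $B_j$ produces a single connected acyclic graph, completing the inductive step.
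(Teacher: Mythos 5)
Your induction set-up is sound: cutting along the horizontal chord $\sigma$ through a reflex corner does produce two simple orthogonal pieces $P_1,P_2$ with strictly fewer reflex corners, the horizontal decomposition of $P$ is indeed the union of those of $P_1$ and $P_2$ (since all other cuts are horizontal and cannot cross or overlap the interior of $\sigma$), and your connectedness argument is fine. The genuine gap is in the final acyclicity step. If, as your staircase analysis allows, $E_\sigma$ could contain $p+q-1\ge 2$ edges, then splicing it onto $T(P_1)\cup T(P_2)$ can never be acyclic: $T(P_1)$ and $T(P_2)$ are already connected, so any two edges of $E_\sigma$ close a cycle. Equivalently, the spliced graph has $(n_1-1)+(n_2-1)+(p+q-1)$ edges on $n_1+n_2$ vertices, which is a tree only when $p=q=1$. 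The observation that the staircase is itself a tree does not help, because its vertices $A_i$ (resp.\ $B_j$) are already joined to one another inside $T(P_1)$ (resp.\ $T(P_2)$). As written, your argument would certify acyclicity in situations where it manifestly fails, so it does not establish the lemma.

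What rescues the statement, and what you actually need to prove, is that $p=q=1$ for every cut of this decomposition. The interior of $\sigma$ lies in the interior of $P$, so no vertex or vertical edge of $\partial P$ can meet it; since every cut is horizontal, the vertical sides of decomposition rectangles lie on $\partial P$, and any collinear cut issued from another reflex corner must stop at a boundary point before reaching the interior of $\sigma$. Hence nothing can subdivide the strip immediately above (or immediately below) the interior of $\sigma$: exactly one rectangle of the decomposition of $P_1$ has its bottom side containing $\sigma$, and exactly one rectangle of $P_2$ has its top side containing $\sigma$. Therefore $E_\sigma$ is a single edge joining the two trees $T(P_1)$ and $T(P_2)$, and the inductive step goes through. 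Two smaller points: your claim of ``exactly $p+q-1$ sub-intervals'' tacitly assumes that no internal division point of the $A$-tiling coincides with one of the $B$-tiling, which you never justify (moot once $p=q=1$); and since the paper defines adjacency by $\partial R_i\cap\partial R_j\neq\emptyset$, one should also remark that, barring degenerate pinch points, decomposition rectangles can only meet along horizontal segments of positive length, so no additional corner-contact edges arise between the two pieces.
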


We make the following assumption for the ease of exposition without losing generality as done in \cite{gyHori2019mobile}.
We assume that given a simple orthogonal polygon $P$ every maximal rectangle in $P$ has positive horizontal and vertical lengths. If $P$ is such that there exists a rectangle with zero horizontal/vertical length, we can always do transformation preserving the number of maximal rectangles in $P$.

\begin{lemma}
  \label{lem:struct}
  Let $P$ be a simple orthogonal polygon with $B$ be a leaf of $T(P)$. Furthermore, let $R, R' \in \RR^K(P) \setminus \RR_B$ and $G_P$ be a minimal support graph of $(\RR^K(P), \dP)$. If $(R,R') \in E(G_{P})$ then we call $(R,R')$ an auxiliary edge.
  If we keep contracting all the auxiliary edges then eventually the contracted vertex has a unique neighbor $R''$ such that $R'' \in \RR_B$.
\end{lemma}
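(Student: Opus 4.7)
Since $B$ is a leaf of $T(P)$, exactly one horizontal side of $B$ is a cut segment $s$ shared with $B$'s unique neighbor in $T(P)$, while the other three sides of $B$ lie on $\dP$. Without loss of generality, $s$ is $B$'s top, with endpoints $q_1 = (x_1, y_s)$ and $q_2 = (x_2, y_s)$ on $\dP$, where $[x_1, x_2]$ is the horizontal extent of $B$.

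The first step is to localize where ``cross edges'' in $G_P$ can occur, i.e., edges $(R, R') \in E(G_P)$ with $R \in \RR^K(P) \setminus \RR_B$ and $R' \in \RR_B$. Since $R$ lies in $P \setminus \mathrm{int}(B)$ while $R'$ meets $\mathrm{int}(B)$, any boundary point $p \in R \cap R' \cap \dP$ must lie on $\partial B \cap \dP \cap (P \setminus \mathrm{int}(B))$. A case analysis over the three sides of $B$ that lie on $\dP$ shows that the only such points are the two endpoints $q_1, q_2$ of $s$: any other point of $\partial B \cap \dP$ is ``strictly interior'' to $B$ in the sense that a rectangle containing it and extending into $P \setminus \mathrm{int}(B)$ would have to cross $\partial B \cap \dP$ transversally, violating its maximality or membership.

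Next, I would identify a canonical bridge rectangle $R^{\star} \in \RR_B$ lying in both $\RR_{q_1}$ and $\RR_{q_2}$: concretely, $R^{\star}$ is the maximal rectangle of $P$ whose horizontal extent is $[x_1, x_2]$, extending from $B$'s bottom upward until meeting $\dP$; depending on the horizontal extent of the neighbor of $B$ in $T(P)$, $R^{\star}$ either strictly crosses $s$ or coincides with $B$ itself. A minimality argument then forces every cross edge $(R, R'')$ in $G_P$ to satisfy $R'' = R^{\star}$: both $\RR_{q_1}$ and $\RR_{q_2}$ contain $R^{\star}$, so a cross edge $(R, R'')$ with $R'' \neq R^{\star}$, together with the mandatory $G_P$-path supporting $\RR_{q_i}$ through $R^{\star}$, creates a cycle within $G_P[\RR_{q_i}]$ from which a redundant edge can be removed, contradicting minimality. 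Combined with the observation that $G_P[\RR^K(P)\setminus\RR_B]$ is connected (since $\dP \setminus \partial B$ is a connected arc covered only by rectangles in $\RR^K(P)\setminus\RR_B$, and the minimal support restricted to these hyperedges must be connected), contracting all auxiliary edges yields a single contracted vertex whose unique neighbor in $\RR_B$ is $R^{\star}$.

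\textbf{Main obstacle.} The most delicate step is making the ``minimality forces every cross edge through $R^{\star}$'' argument rigorous when $\RR_{q_1}$ or $\RR_{q_2}$ contains several distinct rectangles of $\RR_B$ besides $R^{\star}$. This requires tracking precisely which edges of $G_P$ are necessary to support each hyperedge at and near $q_1, q_2$, and leveraging the paper's preliminary lemmas on piercing and corner intersections of maximal rectangles to rule out alternative bridges. A secondary point that also needs justification is that $R^{\star}$ is genuinely a well-defined element of $\RR_B$ in all configurations of the neighbor of $B$ in $T(P)$.
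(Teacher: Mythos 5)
Your opening reduction is where the argument breaks, and everything after it inherits the damage. From the (correct) fact that every rectangle of $\RR^K(P)\setminus\RR_B$ avoids the interior of $B$ you infer that a boundary point witnessing a cross edge must lie on $\partial B\cap\dP$, hence at an endpoint of the cut $s$. That inference is a non sequitur: rectangles of $\RR_B$ are maximal in $P$, not in $B$, so they extend through $s$ far into $P\setminus B$ and can meet rectangles of $\RR^K(P)\setminus\RR_B$ at points of $\dP$ nowhere near $B$. Concretely, take $P=([0,10]\times[0,10])\cup([-10,20]\times[10,20])\cup([0,5]\times[20,30])\cup([5,8]\times[25,30])$ with leaf $B=[0,10]\times[0,10]$ and cut $s=[0,10]\times\{10\}$. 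Here $\RR^K(P)=\{R_1,R_2,C,X\}$ with $R_1=[0,10]\times[0,20]$ (your canonical $R^{\star}$), $R_2=[0,5]\times[0,30]$, $C=[-10,20]\times[10,20]$, $X=[0,8]\times[25,30]$, and $\RR_B=\{R_1,R_2\}$. The boundary point $(7,20)$ is covered exactly by $\{R_1,C\}$ and the boundary point $(0,27)$ exactly by $\{R_2,X\}$, so \emph{every} support graph, in particular every minimal one, contains the cross edges $(R_1,C)$ and $(R_2,X)$: cross edges are witnessed far from $\partial B$, and they enter two \emph{distinct} members of $\RR_B$, so no minimality argument can funnel them all through one bridge $R^{\star}$. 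The same example kills your last step: $C$ and $X$ share no boundary point, so $G_P[\RR^K(P)\setminus\RR_B]$ is disconnected, and $\dP\setminus\partial B$ is not covered only by rectangles outside $\RR_B$ (the segment $x=5$, $20<y<25$ is covered only by $R_2$). The lemma's conclusion still holds in this example, but only in the per-component reading (each contracted vertex separately has one neighbor in $\RR_B$), which is weaker than the structure you set out to prove.

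This is also a different route from the paper's, which makes no attempt to localize cross edges at the cut or to exhibit a canonical bridge. The paper argues per contracted component, by contradiction: if a connected component of $G_P[\RR^K(P)\setminus\RR_B]$ were attached to two distinct rectangles of $\RR_B$, the resulting biconnectivity would force a cycle in the horizontal R-tree $T(P)$, contradicting that $T(P)$ is a tree. A repair of your plan would have to abandon both the ``witnesses only at $q_1,q_2$'' claim and the uniqueness of the receiving rectangle $R^{\star}$, and instead show directly that one contracted component cannot reach two members of $\RR_B$ -- which is essentially the tree-structure argument the paper uses.
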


\begin{proof}
For the sake of contradiction, assume that there exists a contracted vertex that has multiple neighbors in $\RR_B$.
This means there exists $R \in \RR^K(P') \setminus \RR_B$, such that $R$ is biconnected to some $R'' \in \RR_B$ in $G_P'[\RR^K(P') \setminus \RR_B]$. This in turn implies there is a cycle in $T(P)$, which is a contradiction.
\end{proof}




\CompleteSupport*

 \begin{figure}[h]
        \centering
        \includegraphics[width=0.75\textwidth]{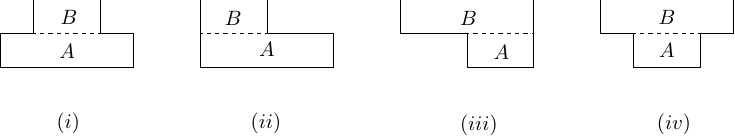}
        \caption{The possible ways rectangles $A$ and $B$ can exist in the polygon $P$. The two figures $(i)$ and $(ii)$ have $B_x \subset A_x$. Whereas, figures $(iii)$ and $(iv)$ have $A_x \subset B_x$.}
        \label{fig:tree-pattern-AB}
    \end{figure}

\begin{proof}
In order to prove the statement inductively, we prove something stronger, i.e., given $(\RR^K(P), \dP)$, there is a plane graph of the support such that the vertices corresponding to the vertically-blocked rectangles of $P$ are in its outer face.

We consider $T(P)$, the horizontal R-Tree of $P$ and prove the statement inductively on the number of vertices of $T(P)$.
  The base case is when there is a single vertex in $T(P)$, i.e., $P$ is a rectangle and the claim of the lemma is trivially true.

  Let $A \subset P$ be rectangle that corresponds to a leaf of $T(P)$.
  Let $P' := P \setminus A$. Also let $B$ be the parent of $A$ in $T(P)$.
  See Figure~\ref{fig:tree-pattern-AB}, for the possible configurations of $A$ and $B$.
  The inductive case is that, if the statement is true for $P'$ then it is also true for $P$.
Let $G_{P'}$ be the planar support graph that we get from the inductive hypothesis.

  As, the vertices of the support graphs are rectangles, we shall interchangeably use rectangles and vertices to denote the same, which would be clear from the context.

  Let $\RR_A$ be the subset of maximal rectangles in $\RR^K(P)$ that have their bottom blockers in $b_A$. Likewise, let $\RR_B$ be the set of maximal rectangles in $\RR^K(P')$ that have their bottom blockers in $b_B$.
  The high-level idea is to attach the vertices of $\RR_A$ to some of the vertices of $\RR_{B}$ and also remove some of the subtrees at $\RR_B$ and attach it to vertices in $\RR_A$. 

  Although a blocker of a maximal rectangle is any point in $\dP$ that intersects the rectangle but in the following we will relax this definition to include sides of $\dP$ that intersects the rectangle.
  
  As every rectangle $R \in\RR_A$ has the bottom side of $A$ as its bottom blocker, therefore from Lemma~\ref{lem:opp-blockers}, $R$ can be uniquely determined by its top blocker $t_R$. Similarly, every rectangle in $\RR_B$ can be uniquely determined by its top blocker.

  We define a function $f : \RR_A \setminus \{A\} \rightarrow \RR_B$, where for every $R \in \RR_A \setminus \{A\}$ there exists $f(R) \in \RR_B$ and $t \in \dP'$ such that $t$ is a top blocker of both $R$ and $f(R)$. We say $f(R)$ is the image of $R$.
  Also, observe that Lemma~\ref{lem:opp-blockers} implies that $f$ is an injective mapping.





  There are two cases: given $R \in \RR_A \setminus \{A\}$ either $f(R)\in \RR^K(P)$ or not.
  We say $R \in \RR_A\setminus \{A\}$ is a Type-1 rectangle if $f(R) \in \RR^K(P)$, otherwise $R$ is a Type-2 rectangle. If $A \in \RR_A$, then $A$ is neither Type-1 nor Type-2 rectangle. Henceforth, we denote the set of Type-1 (resp. Type-2) rectangles as $\RR^{(1)}_A$ (resp. $\RR^{(2)}_A$).

  Observe that Type-1 rectangles either have $\ell_A$ as the unique left blocker or have $r_A$ as the unique right blocker, or both. The rectangle that has both $\ell_A$ as the unique left blocker and $r_A$ as the unique right blocker is $A^\uparrow$, the top extension of $A$ in $P$, and we call it the root of the rectangles in $\TT(\RR_A)$.
  Whereas, Type-2 rectangles have both left and right blockers other than sides $\ell_A$ and $r_A$. Thus, for every $R \in \RR^{(2)}_A$, $R = f(R)^{\downarrow}$, the bottom extension of $f(R)$ in $P$.
  
  For $R \in \RR^K(P)\setminus \RR_A$, $R$ is said to be \emph{red} if either $R \cap R' \cap \dP \neq \emptyset$, for $R' \in \RR_A$, or for every $R''$, such that $R'' \cap R \cap \dP \neq \emptyset$, $R''$ is also red. See Figures~\ref{fig:type-12}, \ref{fig:type-12-fP}, and~\ref{fig:graph-type-12}, where $X$ and $Z$ are red because of the first case, whereas, $Y$ is red because of the second case.
  
  Hereafter the red vertices are ignored from the analysis for the sake of brevity.
  \hide{
    
  From the inductive hypothesis, we know that the $G[\RR_B]$ forms a tree.\why
  We consider the vertices in $\RR_A$.

  Without loss of generality, we assume that the top side $t_A$ of $A$ intersects $\dP'$. The left, right, and bottom sides of $A$ are referred as $\ell_A, r_A, b_A$ in the rest of the proof.
  



  \begin{lemma}
    For every planar support graph $G$ of $(\RR^K(P), b_B \cap t_A)$ there exists a planar support graph $H$ of $(\RR^K(P), b_A)$, such that $G$ is an induced subgraph of $H$.
  \end{lemma}

  \begin{proof}
    Consider the following cases:
     \begin{enumerate}
  \item \textbf{$A_x \subset B_x$.} In this case the horizontal projections of $b_B \cap t_A$ and $b_A$ are the same. Hence, a support graph of $(\RR^K(P), b_B \cap t_A)$ is also a support graph of $(\RR^K(P), b_A)$.
  \item \textbf{$A_x \setminus B_x \neq \emptyset$.} From Lemma~\ref{lem:laminar}, we know that there exists a rooted tree support graph $G$ of $(\RR^K(P), b_B \cap t_A)$. Attaching the vertex $v_A$, corresponding to rectangle $A$, to the root of $G$ is a valid support graph of $(\RR^K(P), b_A)$. Also, planarity is preserved as $v_A$ is a degree-1 vertex.
     \end{enumerate}
     
  \end{proof}

  \begin{lemma}
    If $A_x \subset B_X$ then there exists support graphs $G$ of $(\RR^K(P'), b_B \cap t_A)$ and $H$ of $(\RR^K(P), b_B \cap t_A)$, such that $G$ and $H$ are isomorphic to each other.
    If $A_x \setminus B_x \neq \emptyset$ then there exists 
  \end{lemma}

} 

  \begin{lemma}
    \label{lem:path-supp}
There exists support graphs of $(\RR_A^{(1)}, b_A)$ that are path graphs, where $\RR_A^{(1)}$ is the set of Type-1 vertices in $\RR_A$.
  \end{lemma}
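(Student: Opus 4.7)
The plan is to explicitly construct a linear ordering of $\RR^{(1)}_A$ whose induced path is a support for the hypergraph $(\RR^{(1)}_A, b_A)$. The construction rests on three structural observations.

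First, using the characterization stated just before the lemma, I partition $\RR^{(1)}_A \setminus \{A^\uparrow\}$ into $\LL$, the rectangles whose unique left blocker is $\ell_A$ (and whose right blocker is not $r_A$), and $\RR'$, the rectangles whose unique right blocker is $r_A$ (and whose left blocker is not $\ell_A$). These two sets are disjoint, and $A^\uparrow$ (which has both $\ell_A$ and $r_A$ as blockers) is the remaining element.

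Next, I argue that the projections onto $b_A$ form two nested families of intervals. Because $A$ is a leaf of $T(P)$ attached to $B$ only along $t_A$, the sides $\ell_A$ and $r_A$ both lie entirely in $\dP$; consequently any maximal rectangle with bottom blocker in $b_A$ has its bottom side contained in $b_A$, so each $L \in \LL$ projects to an interval $[x_A^\ell, x_L]$, each $R \in \RR'$ projects to $[x_R, x_A^r]$, and $A^\uparrow$ projects to all of $b_A$. Since the top extent of a rectangle in $\RR_A$ is uniquely determined by its horizontal extent and its bottom side, Lemma~\ref{lem:opp-blockers} implies that distinct members of $\LL$ have distinct values of $x_L$, and likewise for $\RR'$. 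Hence each of $\LL$ and $\RR'$ is totally ordered by containment of projections.

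Then I enumerate $\LL = \{L_1, \dots, L_p\}$ with $x_{L_1} < \cdots < x_{L_p}$ and $\RR' = \{R_1, \dots, R_q\}$ with $x_{R_1} > \cdots > x_{R_q}$, and form the path
\[
L_1 - L_2 - \cdots - L_p - A^\uparrow - R_q - R_{q-1} - \cdots - R_1.
\]
For any $\xi \in b_A$, the rectangles in $\LL$ containing $\xi$ form a suffix of $(L_1, \dots, L_p)$, the rectangles in $\RR'$ containing $\xi$ form an initial segment of the reversed sequence $(R_q, R_{q-1}, \dots, R_1)$, and $A^\uparrow$ always contains $\xi$; thus $\RR^{(1)}_{A,\xi}$ occupies a contiguous block of the path, which yields the desired connectivity.

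The main obstacle I anticipate is the geometric step: confirming that a Type-1 rectangle cannot extend horizontally past the $x$-range of $b_A$, even when it extends upward into $B$ (whose horizontal range may strictly contain that of $A$). The key point is that $\ell_A, r_A \subset \dP$ by the leaf status of $A$, so the vertical line through the left or right side of the rectangle is pinned at $x = x_A^\ell$ or $x = x_A^r$ throughout its height, preventing any widening. Once this is secured, the nested-intervals structure and the path-support verification follow by a direct case analysis on where $\xi$ falls relative to the $x_{L_i}$ and $x_{R_j}$ values.
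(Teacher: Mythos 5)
Your construction is essentially the paper's own proof: the same partition of $\RR^{(1)}_A$ into rectangles left-blocked by $\ell_A$ and right-blocked by $r_A$, the same left-to-right linear order with $A^\uparrow$ in the middle, and the same resulting path. The only difference is that you verify the support property directly via contiguity of the nested projections onto $b_A$ (which is a clean, explicit substitute for the paper's appeal to Lemma~\ref{lem:laminar}), and your worry about a Type-1 rectangle widening beyond $b_A$ is correctly dispelled exactly as you suggest, since $\ell_A, r_A \subset \dP$ for a leaf of the horizontal R-tree.
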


  \begin{proof}
    If $\RR^{(1)}_A \neq \emptyset$ then it can be grouped into two subsets where one subset $L(\RR^{(1)}_A)$ contains the Type-1 rectangles whose unique left blocker is $l_A$ and the other subset $R(\RR^{(1)}_A)$ contains those whose unique right blocker is $r_A$. There exists $A^\uparrow$, the top extension of $A$, in both these subsets. Thus, we can order the rectangles from left to right, such that if $R_i \in L(\RR^{(1)}_A)$ and $R_j \in R(\RR^{(1)}_A)$ then $R_i <_1 R_j$, with the top extension of $A$ in between. For $R_i, R_j \in L(\RR^{(1)}_A)$, $R_i <_1 R_j$ if $r_{R_i} < r_{R_j}$.
    Thus the one whose right side is closest to $\ell_A$ appears first in this ordering $<_1$.
    Conversely, $R_i, R_j \in R(\RR^{(1)}_A)$, $R_i <_1 R_j$ if $l_{R_i} < l_{R_j}$. Thus the one whose left side is closest to $r_A$ appears last in the ordering $<_1$. See Figures~\ref{fig:type-12} and~\ref{fig:graph-type-12}.
    
    We define the path $\PP$ on $\RR^{(1)}_A$ such that $R_i$ and $R_{i+1}$ are adjacent in $\PP$ if they are consecutive with respect to the linear ordering $<_1$. From Lemma~\ref{lem:laminar}, it follows that $\PP$ is a support graph of $(\RR_A^{(1)}, b_A)$.
  \end{proof}

  We run Algorithm~\ref{alg:draw} on $G_{P'}$ and it outputs $G_P$.
  From Lemmata~\ref{lem:comp-support},~\ref{lem:comp-planar}, and~\ref{lem:comp-outer}, the proof follows.
\end{proof}


  
  \hide{
  \begin{lemma}
    If $A_x \subset B_x$ then there exists a tree support graph $G$ of $(\RR_A, b_A)$ and tree support graph $H$ of $(\RR_B, b_B)$ such that $G$ is isomorphic to a subgraph of $H$.
  \end{lemma}

  \begin{lemma}
    If $B_x \subset A_x$ then there exists a tree support graph $G$ of $(\RR_A, b_A)$ and tree support graph $H$ of $(\RR_B, b_B)$ such that $H$ is isomorphic to a subgraph of $G$.
  \end{lemma}

  \begin{proof}
Observe that every vertex in $\RR_A$ except $v_A$, has an image in $\RR_B$. Moreover, if there exists $R_i, R_j \in \RR_A$ such that $R_i <_p R_j$ then  $R_i' <_p R_j'$ where $R_i', R_j' \in \RR_B$ are images of $R_i$ and $R_j$, respectively. Thus, the tree structure defined on $\RR_B$ due to the partial order $<_p$ is isomorphic to the tree structure defined on $\RR_A \setminus \{v_A\}$.
  \end{proof}
  
  \begin{lemma}
    Given $B_x \subset A_x$, if there exists a planar support $G_{P'}$ for $(\RR^K(P'), \dP')$ then there exists a planar support $G_P$ for $(\RR^K(P), \dP)$.
  \end{lemma}
  \begin{proof}
Consider $G_{P'}$ and attach vertex $v_A$ by adding the edge $(v_A, v_B)$. We also relabel the vertices of $\RR_B$, for every $R' \in \RR_B$, we relabel it to $R \in \RR_A$. The resultant graph is also planar as $v_A$ has degree $1$. 
  \end{proof}

} 

  
  \subsubsection*{Algorithm producing $G_{P}$ from $G_{P'}$.}
  
  Observe, $\RR^K(P) = (\RR^K(P') \cup \RR_A)\setminus f(\RR^{(2)}_A)$, where $f(\RR^{(2)}_A) := \{f(R) \in \RR_B \mid R \in \RR^{(2)}_A \}$.
See Figures~\ref{fig:type-12} and~\ref{fig:graph-type-12} for an example.
  \begin{algorithm}
    \caption{Algorithm to draw planar support inductively}
    \label{alg:draw}  
    \begin{algorithmic}[1]


      
    \If{$B_x \subset A_x$}

    \State $E \gets E \cup \{(A,B^{\uparrow})\}$\;

    

    \ForAll{$R_i \in \RR^{(2)}_A$}
    
    \State \textbf{rename} $f(R_i)$ to $R_i$\;
    
    \EndFor

    
    \State \Return $G_P = (\RR^K(P), E)$\;
     
    \EndIf
    
\If{$\RR^{(1)}_A \neq \emptyset$}

    \State $i \gets 1$\;

    \While{$i < |\RR^{(1)}_A|$}
    
    \State $E \gets E \cup \{(R_i, R_{i+1})\}$\; 
    \label{alg:type1-path}

    \State $E \gets E \cup \{(R_i, f(R_i))\}$\;
    \label{alg:type1-image}
    
        \State $i \gets i + 1$\;
    
    \EndWhile

    \State $E \gets E \cup \{(R_i, f(R_i))\}$\;
      


    \ForAll{$R_i \in \RR^{(2)}_A$}
    
    \State \textbf{rename} $f(R_i)$ to $R_i$\;
    
    \EndFor

\State \Return $G_P = (\RR^K(P), E)$\;
     
\EndIf

\If{$\RR^{(2)}_A \neq \emptyset$}

    \ForAll{$(R_i,R_j) \in \TT(\RR_A)$  where $R_i$ is the parent of $R_j$ in $\TT(\RR_A)$, $R_j \in \RR^{(2)}_A$ and $R_i \in \RR^{(1)}_A$}

    \State $E \gets E \setminus \{(f(R_i), f(R_j))\}$\;
    \label{alg:del}
    \State $E \gets E \cup \{(R_i,f(R_j))\}$\;
    \label{alg:add}

    \EndFor

    \ForAll{$R_i \in \RR^{(2)}_A$}
      
        \State \textbf{rename} $f(R_i)$ to $R_i$\;
    \EndFor
      \EndIf
      
      
\State \Return $G_P = (\RR^K(P), E)$\;

\end{algorithmic}
\end{algorithm}

\begin{figure}[h]
  \centering
  \includegraphics[width=0.75\textwidth]{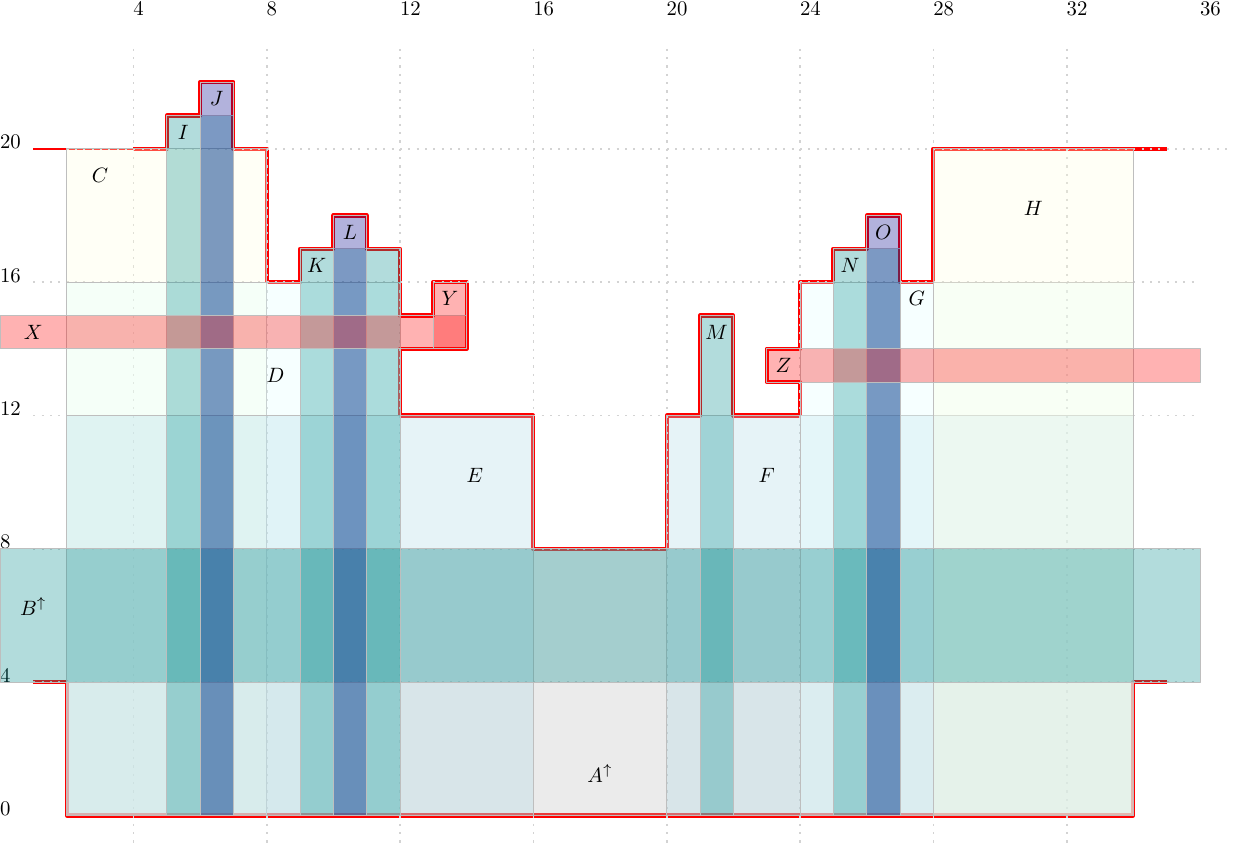}
  \caption{An example polygon illustrating the Type-1 and Type-2 rectangles. 
  }
  \label{fig:type-12}
\end{figure}

\begin{figure}[h]
  \centering
  \includegraphics[width=0.75\textwidth]{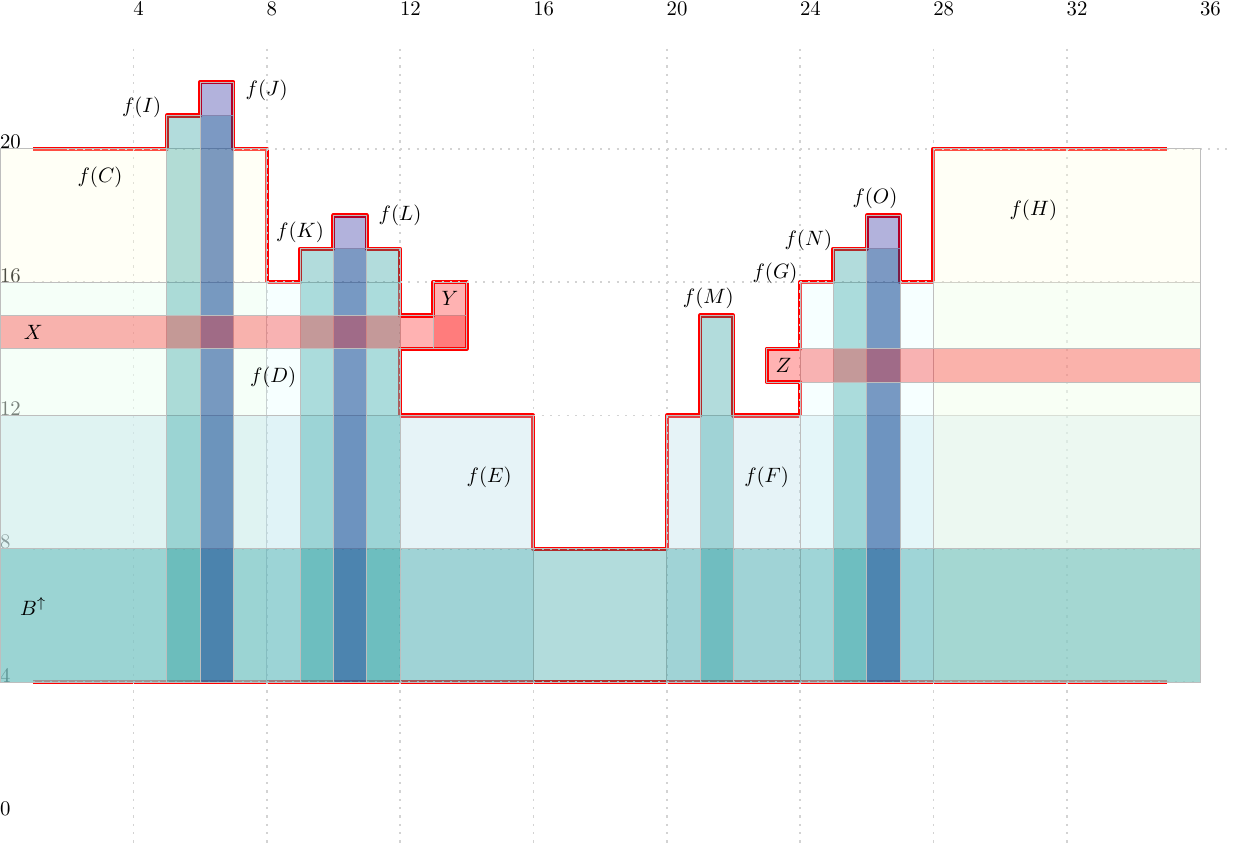}
  \caption{For every Type-1 or Type-2 rectangle $R$ in Figure~\ref{fig:type-12} there is a $f(R)$ in $P'$. 
  }
  \label{fig:type-12-fP}
\end{figure}

\subsection{
An example Illustrating the Algorithm~\ref{alg:draw}.}

   The Type-1 rectangles are:
    $A^\uparrow = [2,34]\times[0,8]$,
    $C = [2,8]\times[0,20]$,
    $D = [2,12]\times[0,16]$,
    $E = [2,16]\times[0,12]$,
    $F = [20,34]\times[0,12]$,
    $G = [24,34]\times[0,16]$,
    $H = [28,34]\times[0,20]$.

    The Type-2 rectangles are:
    $B^\uparrow = [0,36]\times[4,8]$,
    $I = [5,7]\times[0,21]$,
    $J = [6,7]\times[0,22]$,
    $K = [9,12]\times[0,17]$,
    $L = [10,11]\times[0,18]$,
    $M = [21,22]\times[0,15]$,
    $N = [25,27]\times[0,17]$,
    $O = [26,27]\times[0,18]$.

    The images of the Type-1 and Type-2 rectangles are:
   
    $f(C) =  [0,8]\times[4,20]$,
    $f(D) =  [0,12]\times[4,16]$,
    $f(E) =  [0,16]\times[4,12]$,
    $f(F) =  [20,26]\times[4,12]$,
    $f(G) =  [24,36]\times[4,16]$,
    $f(H) =  [28,26]\times[4,20]$,

    $f(I) = [5,7]\times[4,21]$,
    $f(J) = [6,7]\times[4,22]$,
    $f(K) = [9,12]\times[4,17]$,
    $f(L) = [10,11]\times[4,18]$,
    $f(M) = [21,22]\times[4,15]$,
    $f(N) = [25,27]\times[4,17]$,
    $f(O) = [26,27]\times[4,18]$.
    
    The so-called red rectangles are:
    $X =  [0,14]\times[14,15]$,
    $Y =  [13,14]\times[14,16]$,
    $Z =  [23,36]\times[13,14]$.
    
\begin{figure}[h]
  \centering
  \includegraphics[width=0.75\textwidth]{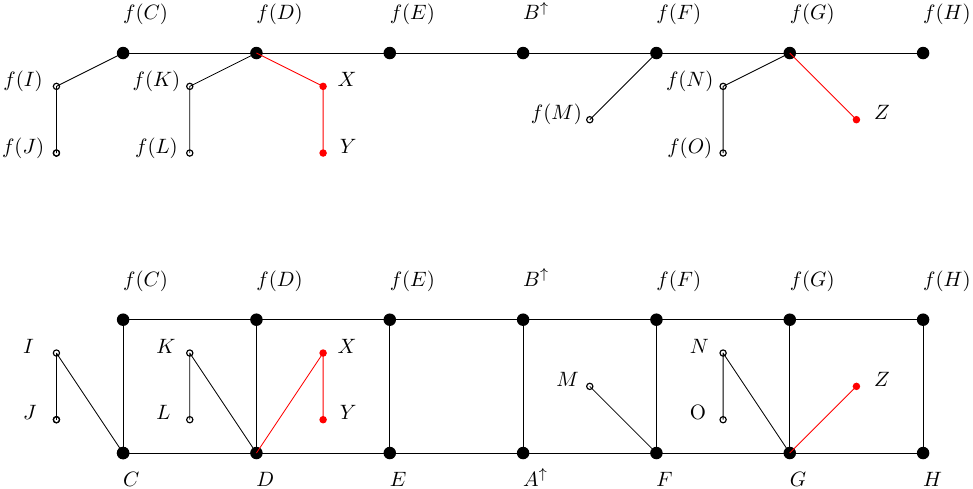}
  \caption{At the top we have the input graph to the algorithm and at the bottom we have the output of the algorithm.
    The vertices corresponding to the Type-1 rectangles and their images are drawn with solid black disks whereas, the vertices corresponding the Type-2 rectangles and their images are drawn with smaller black circles.
    The vertices in red correspond to the red rectangles.}
  \label{fig:graph-type-12}
\end{figure}

\begin{lemma}
  \label{lem:comp-support}
  $G$ is a support graph of $(\RR^K(P), \dP)$. 
\end{lemma}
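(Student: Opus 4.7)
The plan is to verify, for every $p \in \dP$, that the hyperedge $\RR^K(P)_p := \{R \in \RR^K(P) : p \in R\}$ induces a connected subgraph of $G_P$. I would split $\dP$ into the ``upper'' part $\dP \cap \dP'$ (boundary points inherited from the smaller polygon $P'$) and the ``lower'' part $\dP \setminus \dP'$, which consists of points on $\ell_A \cup b_A \cup r_A$ that are exposed once the leaf rectangle $A$ is re-attached. The two parts require different arguments: the first leans on the inductive hypothesis that $G_{P'}$ supports $(\RR^K(P'), \dP')$, while the second relies on Lemma~\ref{lem:path-supp} and the structure of $\TT(\RR_A)$.

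For $p \in \dP \cap \dP'$, the main observation is that the renaming step $f(R_i) \mapsto R_i$ (applied to every Type-2 rectangle) does not change which maximal rectangles contain $p$. Indeed, $R_i = f(R_i)^{\downarrow}$ is the bottom extension of $f(R_i)$ in $P$, so $R_i$ and $f(R_i)$ agree on $\dP'$. Hence the relabeled hyperedge on $G_P$ equals $\RR^K(P')_p$ together with possibly some Type-1 rectangles that also cover $p$, and connectivity of the former is given by induction. I would then check that the Type-1 rectangles at $p$ are attached to their images $f(R_i) \in \RR^K(P')$ via the edges added on line~\ref{alg:type1-image}, so they plug into the already-connected cluster inherited from $G_{P'}$.

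For $p \in \dP \setminus \dP'$, only rectangles in $\RR_A$ can contain $p$. I would invoke Lemma~\ref{lem:path-supp} to express the subgraph on $\RR^{(1)}_A$ as a path supporting $(\RR^{(1)}_A, b_A)$, which immediately handles any $p$ on $b_A$ that is covered by at least two Type-1 rectangles. For a Type-2 rectangle $R_j$ containing $p$, the tree $\TT(\RR_A)$ guarantees a Type-1 ancestor $R_i$ with $p \in R_i$; the edge $(R_i, f(R_j))$ added on line~\ref{alg:add} (and relabeled to $(R_i, R_j)$) then connects $R_j$ to the path. The corner rectangle $A$ itself only participates in hyperedges at the four corners of $A$, whose connectivity is immediate since the corner is shared only with the first/last rectangle of the path.

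The main obstacle I anticipate is the rewiring in the Type-2 branch (lines~\ref{alg:del}--\ref{alg:add}): we delete the edge $(f(R_i), f(R_j))$ that was present in $G_{P'}$ and replace it with $(R_i, f(R_j))$. I need to argue that every upper-boundary point $p$ that was supported in $G_{P'}$ via the deleted edge is still supported after the swap. Geometrically, $R_i \in \RR^{(1)}_A$ is wide enough (being left- or right-aligned with $A$) that it contains the full top blocker of $f(R_j)$, and conversely any point $p \in f(R_i) \cap f(R_j) \cap \dP'$ that matters for supports lies in the vertical slab where $R_i$ itself extends; using Lemma~\ref{lem:vert-blocker} together with the fact that $R_j$ is obtained from $f(R_j)$ by bottom extension, one concludes $p \in R_i$, so the new edge witnesses $p$ just as well. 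A careful case split by the mutual position of the two vertical blockers of $R_i$ and $f(R_j)$ should close this step.
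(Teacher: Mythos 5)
Your overall route is the same as the paper's: split the boundary into the inherited points of $\dP\cap\dP'$ and the newly exposed points on $\ell_A\cup b_A\cup r_A$, use the inductive hypothesis for $G_{P'}$, Lemma~\ref{lem:path-supp} for the path on $\RR^{(1)}_A$, the edges of line~\ref{alg:type1-image} to hook Type-1 rectangles to their images, and the renaming of $f(\RR^{(2)}_A)$ for Type-2 rectangles. In one respect you are more careful than the paper, whose proof never discusses the edges deleted on line~\ref{alg:del}: your worry there is legitimate, and your fix is correct and in fact simpler than you suggest. If $p\in f(R_i)\cap f(R_j)\cap\dP$ with $R_j$ a child of $R_i$ in $\TT(\RR_A)$, then the $x$-coordinate of $p$ lies in the $x$-projection of $f(R_j)$, which equals that of $R_j$ and is contained (by laminarity) in that of $R_i$, while the $y$-coordinate of $p$ lies between the level of $b_B$ and the common top of $f(R_i)$ and $R_i$; hence $p\in R_i$ outright, with no need for Lemma~\ref{lem:vert-blocker} or a case split on vertical blockers.

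There is, however, a genuine gap in your treatment of the new points. You assert that a Type-2 rectangle $R_j$ containing $p\in b_A$ is connected to the Type-1 path by the edge of line~\ref{alg:add} to ``a Type-1 ancestor.'' Line~\ref{alg:add} only creates an edge from $R_j$ to its \emph{parent} in $\TT(\RR_A)$, and only when that parent is Type-1. When the parent is another Type-2 rectangle $R_k$ (nested Type-2 rectangles do occur, e.g.\ $J$ inside $I$, $L$ inside $K$, $O$ inside $N$ in the example of Figure~\ref{fig:type-12}), the algorithm adds no edge incident to $R_j$ at all; connectivity must come from the edge between $f(R_j)$ and $f(R_k)$ inherited from $G_{P'}$ after renaming. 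This is exactly what the paper claims (``edges between two Type-2 vertices are present as their images in $\RR_B$ are renamed''), but it is not a consequence of $G_{P'}$ merely being a support graph, and the connectivity of the old hyperedge at the corresponding point of $b_B$ does not transfer, since in $G_{P'}$ it may route through images of Type-1 rectangles or other members of $\RR_B$ that do not reach $b_A$. So your argument (like the paper's) needs either a strengthened induction hypothesis guaranteeing these Type-2--Type-2 edges, or a different connection argument; as written, the step fails for nested Type-2 rectangles. Two smaller omissions: you never address the $B_x\subset A_x$ branch of Algorithm~\ref{alg:draw}, where the single edge $(A,B^{\uparrow})$ does all the work, and your remark that $A$ ``only participates in hyperedges at the four corners of $A$'' is inaccurate --- when $A_x\subset B_x$ the rectangle $A$ is not maximal and so is not a vertex at all, while in the other branch $A$ lies in the hyperedge of every point of $b_A\cup\ell_A\cup r_A$ and is connected precisely through $(A,B^{\uparrow})$.
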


\begin{proof}
  From the inductive hypothesis, we know that $G_{P'}$ is a support graph of $(\RR^K(P'), \dP')$. There are the following cases:
  \begin{enumerate}
  \item $B_x \subset A_x$.

    $E(G_{P'}) \cup \{(A,B^\uparrow)\}$ ensures that every point in $A\cap \dP$ is supported.
    
  \item $A_x \subset B_x$.

    As $G_{P'}$ is a support graph for $(\RR^K(P'), \dP')$, we only need to ensure that all points in the set
    $\{R \cap \dP \mid R \in \RR_A \}$ are supported by $G_{P}$. Note that $\{R \cap \dP \mid R \in \RR_A \}$ is the set of all blockers of every $R \in \RR_A$. Also, recall that $b_A$ is the set of all bottom blockers of $R \in \RR_A$. Further observe that all the edges in $\TT(\RR_A)$ are included in $G_P$. This is because, edges between two Type-1 vertices are added in Step~\ref{alg:type1-path}, edges between Type-1 and Type-2 vertices are added in Step~\ref{alg:add}, and edges between two Type-2 vertices are present as their images in $\RR_B$ are renamed, in Algorithm~\ref{alg:draw}.
    Hence, $b_A$ is supported by $G_P$.

    Next, we claim that all the top blockers of $R \in \RR_A$ are also supported. If $R$ is a Type-1 vertex then the top blocker is supported because the edge $(R,f(R))$ is added in Step~\ref{alg:type1-image}. If furthermore, $R$ is the top extension of $A$, then one or two reflex vertices of $P$ incident with the corners of $A$. In that case, those reflex vertices are also supported by $(A^{\uparrow}, f(A^{\uparrow}))$.
    If $R$ is a Type-2 vertex then its top blocker is satisfied once $f(R)$ is relabeled as $R$ in $G_{P}$ as the top blocker of $f(R)$ was satisfied in $G_{P'}$.

    Finally, we claim that all the left and right blockers of $R \in \RR_A$ are also supported by $G_P$. If $R$ is a Type-2 vertex then its left and right blockers are satisfied for the same reason as above. If $R$ is a Type-1 vertex and it is also the top extension of $A$, i.e, $R =A^\uparrow$ then its left blocker is $\ell_A$ and its right blocker is $r_A$. Then $\ell_A$ is satisfied by all Type-1 vertices to the left of $A^\uparrow$ and including $A^\uparrow$ in the path $\PP$ of the Type-1 vertices (as described in the proof of Lemma~\ref{lem:path-supp}). Conversely, $r_A$ is satisfied by all the Type-1 vertices to the right of $A^\uparrow$ in $\PP$, including $A^\uparrow$. This path is formed iteratively by Step~\ref{alg:type1-path} of the algorithm. If $R$ is a Type-1 vertex whose left blocker is not $\ell_A$, then such a left blocker is satisfied by the edge $(R,f(R))$, which is added at Step~\ref{alg:type1-image} of the algorithm.

    
  \end{enumerate}
  
\end{proof}



\begin{lemma}
  $G_P$ is planar.
  \label{lem:comp-planar}
\end{lemma}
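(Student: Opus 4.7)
The plan is to prove planarity of $G_P$ by carefully analyzing how Algorithm~\ref{alg:draw} modifies the planar drawing of $G_{P'}$ that the inductive hypothesis provides. Crucially, I would invoke the strengthened inductive statement declared at the start of the proof of Lemma~\ref{thm:complete-rect}: every vertex corresponding to a vertically-blocked rectangle of $P'$ lies on the outer face of $G_{P'}$. Since every rectangle $R \in \RR_B$ has bottom blocker $b_B \subseteq \dP$ inherited from $\dP'$ and top blocker in $\dP' \cap t_A$ (or higher), $R$ is vertically-blocked in $P'$, so every image $f(R) \in \RR_B$ of a Type-1 or Type-2 rectangle appears on the outer face of $G_{P'}$. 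This outer-face access is the geometric ``handle'' that lets all subsequent modifications be carried out without crossings.

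First I would dispose of the easy case $B_x \subset A_x$, where the algorithm only adds the edge $(A, B^\uparrow)$ along with notational renamings. Since $B^\uparrow \in \RR_B$ lies on the outer face of $G_{P'}$, the new vertex $A$ can be placed in the outer face and the single new edge drawn without crossing any existing edge; renamings $f(R_i)\mapsto R_i$ do not affect the underlying drawing. Next, for $A_x \subset B_x$ with $\RR_A^{(1)}\neq \emptyset$, I would argue that the images $f(R_1), \ldots, f(R_n)$ of the Type-1 rectangles appear along the outer face of $G_{P'}$ in the very same order $<_1$ in which their top blockers appear along $t_A$. This follows from the laminar structure on rectangles in $\RR_B$ sharing their bottom blocker $b_B$ (Lemma~\ref{lem:laminar}) together with the fact that those images are on the outer face. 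Using this, I would place each new vertex $R_i$ in the outer face of $G_{P'}$ immediately outside $f(R_i)$ and draw the path edges $(R_i, R_{i+1})$ and the image edges $(R_i, f(R_i))$ in a non-crossing ``ladder'' pattern hugging the outer boundary.

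The main obstacle will be handling Type-2 rectangles, where the algorithm rewires parts of $G_{P'}$ rather than merely extending it. For each Type-2 rectangle $R_j$ whose parent in $\TT(\RR_A)$ is a Type-1 rectangle $R_i$, the edge $(f(R_i),f(R_j))$ is deleted and replaced by $(R_i,f(R_j))$. Because $R_i$ has been placed on the outer face adjacent to $f(R_i)$, the new edge can be drawn essentially along the trace of the deleted one. The delicate point is that a single vertex $f(R_i)$ may be the parent in $\TT(\RR_A)$ of several Type-2 rectangles $R_{j_1}, \ldots, R_{j_k}$: to show that the simultaneous rerouting preserves planarity, I need to verify that the cyclic order of the edges $(f(R_i), f(R_{j_t}))$ around $f(R_i)$ in the planar drawing of $G_{P'}$ is consistent with the ordering of $R_{j_t}$'s children structure in $\TT(\RR_A)$. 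I expect this to follow from another application of the laminarity of blockers on $\dP' \cap t_A$ and the outer-face invariant, which together force the local rotation at $f(R_i)$ to refine the tree order. Once this consistency is established, planarity of $G_P$ follows, and the outer-face invariant passes to the inductive step: vertices in $\RR_A$ (the vertically-blocked rectangles of $P$ introduced at this stage) lie on the outer face of $G_P$ by construction, while any vertically-blocked rectangle of $P'$ that remains vertically-blocked in $P$ keeps its outer-face status since every edge added by the algorithm is incident to the outer face.
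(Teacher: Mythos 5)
Your handling of the first two cases matches the paper's argument: for $B_x \subset A_x$ only the single edge $(A,B^{\uparrow})$ is added, and for the Type-1 rectangles the path edges plus the image edges $(R_i,f(R_i))$ are drawn without crossings using the inductive outer-face invariant together with the order preservation $R_i <_1 R_j \Rightarrow f(R_i) <_1 f(R_j)$. The genuine gap is exactly where you flag it yourself: the Type-2 rewiring. You assert that the cyclic order of the edges $(f(R_i),f(R_{j_t}))$ around $f(R_i)$ in the given drawing of $G_{P'}$ is consistent with the order coming from $\TT(\RR_A)$, and you say you ``expect'' this to follow from laminarity; but the inductive hypothesis only provides a planar drawing with the vertically-blocked vertices on the outer face --- it says nothing about rotations at individual vertices. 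The vertex $f(R_i)$ may carry arbitrary further edges of $G_{P'}$ (supporting other boundary points of $P'$) interleaved among the edges to the $f(R_{j_t})$'s, and each $f(R_{j_t})$ may have whole subgraphs hanging on it that must move along with the rerouted edge. So the claim that each new edge $(R_i,f(R_j))$ ``can be drawn along the trace of the deleted one,'' for all reroutings simultaneously, is not justified as stated.

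The paper closes precisely this step by invoking Lemma~\ref{lem:struct} rather than a rotation argument: after contracting all auxiliary edges (edges between rectangles outside $\RR_B$), each contracted component has a unique neighbour in $\RR_B$, i.e., every piece of the support graph hanging off the $\RR_B$-layer is pendant, attached at a single vertex of $\RR_B$. This is what guarantees that deleting $(f(R_i),f(R_j))$ and reattaching $f(R_j)$ --- together with everything hanging on it --- via the new edge $(R_i,f(R_j))$ cannot be obstructed by other attachments, so planarity survives the rewiring in Lines~\ref{alg:del}--\ref{alg:add}. To repair your proof you would either have to actually prove your rotation-consistency claim (which does not follow from Lemma~\ref{lem:laminar} alone) or establish and use this structural pendant-attachment fact, as the paper does.
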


\begin{proof}


  If $B_x \subset A_x$ then $G_P$ is planar because a single edge $(A, B^{\uparrow})$ is added to $G_{P'}$.

  Let $A_x \subset B_x$.
  If $\RR^{(1)}_A \neq \emptyset$ then edges $(R_i, R_{i+1})$, for $i < |\RR^{(1)}_A|$ can be drawn without crossing as they form a path.
  Also note, the edges $(R_i, f(R_i))$, for $R_i \in \RR^{(1)}_A$ can also be drawn without crossing, because $f(R_i)$ are vertically-blocked blocked rectangles of $P'$ and by inductive hypothesis the vertices corresponding to them are drawn on the outer face of $G_{P'}$, and if $R_i <_1 R_j$ then $f(R_i) <_1 f(R_j)$. See Figures~\ref{fig:type-12} and~\ref{fig:graph-type-12}.

  If $\RR^{(2)}_A \neq \emptyset$ and Lines~\ref{alg:del} and \ref{alg:add} of Algorithm \ref{alg:draw} are executed, then from Lemma~\ref{lem:struct} planarity is ensured.
\end{proof}

\begin{lemma}
  \label{lem:comp-outer}
  Every vertically blocked rectangle in $\RR^K(P)$ lies on the outer face of the plane graph returned by Algorithm~\ref{alg:draw}.
\end{lemma}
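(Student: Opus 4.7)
The plan is to prove this as a strengthened inductive hypothesis alongside Lemma~\ref{thm:complete-rect}: Algorithm~\ref{alg:draw} produces a plane embedding of the support graph in which every vertex whose rectangle is vertically blocked lies on the outer face. The base case, when $T(P)$ is a single vertex and $\RR^K(P) = \{P\}$, is trivial. For the inductive step I would fix a leaf $A$ of $T(P)$, set $P' = P \setminus A$, and assume the claim for $G_{P'}$.

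First I would classify how the set of vertically-blocked rectangles changes when passing from $P'$ to $P$. Rectangles in $\RR_A$ all have $b_A \subset \dP$ as a bottom blocker and, by maximality, a top blocker on $\dP$, so every vertex of $\RR_A$ that survives in $G_P$ must end up on the outer face. Second, images $f(R) \in \RR_B$ for $R \in \RR^{(2)}_A$ are renamed to $R$ in $G_P$; these renamed vertices correspond to the bottom-extensions in $\RR_A$ and are vertically blocked in $P$ by the same top blocker that $f(R)$ used in $P'$, while keeping their planar position. Finally, rectangles in $\RR^K(P') \cap \RR^K(P)$ keep the same geometric footprint; some of them may cease to be vertically blocked in $P$ (if their bottom blocker lay on a segment of $t_A$ now interior to $P$), but the outer-face condition only requires vertical blocking to force outer placement, not the converse, so losing the property is harmless.

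Next I would do a case analysis matching the branches of Algorithm~\ref{alg:draw}. In the case $B_x \subset A_x$, only the edge $(A, B^\uparrow)$ is added; $B^\uparrow$ is vertically blocked in $P'$ and hence on the outer face of $G_{P'}$ by induction, so the pendant $A$ can be drawn on the outer face, and no other vertex's outer-face status is disturbed. In the case $A_x \subset B_x$ with $\RR^{(1)}_A \neq \emptyset$, the path $\PP$ on Type-1 vertices drawn in Step~\ref{alg:type1-path} is placed along the outer face; the edges $(R_i, f(R_i))$ of Step~\ref{alg:type1-image} connect each $R_i$ to the corresponding image $f(R_i)$, which is vertically blocked in $P'$ and thus on the outer face of $G_{P'}$ by induction. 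Since the left-to-right order $<_1$ on Type-1 vertices agrees with the order of their images along the outer face of $G_{P'}$, these pendants can be attached without crossings and all the new $R_i$'s sit on the outer face of $G_P$. The subtree-reconnection in Lines~\ref{alg:del}--\ref{alg:add} moves each subtree rooted at $f(R_j)$ (for $R_j \in \RR^{(2)}_A$) from $f(R_i)$ to $R_i$; by Lemma~\ref{lem:struct} this is a clean relocation, and by the renaming $f(R_j) \mapsto R_j$ the vertex representing a vertically-blocked rectangle of $P$ inherits an outer-face position from $G_{P'}$.

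The main obstacle I anticipate is coordinating the outer-face of $G_{P'}$ with the modifications so that (i) the relative cyclic order of $f(R_1), f(R_2), \dots$ around the outer face of $G_{P'}$ actually coincides with $<_1$, allowing the edges $(R_i, f(R_i))$ to be added on the outer side without crossings, and (ii) when subtrees are relocated from $f(R_i)$ to $R_i$, the planar embedding can be chosen so that the renamed Type-2 vertices and the other vertically-blocked vertices of $P$ end up on the outer face simultaneously. This requires a slightly finer inductive statement: not only that vertically-blocked vertices of $P'$ lie on the outer face of $G_{P'}$, but that those sharing a common bottom segment $b_B$ appear consecutively on the outer face, in the order dictated by the top-blocker positions. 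Once this refinement is carried by induction, each algorithmic case preserves it, and the stated lemma follows for $P$.
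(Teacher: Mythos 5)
Your overall skeleton is the same as the paper's: strengthen the induction carried by Lemma~\ref{thm:complete-rect} so that vertically blocked rectangles lie on the outer face, then check each branch of Algorithm~\ref{alg:draw}, using that the images $f(R_i)$ are vertically blocked in $P'$ and that the order $<_1$ is preserved under $f$. The genuine gap sits exactly at the point you flag as obstacle (ii). After you attach the ``comb'' (the Type-1 path of Step~\ref{alg:type1-path} together with the teeth $(R_i,f(R_i))$ of Step~\ref{alg:type1-image}) along the outer face of $G_{P'}$, every image $f(R_j)$ lying strictly between the first and last attachment points is, in this drawing, incident only to inner faces of $G_P$. That is harmless when $f(R_j)$ ceases to be vertically blocked in $P$, but a middle image can perfectly well remain vertically blocked in $P$ (its surviving vertical blocker uses a bottom blocker on $b_B\setminus t_A$ and a top blocker different from the one it shares with $R_j$), and then the lemma requires it to be on the outer face. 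Your proposed fix --- a finer hypothesis that the vertically blocked vertices with bottom blockers on $b_B$ appear consecutively on the outer face in top-blocker order --- does not resolve this: consecutiveness is precisely what lets the comb fence those middle images in, you do not show how the embedding is to be repaired, and you do not prove the refined hypothesis is maintained by the algorithm.

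The paper closes this case with an argument your proposal lacks: if an image $R=f(R_j)$ is still vertically blocked in $P$, then it must have two distinct vertical blockers already in $P'$, so by Lemma~\ref{lem:vert-blocker} it separates, in any minimal support, the rectangles contained in the slab between the two blockers from all remaining rectangles. Being a cut vertex, the pieces hanging off $R$ can be re-embedded so that $R$ stays on the outer face even after the new layer of Type-1 vertices is added. This cut-vertex step (together with the easy observations that the Type-1, Type-2 and red vertices of $P$ can be placed on the outer face, which you do cover) is the missing ingredient; without it, or without an actual proof that your consecutiveness refinement is both inductively preserved and strong enough to permit such a re-embedding, the inductive step does not go through.
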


\begin{proof}
  The Type-1, Type-2 and the red vertices are vertically blocked and have been (or can be) placed on the outer face of the drawing of $G_P$. See Figure~\ref{fig:graph-type-12}, although the Type-2 and the red vertices are not drawn on the outer face but it can be easily drawn.
  The images of Type-1 vertices which were vertically blocked in $P'$ may cease to be so in $P$. Suppose there exist a rectangle $R$, an image of a Type-1 vertex, that is vertically blocked in $P$ then we claim that $R$ is on the outer face of $G_P$. If $R$ is vertically blocked in $P$ then it has two distinct vertical blockers in $P'$. From Lemma~\ref{lem:vert-blocker}, it is a cut vertex. Hence, even after adding a layer of Type-1 vertices to $G_{P'}$, $R$ remains on the outer face of $G_P$.
\end{proof}

\end{document}